\pgfplotsset{compat=newest}
\newcommand\footnoteref[1]{\protected@xdef\@thefnmark{\ref{#1}}\@footnotemark}
\newcommand{\mylabel}[2]{#2\def\@currentlabel{#2}\label{#1}}
\newenvironment{mymatrix}{\begin{bmatrix}} {\end{bmatrix} }
\def\ve#1{{\mathchoice{\mbox{\boldmath$\displaystyle #1$}}%
              {\mbox{\boldmath$\textstyle #1$}}%
              {\mbox{\boldmath$\scriptstyle #1$}}%
              {\mbox{\boldmath$\scriptscriptstyle #1$}}}}
\definecolor{lightcolor}{rgb}{0.8,0.8,0.8}
\newcommand{\Fq}{\ensuremath{\mathbb{F}_q}}
\DeclareMathOperator{\extsmallfield}{ext}
\DeclareMathOperator{\rank}{rk}
\DeclareMathOperator{\GL}{GL}
\DeclareMathOperator{\GR}{GR}
\DeclareMathOperator{\Ann}{Ann}
\DeclareMathOperator{\ftemp}{f}
\newcommand{\Code}{\mathcal{C}}
\renewcommand{\ker}{\mathcal{K} }
\newcommand{\0}{\ve{0}}
\renewcommand{\S}{\ve{S}}
\renewcommand{\H}{\ve{H}}
\newcommand{\y}{\ve{y}}
\renewcommand{\a}{\ve{a}}
\renewcommand{\b}{\ve{b}}
\newcommand{\e}{\ve{e}}
\renewcommand{\v}{\ve{v}}
\newcommand{\g}{\ve{g}}
\renewcommand{\c}{\ve{c}}
\renewcommand{\e}{\ve{e}}
\newcommand{\G}{\ve{G}}
\newcommand{\K}{\ve{K}}
\newcommand{\E}{\ve{E}}
\newcommand{\C}{\ve{C}}
\newcommand{\Y}{\ve{Y}}
\newcommand{\XX}{\ve{X}}
\newcommand{\U}{\ve{U}}
\newcommand{\LL}{\ve{L}}
\newcommand{\new}[1]{{\color{blue} #1}}
\newcommand{\A}{\ve{A}}
\newcommand{\B}{\ve{B}}
\newcommand{\I}{\ve{I}}
\newcommand{\D}{\ve{D}}
\newcommand{\Z}{\ve{Z}}
\newcommand{\F}{\mathbb{F}}
\newcommand{\NM}{\mathrm{NM}}
\newcommand{\ZZ}{\mathbb{Z}}
\newcommand{\frk}{\mathrm{frk}}
\newcommand{\rk}{\mathrm{rk}}
\newcommand{\T}{\ve{T}}
\newcommand{\Rq}{{R}}
\newcommand{\Rqm}{{S}}
\newcommand{\Fspace}{\mathcal{F}}
\renewcommand{\r}{\ve{r}}
\newcommand{\s}{\ve{s}}
\newcommand{\x}{\ve{x}}
\newcommand{\Sspace}{\mathcal{S}}
\newcommand{\Espace}{\mathcal{E}}
\newcommand{\Aspace}{\mathcal{A}}
\newcommand{\Bspace}{\mathcal{B}}
\newcommand{\Mspace}{\mathcal{M}}
\newcommand{\Nspace}{\mathcal{N}}
\DeclareMathOperator{\rowspace}{rowspace}
\newcommand{\supp}{\mathrm{supp}_\mathrm{R}}
\newcommand{\maxIdeal}{\mathfrak{m}}
\newcommand{\MaxIdeal}{\mathfrak{M}}
\newcommand{\X}{\mathcal{X}}
\newcommand{\FreeModule}[1]{\mathcal{F}(#1)}
\newcommand{\FreeModuleBasis}[1]{F(#1)}
\newcommand{\mingenset}{$\maxIdeal$-shaped basis\xspace}
\newcommand{\mingensets}{$\maxIdeal$-shaped bases\xspace}
\newcommand{\Mingensets}{$\maxIdeal$-Shaped Bases\xspace}
\newcommand{\softO}{\tilde{O}}
\newcommand{\Mmodule}{\mathcal{M}}
\newcommand{\Fcontainsoneproperty}{base-ring property\xspace}
\newcommand\qbin[3]{\left[\begin{matrix} #1 \\ #2 \end{matrix} \right]_{#3}}
\newcommand{\removelatexerror}{\let\@latex@error\@gobble}
\begin{document}

\title{Low-Rank Parity-Check Codes over Galois Rings}
\titlerunning{Low-Rank Parity-Check Codes over Galois Rings}

\author{Julian Renner \and Alessandro Neri \and Sven Puchinger }

\institute{
J.~Renner  \at Institute for Communications Engineering, Technical University of Munich (TUM), Germany. \email{julian.renner@tum.de}
  \and
 A.~Neri --- \underline{corresponding author} \at Institute for Communications Engineering, Technical University of Munich (TUM), Germany. \email{alessandro.neri@tum.de}
  \and
S.~Puchinger \at Department of Applied Mathematics and Computer Science, Technical University of Denmark (DTU), Denmark. \email{svepu@dtu.dk}
}

\date{Received: date / Accepted: date}

\maketitle

\begin{abstract}
Low-rank parity-check (LRPC) codes are rank-metric codes over finite fields, which have been proposed by Gaborit \emph{et al.}~(2013) for cryptographic applications.
Inspired by a recent adaption of Gabidulin codes to certain finite rings by Kamche \emph{et al.}~(2019), we define and study LRPC codes over Galois rings---a wide class of finite commutative rings.
We give a decoding algorithm similar to Gaborit \emph{et al.}'s decoder, based on simple linear-algebraic operations.
We derive an upper bound on the failure probability of the decoder, which is significantly more involved than in the case of finite fields.
The bound depends only on the rank of an error, i.e., is independent of its free rank.
Further, we analyze the complexity of the decoder.
We obtain that there is a class of LRPC codes over a Galois ring that can decode roughly the same number of errors as a Gabidulin code with the same code parameters, but faster than the currently best decoder for Gabidulin codes. However, the price that one needs to pay is a small failure probability, which we can bound from above.

\keywords{Galois Rings \and Low-Rank Parity-Check Codes \and Rank-Metric Codes \and Algebraic Coding Theory}
\subclass{11T71}
\end{abstract}

\section{Introduction}

Rank-metric codes are sets of matrices whose distance is measured by the rank of their difference.
Over finite fields, the codes have found various applications in network coding, cryptography, space-time coding, distributed data storage, and digital watermarking.
The first rank-metric codes were introduced in \cite{de78,ga85a, ro91} and are today called Gabidulin codes.
Motivated by cryptographic applications, Gaborit et al.~introduced \emph{low-rank parity-check (LRPC)} in \cite{gaborit2013low,aragon2019low}.
They can be seen as the rank-metric analogs of low-density parity-check codes in the Hamming metric.
LRPC codes have since had a stellar career, as they are already the core component of a second-round submission to the currently running NIST standardization process for post-quantum secure public-key cryptosystems \cite{melchor2020rollo}.
They are suitable in this scenario due to their weak algebraic structure, which prevents efficient structural attacks.
Despite this weak structure, the codes have an efficient decoding algorithm, which in some cases can decode up to the same decoding radius as a Gabidulin code with the same parameters, or even beyond \cite{aragon2019low}.
A drawback is that for random errors of a given rank weight, decoding fails with a small probability.
However, this failure probability can be upper-bounded \cite{gaborit2013low,aragon2019low} and decreases exponentially in the difference between maximal decoding radius and error rank.
The codes have also found applications in powerline communications \cite{yazbek2017LRPCPowerLine} and network coding \cite{8377229}.

Codes over finite rings, in particular the ring of integers modulo $m$, have been studied since the 1970s \cite{blake1972codes,blake1975codes,spiegel1978codes}.
They have, for instance, be used to unify the description of good non-linear binary codes in the Hamming metric, using a connection via  the Gray mapping from linear codes over $\ZZ_4$ with high minimum Lee distance \cite{hammons1994z}.
This Gray mapping was generalized to arbitrary moduli $m$ of $\ZZ_m$ in \cite{constantinescu1997metric}.
Recently, there has been an increased interest in rank-metric codes over finite rings due to the following applications.
\emph{Network coding} over certain finite rings was intensively studied in \cite{feng2014communication,gorla2017algebraic}, motivated
by works on nested-lattice-based network
coding \cite{wilson2010joint,nazer2011compute,feng2013algebraic,tunali2015lattices}
which show that network coding over finite rings may result in more efficient physical-layer network coding schemes.
Kamche et al.~\cite{kamche2019rank} showed how lifted rank-metric codes over finite rings can be used for error correction in network coding. %
The result uses a similar approach as \cite{silva2008rank} to transformation the channel output into a rank-metric error-erasure decoding problem.
Another application of rank-metric codes over finite rings are \emph{space-time codes}.
It was first shown in \cite{kiran2005optimal} how to construct space-time codes with optimal rate-diversity tradeoff via a rank-preserving mapping from rank-metric codes over Galois rings.
This result was generalized to arbitrary finite principal ideal rings in \cite{kamche2019rank}.
The use of finite rings instead of finite fields has advantages since the rank-preserving mapping can be chosen more flexibly.
Kamche et al.~also defined and extensively studied Gabidulin codes over finite principal ideal rings.
In particular, they proposed a Welch--Berlekamp-like decoder for Gabidulin codes and a Gr\"obner-basis-based decoder for interleaved Gabidulin codes \cite{kamche2019rank}. 

Motivated by these recent developments on rank-metric codes over rings, in this paper we define and analyze LRPC codes over Galois rings.
Essentially, we show that Gaborit et al.'s construction and decoder work as well over these rings, with only a few minor technical modifications.
The core difficulty of proving this result is the significantly more involved failure probability analysis, which stems from the weaker algebraic structure of rings compared to fields:
the algorithm and proof are based on dealing with modules over Galois rings instead of vector spaces over finite fields, which behave fundamentally different since Galois rings are usually not integral domains.
We also provide a thorough complexity analysis.
The results can be summarized as follows.

\subsection*{Main Results}
Let $p$ be a prime and $r,s$ be positive integers.
A \emph{Galois ring} $\Rq$ of cardinality $p^{rs}$ is a finite Galois extension of degree $s$ of the ring $\ZZ_{p^r}$ of integers modulo the prime power $p^r$.
As modules over $\Rq$ are not always free (i.e., have a basis), matrices over $\Rq$ have a \emph{rank} and a \emph{free rank}, which is always smaller or equal to the rank.
We will introduce these and other notions formally in Section~\ref{sec:preliminaries}.

In Section~\ref{sec:LRPCcodes}, we construct a family of rank-metric codes and a corresponding family of decoders with the following properties:
Let $m,n,k,\lambda$ be positive integers such that $\lambda$ is greater than the smallest divisor of $m$ and $k$ fulfills $k \leq \tfrac{\lambda-1}{\lambda} n$.
The constructed codes are subsets $\Code \subseteq \Rq^{m \times n}$ of cardinality $|\Code| = |\Rq|^{mk}$. Seen as a set of vectors over an extension ring of $\Rq$, the code is linear w.r.t. this extension ring. We exploit this linearity in the decoding algorithm.

Furthermore, let $t$ be a positive integer with $t < \min\!\left\{\tfrac{m}{\lambda(\lambda+1)/2}, \tfrac{n-k+1}{\lambda}\right\}$. Let $\C \in \Code$ be a (fixed) codeword and let $\E \in \Rq^{m \times n}$ be chosen uniformly at random from all matrices of rank $t$ (and arbitrary free rank). Then, we show in Section~\ref{sec:failure} that the proposed decoder in Section~\ref{sec:decoding} recovers the codeword $\C$ with probability at least
\begin{align*}
1-4 p^{s[\lambda t-(n-k+1)]} - 4 t p^{s\left(t \frac{\lambda(\lambda+1)}{2}-m\right)}.
\end{align*}
Hence, depending on the relation of $p^s$ and $t$, the success probability is positive for
\begin{align*}
t \lessapprox t_\mathrm{max} := \left\lceil\min\!\left\{\tfrac{m}{\lambda(\lambda+1)/2}, \tfrac{n-k+1}{\lambda}\right\}\right\rceil-1.
\end{align*}
and converges exponentially fast to $1$ in the difference $t_\mathrm{max}-t$. Note that for $\lambda=2$ and $m>\tfrac{3}{2}(n-k+1)$, we have $t_\mathrm{max} = \lfloor\tfrac{n-k}{2}\rfloor$.

The decoder has complexity $\softO(\lambda^2 n^2 m)$ operations in $\Rq$ (see Section~\ref{sec:complexity}).
In Section~\ref{sec:simulations}, we present simulation results.

\begin{example}
Consider the case $p=2$, $s=4$, $r=2$, $m=n=101$, $k=40$, and $\lambda=2$.
Then, the decoder in Section~\ref{sec:decoding} can correct up to $t_\mathrm{max} = \lfloor\tfrac{n-k}{2}\rfloor = 30$ errors with success probability at least $1-2^{-6}$. For $t=24$ errors, the success probability is already $\approx 1-2^{-46}$ and for $t=18$, it is $\approx 1-2^{-102}$.
A Gabidulin code as in \cite{kamche2019rank}, over the same ring and the same parameters, can correct any error of rank up to $30$ (i.e., the same maximal radius). However, the currently fastest decoder for Gabidulin codes over rings \cite{kamche2019rank} has a larger complexity than the LRPC decoder in Section~\ref{sec:decoding}.
\end{example}

The results of this paper were partly presented at the IEEE International Symposium on Information Theory 2020 \cite{renner2020lrpc}.
Compared to this conference version, we generalize the results in two ways:
first, we consider LRPC codes over the more general class of Galois rings instead of the integers modulo a prime power.
This is a natural generalization since Galois rings share with finite fields many of the properties needed for dealing with the rank metric. Indeed, they constitute the common point of view  between finite fields and rings of integers modulo a prime power. 
Second, the conference version only derives a bound on the failure probability for errors whose free rank equals their rank.
For some applications, this is no restriction since the error can be designed, but for most communications channels, we cannot influence the error and need to correct also errors of arbitrary rank profile.
Hence, we provide a complete analysis of the failure probability for all types of errors.

\section{Preliminaries}\label{sec:preliminaries}

\subsection{Notation}

Let $A$ be any commutative ring.
We denote modules over $A$ by calligraphic letters, vectors as bold small letters, and matrices as bold capital letters.
We denote the set of $m\times n$ matrices over the ring $A$ by $A^{m\times n}$ and the set of row vectors of length $n$ over $A$ by $A^{n} = A^{1\times n}$. Rows and columns of $m\times n$ matrices are indexed by $1,\hdots,m$ and $1,\hdots,n$, where $X_{i,j}$ denotes the entry in the $i$-th row and $j$-th column of the matrix $\XX$. Moreover, for an element $a$ in a ring $A$, we denote by $\Ann(a)$ the ideal $\Ann(a)=\{b \in A \mid ab=0\}$.

\subsection{Galois Rings}

A Galois ring $\Rq:=\GR(p^r,s)$ is a finite local commutative ring of characteristic $p^r$ and cardinality $p^{rs}$, which is isomorphic to $\ZZ[z]/(p^r,f(z))$, where $f(z)$ is a polynomial of degree $s$ that is irreducible modulo $p$. Let $\maxIdeal$ be the unique maximal ideal of $\Rq$.  It is also well-known that $\Rq$ is a finite chain ring and all its ideals  are powers of $\maxIdeal$ such that $r$ is smallest positive integer $r$ for which $\maxIdeal^r = \{0\}$.
Since Galois rings are principal ideal rings, $\maxIdeal$ is generated by one ring element.
We will call such a generator $g_\maxIdeal$ (which is unique up to invertible multiples).
Note that in a Galois ring  this element can always be chosen to be $p$. 
Moreover, $\Rq/\maxIdeal$ is isomorphic to the finite field 
$\F_{p^s}$.

In this setting, it is well-known that there exists a unique cyclic subgroup of $\Rq^*$ of order $p^s-1$, which is generated by an element $\eta$. The set $T_s := \{0\}\cup \langle \eta\rangle$ is known as \emph{Teichm\"uller set} of $\Rq$.
Every element $a\in \Rq$ has hence a unique representation as
\begin{equation*}
a=\sum_{i=0}^{r-1} g_\maxIdeal^ia_i, \quad a_i\in T_s.
\end{equation*}
We will refer to this as the \emph{Teichm\"uller representation of $a$}. For Galois rings, this representation coincides with the $p$-adic expansion. If, in addition, one chooses the polynomial $h(z)$ to be a \emph{Hensel lift} of a primitive polynomial in $\F_p[x]$ of degree $s$, then the element $\eta$ can be taken to be one of the roots of $h(z)$. Here, for Hensel lift of a primitive polynomial $\bar{h}(z)\in\F_p[z]$, we mean that $h(x)\in\ZZ_{p^r}[z]$ is such that the canonical projection of $h(z)$ over $\F_p[z]$ is $\bar{h}(z)$ and $h(z)$ divides $z^{p^s-1}-1$ in $\ZZ_{p^r}[z]$. The interested reader is referred to \cite{mcdonald1974finite, bini2012finite} for a deeper understanding  on Galois rings.

It is easy to see that the number of units in $\Rq$ is given by
\begin{align}
|\Rq^*| &= |\Rq \setminus \maxIdeal|
= |\Rq| - |\maxIdeal|
= p^{sr} -p^{s(r-1)}
= |\Rq|\big(1-p^{-s}\big). \label{eq:Rq_number_units}
\end{align}

\begin{example}\label{exp:Rq}
  Let $p=2$, $s=1$, $r=3$, and $\Rq = \{0,1,\hdots,7\}$. We have that $\maxIdeal = \{0,2,4,6\}$ and $\Rq/\maxIdeal = \{0,1\} = \F_2$.
  Thus, $g_\maxIdeal = 2$.
  The set $\{1\}$ is the unique cyclic subgroup of $\Rq^*=\{1,3,5,7\}$ of order $p^s-1 = 1$ which is generated by $\eta=1$ and $T_s = \{0,1\}$. Then, the Teichm\"uller representation of $a=5$ is given by $ a = 1\cdot g_\maxIdeal^0 + 0 \cdot g_\maxIdeal^1 + 1 \cdot g_\maxIdeal^2$.
\end{example}

\begin{example}\label{exp:RqGalois}
  Let $p=2$, $s=3$, $r=3$, and let us construct $\Rq = \GR(8,3)$. Consider the ring $\ZZ_{8}$, and $h(z):=z^3+6z^2+5z+7\in\ZZ_{8}[z]$. The canonical projection of the polynomial $h(z)$ over $\F_2[z]$ is $z^3+z+1$ which is primitive, and hence irreducible, in $\F_2[z]$. Thus, we have
  $$\Rq\cong \ZZ_8[z]/(h(z)).$$
  Clearly, $\maxIdeal=(2)\Rq$ and we can choose $g_\maxIdeal=2$. Moreover, if $\eta$ is a root of $h(z)$, then we also have $\Rq\cong \ZZ_8[\eta]$, and every element can be represented as $a_0+a_1\eta+a_2\eta^2$, for $a_0,a_1,a_2\in\ZZ_{8}$.
  On the other hand, the polynomial $h(z)$ divides $x^7-1$ in $\ZZ_8[z]$ and therefore it is a Hensel lift of $z^3+z+1$. This implies that $\eta$ has order $7$, and the Teichm\"uller set is $T_3=\{0,\eta, \eta^2,\ldots,\eta^7=1\}$. If we take the element $a=5+3\eta^2$, then, it can be verified that its Teichm\"uller represntation  is $a=\eta^6+\eta^4g_\maxIdeal+\eta^5g_\maxIdeal^2=\eta^6+\eta^4\cdot 2+\eta^5\cdot 4$.
 \end{example}

\subsection{Extensions of Galois rings}

Let  $h(z) \in \Rq[z]$  be a polynomial of degree $m$ such that the leading coefficient of $h(z)$ is a unit and $h(z)$ is irreducible over the finite field $\Rq/\maxIdeal$. Then, the Galois ring $\Rq[z]/(h(z))$ is denoted by $\Rqm$. We have that $\Rqm$ is the Galois ring $\GR(p^r,sm)$, with maximal ideal $\MaxIdeal = \maxIdeal \Rqm$. Moreover, it is known that subrings of Galois rings are Galois rings and that for every $\ell$ dividing $m$ there exists a unique subring of $\Rqm$ which is a Galois extension of degree $\ell$ of $\Rq$.
These are all subrings of $\Rqm$ that contain $\Rq$.
In particular there exists a unique copy of $\Rq$ in $\Rqm$, and we can therefore consider (with a very small abuse of notation) $\Rq \subseteq \Rqm$. In particular, we have that $g_\maxIdeal$ is also the generator of $\MaxIdeal$ in $\Rqm$. 

As for $\Rq$, also $\Rqm$ contains a unique cyclic subgroup of order $p^{sm}-1$, and we can consider the Teichm\"uller set $T_{sm}$ as the union of such a subgroup together with the $0$ element. Hence, every $a\in\Rqm$ has a unique representation as
$$ a=\sum_{i=0}^{r-1} g_\maxIdeal^ia_i, \quad a_i\in T_{sm}.$$
The number of units in $\Rqm$ is given by
\begin{align*}
|\Rqm^*| &= |\Rqm \setminus \MaxIdeal| 
= |\Rqm| - |\MaxIdeal| 
= p^{srm} - |\maxIdeal|^m 
= p^{srm} - \big(p^{s(r-1)}\big)^m \\
&= p^{srm}\big(1-p^{-sm}\big) 
= |\Rqm|\big(1-p^{-sm}\big).
\end{align*}

From now on and for the rest of the paper, we will always denote by $\Rq$ the Galois ring $\GR(p^r,s)$, and by $\Rqm$ the Galois ring $\GR(p^r,sm)$.

\subsection{Smith Normal Form}

The Smith normal form is well-defined for both $\Rq$ and $\Rqm$, i.e., for $\A \in \Rq^{m \times n}$, there are invertible matrices $\S \in \Rq^{m \times m}$ and $\T \in \Rq^{n \times n}$ such that
\begin{align*}
\D = \S \A \T \in \Rq^{m \times n}
\end{align*}
is a diagonal matrix with diagonal entries $d_1,\dots,d_{\min\{n,m\}}$ with
\begin{align*}
d_j \in \maxIdeal^{i_j} \setminus \maxIdeal^{i_j+1},
\end{align*}
where the $0 \leq i_1 \leq i_2 \leq \dots \leq i_{\min\{n,m\}} \leq r$.
The same holds for matrices over $\Rqm$, where we replace $\maxIdeal$ by $\MaxIdeal$ (note that $\MaxIdeal^r=\{0\}$ and $\MaxIdeal^{r-1}\neq \{0\}$ for the \emph{same} $r$). The rank and the free rank of $\A$ (w.r.t.\ a ring $A \in \{\Rqm,\Rq\}$) is defined by $\rk (\A) := |\{ i\in\{1,\hdots,\min\{m,n\}\}: \D_{i,i} \not = 0 \}|$ and $\frk (\A) := |\{ i \in \{1,\hdots,\min\{m,n\}\} :\D_{i,i} \text{ is a unit} \}|$, respectively, where $\D$ is the diagonal matrix of the Smith normal form w.r.t.\ the ring $R$.

\subsection{Modules over Finite Chain Rings}

The ring $\Rqm$ is a free module over $\Rq$ of rank $m$. Hence, elements of $\Rqm$ can be treated as vectors in $\Rq^m$ and linear independence, $\Rq$-subspaces of $\Rqm$ and the $\Rq$-linear span of elements are well-defined. Let $\ve{\gamma}=[\gamma_1,\hdots,\gamma_m]$ be an ordered basis of $\Rqm$ over $\Rq$. By utilizing the module space isomorphism $\Rqm \cong \Rq^m$, we can relate each vector $\a \in \Rqm^{n}$ to a matrix $\A \in \Rq^{m\times n}$ according to $\extsmallfield_{\gamma} : \Rqm^{n} \rightarrow \Rq^{m\times n}, \a \mapsto \A$, where $a_j = \sum_{i=1}^{m} A_{i,j} \gamma_{i}$,\ $j \in \{1,\hdots,n\}$.
The \emph{ (free) rank norm} $(\ftemp)\rk_{\Rq}(\a)$ is the (free) rank of the matrix representation $\A$, \emph{i.e.}, $\rk_{\Rq}(\a) := \rk(\A)$ and $\frk_{\Rq}(\a) := \frk(\A)$, respectively.
\begin{example}
 Let $p=2$, $s=1$, $r=3$ as in Example~\ref{exp:Rq}, $h(z) = z^3+z+1$  and
  \begin{equation*}
    \a =
    \begin{bmatrix}
 2z^2 + 2z + 5, &  4z^2 + z + 6,  &     2z^2 + z
    \end{bmatrix}.
\end{equation*}
Using a polyomial basis $\ve{\gamma}=[1,z,z^2]$, the matrix representation of $\a$ is
\begin{equation*}
  \A =
  \begin{bmatrix}
    5 & 6 & 0\\
    2 & 1 & 1\\
    2 & 4 & 2
  \end{bmatrix}
\end{equation*}
and the Smith normal form of $\A$ is given by
\begin{equation*}
  \D = \begin{bmatrix}
    1 & 0 & 0 \\
    0 & 1 & 0 \\
    0 & 0 & 2 \\
 \end{bmatrix}.
\end{equation*}
It can be observed that $d_1, d_2 \in \maxIdeal^0 \setminus \maxIdeal^1 = \{1,3,5,7\}$ and $d_3 \in \maxIdeal^1 \setminus \maxIdeal^2 = \{2,6\}$ and thus $\rk(\A) = \rk(\D)  = 3$ and $\frk(\A)= \frk(\D)= 2$. It follows that $\rk_{\Rq}(\a) = 3$ and $\frk_{\Rq}(\a) = 2$.
\end{example}
Let $a = \sum_{i=1}^{m} a_i \gamma_i \in \Rqm$, where $a_i \in \Rq$. The following statements are equivalent (cf.~\cite[Lemma~2.4]{kamche2019rank}):
\begin{itemize}
\item $a$ is a unit in $\Rqm$.
\item At least one $a_i$ is a unit in $\Rq$.
\item $\{a\}$ is linearly independent over $\Rq$.
\end{itemize}

The $\Rq$-linear module that is spanned by $v_1,\hdots,v_{\ell} \in \Rqm$ is denoted by $\langle v_1,\dots,v_\ell \rangle_{\Rq} := \big\{\sum_{i=1}^{\ell} a_i v_i : a_i \in \Rq \big\}$. The $\Rq$-linear module that is spanned by the entries of a vector $\a \in \Rqm^{n}$ is called the support of $\a$, \emph{i.e.}, $\supp(\a) := \langle a_1,\dots,a_n \rangle_{\Rq}$. Further, $\Aspace \cdot \Bspace$ denotes the product module of two submodules $\Aspace$ and $\Bspace$ of $\Rqm$, i.e., $\Aspace \cdot \Bspace := \langle a \cdot b \, : \, a \in \Aspace, \, b \in \Bspace \rangle$.

\subsection{Valuation in Galois Rings}

We define the \emph{valuation} of $a \in \Rq \setminus \{0\}$ as the unique integer $v(a) \in \{0,\dots,r-1\}$ such that
\begin{align*}
a \in \maxIdeal^{v(a)} \setminus \maxIdeal^{v(a)+1},
\end{align*}
and set $v(0) := r$.
In the same way, the \emph{valuation} of $b \in \Rqm \setminus \{0\}$ as the unique integer $v(b) \in \{0,\dots,r-1\}$ such that
\begin{align*}
b \in \MaxIdeal^{v(b)} \setminus \MaxIdeal^{v(b)+1},
\end{align*}
and $v(0) = r$.

 Let $\{\gamma_1,\ldots, \gamma_m\}$ be a basis of $\Rqm$ as $\Rq$-module. It is easy to see that for $a = \sum_{i=1}^{m} a_i \gamma_i \in \Rqm \setminus \{0\}$, where $a_i \in \Rq$ (not all $0$), we have
\begin{equation}\label{eq:valuation}
v(a) = \min_{i=1,\dots,m}\{v(a_i)\}.
\end{equation}

\begin{example}
 Let $p=2$, $s=1$, $r=3$ as in Example~\ref{exp:Rq}, $h(z) = z^3+z+1$  and let $a=1$, $b=2$, $c=4 \in \Rq$. Since $a \in \maxIdeal^0 \setminus \maxIdeal^1=\{1,3,5,7\}$, $b \in \maxIdeal^1 \setminus \maxIdeal^2 = \{2,6\}$, and $c \in \maxIdeal^2 \setminus \maxIdeal^3=\{4\}$, one obtains $v(a) =0$, $v(b) = 1$ and $v(c)=2$.

  Furthermore, let $d=2z^2+1$, $e=4z^2+2z+2$, $f=4z^2+4$, where $d\in\MaxIdeal^0\setminus\MaxIdeal^1$, $e\in \MaxIdeal^1\setminus\MaxIdeal^2$ and $f\in\MaxIdeal^2\setminus\MaxIdeal^3$. It follows that $v(d)=0$, $v(e)=1$ and $v(f)=2$. Since an element is a unit if and only if its valuation is equal to $0$, only the elements $a$ and $d$ are units.
\end{example}

\subsection{Rank Profile of a Module and \Mingensets}

Let $\Mspace$ be an $\Rq$-submodule of $\Rqm$ and $d_1,\dots,d_n$ be diagonal entries of a Smith normal form of a matrix whose row space is $\Mspace$.
Define the \emph{rank profile of $\Mspace$} to be the polynomial
\begin{align*}
\phi^{\Mspace}(x) := \sum_{i=0}^{r-1} \phi_i^{\Mspace} x^i \in \ZZ[x]/(x^r),
\end{align*}
where
\begin{equation*}
\phi^{\Mspace}_i := \left|\left\{j : v(d_j)=i\right\}\right|.
\end{equation*}
Note that $\phi^{\Mspace}(x)$ is independent of the chosen matrix and Smith normal form since the diagonal entries $d_i$ are unique up to multiplication by a unit.
We can easily read the free rank and rank from the rank profile
\begin{align*}
\frk_{\Rq} \Mspace &= \phi^{\Mspace}_0 = \phi^{\Mspace}(0), \\
\rk_{\Rq} \Mspace &= \sum_{i=0}^{r-1} \phi^{\Mspace}_i = \phi^{\Mspace}(1).
\end{align*}

\begin{example}
Consider the ring $\Rq=\GR(8,3)$ as defined in Example \ref{exp:RqGalois}, where as generator of $\maxIdeal$ we take $g_\maxIdeal=2$.  Take a module $\Mspace$ whose diagonal matrix in the Smith normal form is
\begin{align*}
\begin{bmatrix}
1 &   & & &\\
  & 1 & & & \\
  &   & 2 & & \\
  &   &   & 4 &\\
  &   &   &  & 0
\end{bmatrix}. 
\end{align*}
We have
\begin{equation*}
\phi^{\Mspace}(x) = 2+x+x^2.
\end{equation*}
\end{example}

On $\ZZ[x]/(x^r)$, we define the following partial order $\preceq$.
\begin{definition}\label{def:ordering_rankprofiles}
Let $a(x),b(x) \in \ZZ[x]/(x^r)$.
We say that $a(x) \preceq b(x)$ if for every $i\in \{0,\ldots, r-1\}$ we have
\begin{equation*}
\sum_{j=0}^i a_j \leq \sum_{j=0}^i b_j.
\end{equation*}
\end{definition}

\begin{remark}
The partial order $\preceq$ on rank profiles is compatible with the containment of submodules. That is, if $M_1\subseteq M_2$ then $\phi^{\Mspace_1} \preceq \phi^{\Mspace_2}$. Clearly the opposite implication is not true in general.
\end{remark}

For $\D$ and $\T$ as in the Smith normal form of a matrix over $\Rq$, observe that the nonzero rows of the matrix $\D\T^{-1}$ produce a set of generators for the $\Rq$-module generated by the rows of $\A$, which is minimal and of the form
$$\Gamma=\{g_\maxIdeal^ia_{i,\ell_i} \mid 0\leq i \leq r-1, 1 \leq \ell_i \leq \phi^{\Mspace}_i \}.$$ 
A generating set coming from the Smith Normal Form as described above will be called \emph{\mingenset}. Alternatively, a \mingenset for a $\Rq$-module $\Mspace$ is a generating set $\{b_{i,\ell_i} \mid 0\leq i \leq r-1, 1 \leq \ell_i \leq \phi^{\Mspace}_i \}$ such that $v(b_{i,\ell_i})=i$. 
Moreover,  every $\Rq$-submodule of $\Rq^n$ can be seen as the rowspace of a matrix, and hence it decomposes as 
$$ \Mspace=\langle\Gamma^{(0)}\rangle_{\Rq}+\maxIdeal \langle\Gamma^{(1)}\rangle_{\Rq} + \ldots + \maxIdeal^{r-1}\langle\Gamma^{(r-1)}\rangle_{\Rq}, $$
where  $\Gamma^{(i)}:=\{a_{i,\ell_i} \mid 0\leq i \leq r-1, 1 \leq \ell_i \leq \phi^{\Mspace}_i \}$. It is easy to see that $\langle\Gamma^{(i)}\rangle_{\Rq}$ is a free module. However, this decomposition depends on the chosen \mingenset $\Gamma$.

For a module $\Mmodule$ with \mingenset $\Gamma = \{g_\maxIdeal^ia_{i,\ell_i} \mid 0\leq i \leq r-1, 1 \leq \ell_i \leq \phi^{\Mspace}_i \}$, we have the following: Let $e \in \Mmodule$ and
\begin{align*}
e = \sum_{i=0}^{r-1} \sum_{\ell_i=1}^{\phi_i^\Mmodule} e_{i,\ell_i} g_\maxIdeal^ia_{i,\ell_i} = \sum_{i=0}^{r-1} \sum_{\ell_i=1}^{\phi_i^\Mmodule} e'_{i,\ell_i} g_\maxIdeal^i a_{i,\ell_i}
\end{align*}
be two different representations of $e$ in the \mingenset with coefficients $e_{i,\ell_i},e'_{i,\ell_i} \in \Rq$, respectively. Then, we have
\begin{align*}
e_{i,\ell_i} \equiv e'_{i,\ell_i} \mod g_\maxIdeal^{r-i}
\end{align*}
for all $0\leq i \leq r-1$ and $1 \leq \ell_i \leq \phi^{\Mspace}_i$. This is due to the fact that by definition of \mingenset, the set $\{a_{i,\ell_i} \mid 0\leq i \leq r-1, 1 \leq \ell_i \leq \phi^{\Mspace}_i\}$ is linear independent over $\Rq$, and hence  $(e_{i,\ell_i} - e'_{i,\ell_i}) g_\maxIdeal^{i}=0$ for every $i,\ell_i$. Therefore, the representation of an element in $\Mspace$ with respect to a \mingenset have uniquely determined coefficients $e_{i,\ell_i}$ modulo $\Ann(g_\maxIdeal^i)=\maxIdeal^{r-i}$.

\begin{lemma}\label{lem:shiftedrank}
Let $\Mspace$ be an $\Rq$-submodule of $\Rqm$ with rank-profile $\phi^{\Mspace}$ and let $j \in \{1,\ldots,r-1\}$. Then, the rank-profile of $\maxIdeal^j\Mspace$ is given by
\begin{equation*}
\phi^{\maxIdeal^j\Mspace}(x) = x^j \phi^{\Mspace}(x).
\end{equation*}
In particular, the rank of $\maxIdeal^j\Mspace$ is equal to $\phi^{\maxIdeal^j\Mspace}(1)=\sum\limits_{i=0}^{r-1-j}\phi^{\Mspace}_i$.
\end{lemma}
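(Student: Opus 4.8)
The plan is to reduce the statement to a computation with the Smith normal form. Fix a matrix $\A$ over $\Rq$ whose $\Rq$-row space equals $\Mspace$ (via the identification $\Rqm \cong \Rq^m$), and let $\D = \S\A\T$ be a Smith normal form with invertible $\S,\T$ and diagonal entries $d_1,d_2,\dots$ ordered so that $v(d_1)\le v(d_2)\le\cdots$; by definition $\phi^{\Mspace}_i = |\{k : v(d_k)=i\}|$ for $i\in\{0,\dots,r-1\}$.

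The first step is to note that $\maxIdeal^j\Mspace = g_\maxIdeal^j\Mspace = \{g_\maxIdeal^j m : m\in\Mspace\}$, because $\maxIdeal^j=(g_\maxIdeal^j)$ is principal; hence $g_\maxIdeal^j\A$ is a matrix whose $\Rq$-row space is exactly $\maxIdeal^j\Mspace$. Applying the same transformations yields $\S(g_\maxIdeal^j\A)\T = g_\maxIdeal^j\D$, which is diagonal with entries $g_\maxIdeal^j d_k$, and whose valuations are still non-decreasing since $x\mapsto\min\{x+j,r\}$ is non-decreasing. Thus $g_\maxIdeal^j\D$ is a Smith normal form of a matrix with row space $\maxIdeal^j\Mspace$, so it can be used to read off $\phi^{\maxIdeal^j\Mspace}$.

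The second step is to compute $v(g_\maxIdeal^j d_k)$. Over the chain ring $\Rq$ we may write a nonzero $d_k$ as $d_k = g_\maxIdeal^{v(d_k)}u_k$ with $u_k$ a unit, so $g_\maxIdeal^j d_k = g_\maxIdeal^{\,j+v(d_k)}u_k$; therefore $v(g_\maxIdeal^j d_k)=j+v(d_k)$ whenever $j+v(d_k)\le r-1$, and $g_\maxIdeal^j d_k = 0$ otherwise (and trivially $g_\maxIdeal^j\cdot 0=0$). Hence, for $i\in\{0,\dots,r-1\}$, we get $\phi^{\maxIdeal^j\Mspace}_i = 0$ if $i<j$ and $\phi^{\maxIdeal^j\Mspace}_i = |\{k: v(d_k)=i-j\}| = \phi^{\Mspace}_{i-j}$ if $j\le i\le r-1$. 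Summing, $\phi^{\maxIdeal^j\Mspace}(x) = \sum_{i=j}^{r-1}\phi^{\Mspace}_{i-j}x^i = x^j\sum_{i'=0}^{r-1-j}\phi^{\Mspace}_{i'}x^{i'}$, which equals $x^j\phi^{\Mspace}(x)$ in $\ZZ[x]/(x^r)$ because the dropped terms $\phi^{\Mspace}_{i'}x^{i'+j}$ with $i'\ge r-j$ are zero modulo $x^r$. Evaluating at $x=1$ gives the claimed formula for the rank of $\maxIdeal^j\Mspace$, namely $\sum_{i=0}^{r-1-j}\phi^{\Mspace}_i$.

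I do not expect a genuine obstacle; the two things that need a bit of care are justifying that $g_\maxIdeal^j\D$ really is admissible as a Smith normal form of a matrix with row space $\maxIdeal^j\Mspace$ (or, alternatively, simply invoking the well-definedness of the rank profile to bypass the ordering question), and keeping track of the diagonal entries $d_k$ with $v(d_k)\ge r-j$, which are annihilated by $g_\maxIdeal^j$ and thus disappear from the rank profile --- this is precisely the truncation modulo $x^r$. Both are routine bookkeeping.
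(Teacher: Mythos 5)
Your proof is correct and is essentially the same argument as the paper's: the paper shifts a \mingenset $\{g_\maxIdeal^i a_{i,\ell_i}\}$ of $\Mspace$ by $g_\maxIdeal^j$ and notes that generators with valuation at least $r-j$ vanish, while you perform the identical shift directly on the Smith normal form diagonal (from which \mingensets are defined anyway), with the same truncation modulo $x^r$. Your explicit verification that $g_\maxIdeal^j\D$ is an admissible Smith normal form for $\maxIdeal^j\Mspace$ is a slightly more detailed justification of what the paper calls ``easy to see,'' but the underlying idea is the same.
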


\begin{proof}
Let $g_\maxIdeal$ be a generator of $\maxIdeal$.
If $\Gamma=\{g_\maxIdeal^ia_{i,\ell_i} \mid 0\leq i \leq r-1, 1 \leq \ell_i \leq \phi^{\Mspace}_i \}$ is a \mingenset for $M$, then it is easy to see that 
$$\left\{  g_\maxIdeal^{i+j}a_{i,\ell_i} \mid 0\leq i \leq r-j-1, 1 \leq \ell_i \leq \phi^{\Mspace}_i \right\}$$ is a \mingenset for $\maxIdeal^j\Mspace$. Hence, the first $j$ coefficients of $\phi^{\maxIdeal^j\Mspace}(x)$ are equal to zero, while the remaining ones are the $j$-th shift of the first $r-j$ coefficients of $\phi^{\Mspace}(x)$. \qed
\end{proof}

\begin{proposition}\label{prop:productpolynomials}
For any pair  of $\Rq$-submodules $\Mspace_1, \Mspace_2$ of $\Rqm$, we have 
\begin{equation*}
\phi^{\Mspace_1 \cdot \Mspace_2}(x) \preceq \phi^{\Mspace_1}(x) \phi^{\Mspace_2}(x).
\end{equation*}
\end{proposition}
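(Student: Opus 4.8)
The plan is to use \mingensets to reduce the statement about the product module $\Mspace_1 \cdot \Mspace_2$ to a statement about a generating set that is compatible with valuations, and then to track valuations of products. Let $\Gamma_1 = \{g_\maxIdeal^i a_{i,\ell_i} \mid 0\le i \le r-1,\ 1\le \ell_i \le \phi^{\Mspace_1}_i\}$ be a \mingenset of $\Mspace_1$ and $\Gamma_2 = \{g_\maxIdeal^j b_{j,k_j} \mid 0\le j \le r-1,\ 1\le k_j \le \phi^{\Mspace_2}_j\}$ be a \mingenset of $\Mspace_2$. By definition of the product module, the set of all products
\begin{equation*}
S := \left\{ g_\maxIdeal^{i+j} a_{i,\ell_i} b_{j,k_j} \ \middle|\ 0\le i,j \le r-1,\ 1\le \ell_i \le \phi^{\Mspace_1}_i,\ 1\le k_j \le \phi^{\Mspace_2}_j\right\}
\end{equation*}
generates $\Mspace_1 \cdot \Mspace_2$ over $\Rq$. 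Each generator indexed by $(i,\ell_i,j,k_j)$ has valuation at least $i+j$ (it could be larger, since $a_{i,\ell_i} b_{j,k_j}$ need not be a unit, or could even be zero if $i+j \ge r$). The number of pairs $(i,j)$ with $i+j = s$ contributing generators is $\sum_{i=0}^{s} \phi^{\Mspace_1}_i \phi^{\Mspace_2}_{s-i}$, which is exactly the coefficient of $x^s$ in $\phi^{\Mspace_1}(x)\phi^{\Mspace_2}(x)$ (in $\ZZ[x]/(x^r)$).

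Next I would make precise the following general principle: if an $\Rq$-submodule $\Nspace$ of $\Rqm$ is generated by a set of elements $\{c_\alpha\}_{\alpha \in A}$, and $A$ is partitioned as $A = A_0 \sqcup A_1 \sqcup \cdots \sqcup A_{r-1}$ such that $v(c_\alpha) \ge s$ for every $\alpha \in A_s$, then $\phi^{\Nspace}(x) \preceq \sum_{s=0}^{r-1} |A_s| x^s$ in the order $\preceq$ of Definition~\ref{def:ordering_rankprofiles}. To see this, note that $\Nspace = \sum_{s=0}^{r-1} \maxIdeal^s \langle \{c_\alpha g_\maxIdeal^{-s} \cdot g_\maxIdeal^{s} \}\rangle$; more directly, the partial sums $\sum_{j=0}^{t}\phi^{\Nspace}_j$ equal $\rk_{\Rq}\!\big(\Nspace / (\Nspace \cap \maxIdeal^{t+1}\Rqm)\big)$ — equivalently, the $\F_{p^s}$-dimension of the image of $\Nspace$ in $\Rqm/\maxIdeal^{t+1}\Rqm$ — and this image is generated by the images of the $c_\alpha$ with $v(c_\alpha)\le t$, i.e. by at most $|A_0| + \cdots + |A_t|$ elements of the module $\maxIdeal^{s}\Rqm/\maxIdeal^{t+1}\Rqm$ summed over $s \le t$. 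A short computation with the structure of $\Rqm/\maxIdeal^{t+1}\Rqm$ over the chain ring $\Rq/\maxIdeal^{t+1}$ shows that the free rank of such a quotient image is bounded by the count of generators with valuation $\le t$. Applying this principle to $\Nspace = \Mspace_1\cdot\Mspace_2$ with the partition of $S$ by the value of $i+j$ (generators with $i+j\ge r$ are zero and may be discarded, or assigned to $A_{r-1}$ harmlessly), and comparing against the coefficient count of $\phi^{\Mspace_1}\phi^{\Mspace_2}$, gives exactly $\phi^{\Mspace_1\cdot\Mspace_2}(x) \preceq \phi^{\Mspace_1}(x)\phi^{\Mspace_2}(x)$.

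The main obstacle I anticipate is making the counting argument in the general principle fully rigorous: namely, showing that the partial sum $\sum_{j=0}^{t}\phi^{\Nspace}_j$ of a module is at most the number of generators of valuation $\le t$ in \emph{any} generating set, when these generators are sorted by valuation. This requires care because a generator of valuation exactly $v$ contributes to the module's structure only "from level $v$ onward," and one must argue that passing to the quotient $\Rqm/\maxIdeal^{t+1}\Rqm$ and counting $\F_{p^s}$-dimensions (or, more precisely, the free rank over $\Rq/\maxIdeal^{t+1}$) behaves correctly. The cleanest route is probably to lift a \mingenset of $\Nspace$ itself and compare it against $S$ level by level using the fact (already noted in the excerpt) that $\preceq$ is compatible with module containment together with Lemma~\ref{lem:shiftedrank}; concretely, one writes $\Mspace_1 \cdot \Mspace_2 = \sum_{i,j} \maxIdeal^{i+j}\big(\langle\Gamma_1^{(i)}\rangle_{\Rq} \cdot \langle\Gamma_2^{(j)}\rangle_{\Rq}\big)$, bounds the rank profile of each free-module product $\langle\Gamma_1^{(i)}\rangle \cdot \langle\Gamma_2^{(j)}\rangle$ by the constant polynomial $\phi^{\Mspace_1}_i \phi^{\Mspace_2}_j$ (the product of free modules of ranks $\phi^{\Mspace_1}_i$, $\phi^{\Mspace_2}_j$ has rank at most their product), applies Lemma~\ref{lem:shiftedrank} to get the $x^{i+j}$ shift, and finally uses subadditivity of $\preceq$ under sums of submodules, i.e. $\phi^{\Nspace_1 + \Nspace_2} \preceq \phi^{\Nspace_1} + \phi^{\Nspace_2}$, to assemble the bound $\sum_{i,j} \phi^{\Mspace_1}_i \phi^{\Mspace_2}_j x^{i+j} = \phi^{\Mspace_1}(x)\phi^{\Mspace_2}(x)$. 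Verifying subadditivity of $\preceq$ under module sums is itself a small lemma I would state and prove en route.
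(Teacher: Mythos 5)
Your proposal is correct and takes essentially the same route as the paper: generate $\Mspace_1 \cdot \Mspace_2$ by the products of two \mingensets, bound the rank by counting the nonzero products, and obtain the truncated-sum inequalities via Lemma~\ref{lem:shiftedrank} applied to $\maxIdeal^j(\Mspace_1\cdot\Mspace_2)$. The ``general principle'' you were worried about is exactly this last observation --- $\sum_{j=0}^{t}\phi^{\Nspace}_j$ equals $\rk_{\Rq}\!\left(\maxIdeal^{r-1-t}\Nspace\right)$, and multiplying by $g_\maxIdeal^{r-1-t}$ annihilates every generator of valuation greater than $t$ --- so no quotient-module argument is needed.
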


\begin{proof}
Let $g_\maxIdeal$ be a generator of $\maxIdeal$.
Let $\Mspace_1, \Mspace_2$ be two $\Rq$-submodules  with rank-profile $\phi^{\Mspace_1}$ and $\phi^{\Mspace_2}$ respectively. Then, there exist a minimal generating set of $\Mspace_1$ given by
$$\Gamma_1:=\{g_\maxIdeal^ia_{i,j_i} \mid 0\leq i \leq r-1, 1 \leq j_i \leq \phi^{\Mspace_1}_i \},$$
and a minimal generating set of $M_2$ given by
$$\Gamma_2:=\{g_\maxIdeal^ib_{i,j_i} \mid 0\leq i \leq r-1, 1 \leq j_i \leq \phi^{\Mspace_2}_i \}.$$
In particular, the product set $\Gamma_1\cdot \Gamma_2$ is a generating set of $\Mspace_1 \cdot \Mspace_2$. Hence
\begin{align*}
\sum_{i=0}^{r-1} \phi^{\Mspace_1\cdot \Mspace_2}_i &=\rk_{\Rq}(\Mspace_1\cdot \Mspace_2) \\
& \leq |\Gamma_1 \cdot \Gamma_2\setminus\{0\}| \\
&= \sum_{i=0}^{r-1}\sum_{j=0}^i \phi^{\Mspace_1}_j\phi^{\Mspace_2}_{i-j}\\
&=\sum_{i=0}^{r-1}(\phi^{\Mspace_1} \phi^{\Mspace_2})_i.
\end{align*}
The general inequality for the truncated sums then follows by considering the rank of the submodule $\maxIdeal^j(\Mspace_1 \cdot \Mspace_2)$ and Lemma \ref{lem:shiftedrank}.
\qed
\end{proof}

\section{LRPC Codes Over Galois Rings}\label{sec:LRPCcodes}

\begin{definition}\label{def:LRPCcodes}
Let $k,n,\lambda$ be positive integers with $0<k<n$. %
Furthermore, let $\Fspace \subseteq \Rqm$ be a free $\Rq$-submodule of $\Rqm$ of rank $\lambda$.
A low-rank parity-check (LRPC) code with parameters $\lambda,n,k$ is a code with a parity-check matrix
$\H \in \Rqm^{(n-k) \times n}$
such that $\rank_{\Rqm} \H = \frk_{\Rqm} \H = n-k$ and $\Fspace = \langle H_{1,1},\dots,H_{(n-k),n} \rangle_{\Rq}$.
\end{definition}

Note that an LRPC code is a free submodule of $\Rqm^n$ of rank $k$.
This means that the cardinality of the code is $|\Rqm|^k = |\Rq|^{mk} = p^{r s m k}$.
We define the following three additional properties of the parity-check matrix that we will use throughout the paper to prove the correctness of our decoder and to derive failure probabilities.
As for rank-metric codes over finite fields, we can interpret vectors over $\Rqm$ as matrices over $\Rq$ by the $\Rq$-module isomorphism $\Rqm \simeq \Rq^m$.
In particular, an LRPC code can be seen as a subset of $\Rq^{m \times n}$.

\begin{definition}\label{def:H_properties}
Let $\lambda$, $\Fspace$, and $\H$ be defined as in Definition~\ref{def:LRPCcodes}.
Let $f_1,\dots,f_\lambda \in \Rqm$ be a free basis of $\Fspace$.
For $i=1,\dots,n-k$, $j=1,\dots,n$, and $\ell=1,\dots,\lambda$, let $h_{i,j,\ell} \in \Rq$ be the unique elements such that $H_{i,j} = \sum_{\ell = 1}^{\lambda} h_{i,j,\ell} f_{\ell}$.
Define
\begin{equation}
\H_{\mathrm{ext}} :=
\begin{bmatrix}
h_{1,1,1} & h_{1,2,1} & \hdots & h_{1,n,1} \\
h_{1,1,2} & h_{1,2,2} & \hdots & h_{1,n,2} \\
\vdots & \vdots & \ddots & \vdots \\
h_{2,1,1} & h_{2,2,1} & \hdots & h_{2,n,1} \\
h_{2,1,2} & h_{2,2,2} & \hdots & h_{2,n,2} \\
\vdots & \vdots & \ddots & \vdots \\
\end{bmatrix}
\in \Rq^{(n-k)\lambda \times n}.
\label{eq:H_ext}
\end{equation}
Then, $\H$ has the
\begin{enumerate}
\item \textbf{unique-decoding property} if 
$\lambda \geq \tfrac{n}{n-k}$ and $\frk \left( \H_{\mathrm{ext}} \right) = \rk \left(  \H_{\mathrm{ext}} \right) = n$,

\item \textbf{maximal-row-span property} if every row of the parity-check matrix $\H$ spans the entire space $\Fspace$,
\item \textbf{unity property} if every entry $H_{i,j}$ of $\H$ is chosen from the set
$H_{i,j} \in \tilde{\Fspace} := \left \{ \textstyle\sum_{i=1}^{\lambda} \alpha_i f_i \, : \, \alpha_i \in \Rq^* \cup \{0\} \right\}  \subseteq \Fspace$.
\end{enumerate}
Furthermore, we say that $\Fspace$ has the \textbf{\Fcontainsoneproperty} if $1 \in \Fspace$.
\end{definition}

In the original papers about LRPC codes over finite fields, \cite{gaborit2013low,aragon2019low}, some of the properties of Definition~\ref{def:H_properties} are used without explicitly stating them.

We will see in Section~\ref{ssec:erasure_decoding} that the {unique-decoding property} together with a property of the error guarantees that erasure decoding always works (i.e., that the full error vector can be recovered from knowing the support and syndrome of an error).
This property is also implicitly used in \cite{gaborit2013low}.
It is, however, not very restrictive: if the parity-check matrix entries $H_{i,j}$ are chosen uniformly at random from $\Fspace$, this property is fulfilled with the probability that a random $\lambda (n-k) \times n$ matrix has full (free) rank $n$.
This probability is arbitrarily close to $1$ for increasing difference of $\lambda(n-k)$ and $n$ (cf.~\cite{renner2019efficient} for the field and Lemma~\ref{lem:syndrome_number_of_full-rank_matrices} in Section~\ref{ssec:failure_syndrome} for the ring case).

We will use the {maximal-row-span property} to prove a bound on the failure probability of the decoder in Section~\ref{sec:failure}. It is a sufficient condition that our bound (in particular Theorem~\ref{thm:syndrome_condition_main_statement} in Section~\ref{sec:failure}) holds. Although not explicitly stated, \cite[Proposition~4.3]{aragon2019low} must also assume a similar or slightly weaker condition in order to hold. It does not hold for arbitrary parity-check matrices as in \cite[Definition~4.1]{aragon2019low} (see the counterexample in Remark~\ref{rem:necessity_of_maximal_row_span_condition_or_similar} in Section~\ref{sec:failure}). This is again not a big limitation in general for two reasons: first, the ideal codes in \cite[Definition~4.2]{aragon2019low} appear to automatically have this property, and second, a random parity-check matrix has this property with high probability.

In the case of finite fields, the {unity property} is no restriction at all since the units of a finite field are all non-zero elements. That is, we have $\tilde{\Fspace} = \Fspace$. Over rings, we need this additional property as a sufficient condition for one of our failure probability bounds (Theorem~\ref{thm:syndrome_condition_main_statement} in Section~\ref{sec:failure}). It is not a severe restriction in general, since 
\begin{align*}
\frac{|\tilde{\Fspace}|}{|\Fspace|} = \frac{(|\Rq^*|+1)^\lambda}{|\Rq|^\lambda} = \big(1-p^{-s}+p^{-sr}\big)^\lambda,
\end{align*}
which is relatively close to $1$ for large $p^s$ and comparably small $\lambda$.

Finally, Gaborit et al.\ \cite{gaborit2013low} also used the \Fcontainsoneproperty of $\Fspace$.
In contrast to the other three properties in Definition~\ref{def:H_properties}, this property only depends on $\Fspace$ and not on $\H$.
We will also assume this property to derive a bound on the probability of one possible cause of a decoding failure event in Section~\ref{ssec:failure_intersection}.

\section{Decoding}\label{sec:decoding}

\subsection{The Main Decoder}

Fix $\lambda$ and $\Fspace$ as in Definition~\ref{def:LRPCcodes}.
Let $f_1,\dots,f_\lambda \in \Rqm$ be a free basis of $\Fspace$.
Note that since the $f_i$ are linearly independent, the sets $\{f_i\}$ are linearly independent, which by the discussion in Section~\ref{sec:preliminaries} implies that all the $f_i$ are units in $\Rqm$. Hence, $f_i^{-1}$ exists for each $i$.
We will discuss erasure decoding (Line~\ref{line:erasure_decoding}) in Section~\ref{ssec:erasure_decoding}.

\begin{algorithm}
\DontPrintSemicolon
\caption{LRPC Decoder}
\label{alg:decoder}
\KwIn{\begin{itemize}
\item LRPC parity-check matrix $\H$ (as in Definition~\ref{def:LRPCcodes})
\item $\r = \c + \e$, such that
\begin{itemize}
\item $\c$ is in the LRPC code $\mathcal{C}$ given by $\H$ and
\item The support of $\e$ is a module of rank $t$. %
\end{itemize}
\end{itemize}}
\KwOut{Codeword $\c'$ of $\mathcal{C}$ or ``decoding failure''}
$\s = [s_1,\dots,s_{n-k}] \gets \r \H^\top$ \label{line:syndrome_computation} \\
$\Sspace \gets \langle s_1,\dots,s_{n-k} \rangle_{\Rq}$ \label{line:Sspace} \\
\For{$i=1,\dots,\lambda$}{
$\Sspace_i \gets f_i^{-1} \Sspace = \left\{f_i^{-1} a \, : \, a \in \Sspace \right\}$ \label{line:S_i_computation} \\
}
$\Espace' \gets \bigcap_{i=1}^{\lambda} \Sspace_i$ \label{line:S_i_intersection} \\
$\e \gets$ Erasure decoding with support $\Espace'$ w.r.t.~the syndrome $\s$, as described in Lemma~\ref{lem:erasure_decoding} (Section~\ref{ssec:erasure_decoding}) \label{line:erasure_decoding} \\
\If{There is exactly one solution $\e$ of the erasure decoding problem}{
\Return{$\r-\e$}
} \Else {
\Return{``decoding failure''}
}
\end{algorithm}

Algorithm~\ref{alg:decoder} recovers the support $\Espace$ of the error $\e$ if $\Espace' = \Espace$. A necessary (but not sufficient) condition for this to be fulfilled is that we have $\Sspace = \Espace \cdot \Fspace$.
Furthermore, we will see in Section~\ref{ssec:erasure_decoding} that we can uniquely recover the error vector $\e$ from its support $\Espace$ and syndrome $\s$ if the the parity-check matrix fulfills the unique decoding property and we have $\phi^{\Espace \cdot \Fspace} = \phi^{\Espace} \phi^{\Fspace}$.
Hence, decoding works if the following three conditions are fulfilled:
\begin{enumerate}
\item \label{itm:fail_product_small} 	$\phi^{\Espace \cdot \Fspace} = \phi^{\Espace} \phi^{\Fspace}$, \hfill (\textbf{product condition}).
\item \label{itm:fail_Sspace_small} 	$\Sspace = \Espace \cdot \Fspace$, \hfill (\textbf{syndrome condition}) 
\item \label{itm:fail_intersection_big} $\bigcap_{i=1}^{\lambda} \Sspace_i = \Espace$, \hfill (\textbf{intersection condition}), 
\end{enumerate}

We call the case that at least one of the three conditions is not fulfilled a \emph{(decoding) failure}.
We will see in the next section (Section~\ref{sec:failure}) that whether an error results in a failure depends solely on the error support $\Espace$.
Furthermore, given an error support that is drawn uniformly at random from the modules of a given rank profile $\phi$, the failure probability can be upper-bounded by a function that depends only on the rank of the module (i.e., $\phi^\Espace(1)$).

In Section~\ref{sec:complexity}, we will analyze the complexity of Algorithm~\ref{alg:decoder}.
The proofs in that section also indicate how the algorithm can be implemented in practice.

\begin{remark}
Note that the success conditions above imply that for an error of rank $\phi^\Espace(1) = t$, we have $\lambda t \leq m$ (due to the {product condition}) as well as $\lambda \geq \tfrac{n}{n-k}$ (due to the {unique-decoding property}). Combined, we obtain
$t \leq m\tfrac{n-k}{n} = m(1-R)$,
where $R := \tfrac{k}{n}$ is the rate of the LRPC code. %
\end{remark}

\subsection{Erasure Decoding}\label{ssec:erasure_decoding}

As its name suggests, the unique decoding property of the parity-check matrix is related to unique erasure decoding, \emph{i.e.}, the process of obtaining the full error vector $\e$ after having recovered its support. The next lemma establishes this connection.

\begin{lemma}[Unique Erasure Decoding]\label{lem:erasure_decoding}
Given a parity-check matrix $\H$ that fulfills the {unique-decoding property}. Let $\Espace$ be a free support of rank $t \leq \tfrac{m}{\lambda}$. If $\phi^{\Espace \cdot\Fspace} = \phi^\Espace \phi^\Fspace$, then, for any syndrome $\s \in \Rqm^{n-k}$, there is at most one error vector $\e \in \Rqm^n$ with support $\Espace$ that fulfills
$\H \e^\top = \s^\top$.
\end{lemma}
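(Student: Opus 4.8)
The plan is to set up the erasure decoding as a linear system over $\Rq$ and show that the hypotheses force this system to have at most one solution. First I would fix a free basis $\varepsilon_1,\dots,\varepsilon_t$ of the support $\Espace$, and recall from Definition~\ref{def:H_properties} that $f_1,\dots,f_\lambda$ is a free basis of $\Fspace$. An error vector $\e\in\Rqm^n$ with support contained in $\Espace$ can be written as $e_j = \sum_{\mu=1}^t E_{\mu,j}\varepsilon_\mu$ with unknowns $E_{\mu,j}\in\Rq$, so $\e$ is determined by the matrix $\E\in\Rq^{t\times n}$. The syndrome equation $\H\e^\top=\s^\top$ expands, using $H_{i,j}=\sum_{\ell} h_{i,j,\ell} f_\ell$, into $\sum_j \sum_{\ell,\mu} h_{i,j,\ell} E_{\mu,j}\, (f_\ell \varepsilon_\mu) = s_i$, i.e. the syndrome entries lie in $\Fspace\cdot\Espace$ and their coordinates with respect to a generating set of $\Fspace\cdot\Espace$ give $\Rq$-linear equations in the $E_{\mu,j}$.

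The key step is to exploit $\phi^{\Espace\cdot\Fspace}=\phi^{\Espace}\phi^{\Fspace}$. By Proposition~\ref{prop:productpolynomials} the product of the rank profiles is always an upper bound (in the order $\preceq$) for $\phi^{\Espace\cdot\Fspace}$; equality means that the natural generating set $\{f_\ell\varepsilon_\mu\}$ of $\Espace\cdot\Fspace$ is as ``free as possible''. Concretely, since $\Espace$ is free of rank $t$ and $\Fspace$ is free of rank $\lambda$, the inequality chain in the proof of Proposition~\ref{prop:productpolynomials} becomes an equality, which forces the $\lambda t$ products $f_\ell\varepsilon_\mu$ to be $\Rq$-linearly independent — hence $\Espace\cdot\Fspace$ is itself free of rank $\lambda t$ with $\{f_\ell\varepsilon_\mu\}$ a free basis. (This uses $\lambda t\le m$, so that $\lambda t$ linearly independent elements fit in $\Rqm\cong\Rq^m$.) Consequently, comparing coordinates of $\s$ in this basis yields, for each fixed pair $(\ell,\mu)$, the single scalar equation $\sum_{j=1}^n h_{i,j,\ell} E_{\mu,j} = (\text{the }(\ell,\mu)\text{-coordinate of }s_i)$ for $i=1,\dots,n-k$; stacking over $i$ and over $\ell$ gives, for each $\mu$, the system $\H_{\mathrm{ext}} \,(E_{\mu,1},\dots,E_{\mu,n})^\top = (\text{known vector})$, where $\H_{\mathrm{ext}}\in\Rq^{(n-k)\lambda\times n}$ is exactly the matrix of~\eqref{eq:H_ext}.

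Now I would invoke the unique-decoding property: $\frk(\H_{\mathrm{ext}})=\rk(\H_{\mathrm{ext}})=n$. A matrix over $\Rq$ with free rank equal to its number of columns $n$ has trivial right kernel (if $\H_{\mathrm{ext}}\x=\0$ with $\x\neq\0$, the Smith normal form would have a zero or non-unit among its first $n$ diagonal entries, contradicting free rank $n$). Therefore each of the $t$ linear systems for the columns of $\E$ has at most one solution, so $\E$, and hence $\e$, is uniquely determined when it exists. This proves the claim. The main obstacle I anticipate is the first part: carefully deducing from $\phi^{\Espace\cdot\Fspace}=\phi^{\Espace}\phi^{\Fspace}$ (together with freeness of $\Espace$ and $\Fspace$) that $\{f_\ell\varepsilon_\mu\}$ is actually a free basis of $\Espace\cdot\Fspace$, so that the syndrome genuinely decomposes into $\lambda t$ independent coordinate equations rather than fewer, possibly entangled, ones over a non-domain ring; everything after that is a routine Smith-normal-form argument.
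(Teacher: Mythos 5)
Your proof is correct and follows essentially the same route as the paper: expand $\H$ and $\e$ in bases of $\Fspace$ and $\Espace$, use the product condition to make $\{f_\ell\varepsilon_\mu\}$ a well-behaved generating set of $\Espace\cdot\Fspace$, decouple the syndrome into $t$ independent $\Rq$-linear systems with coefficient matrix $\H_{\mathrm{ext}}$, and invoke the unique-decoding property (full free rank $n$, hence trivial right kernel). The only cosmetic difference is that the paper writes the argument with $\maxIdeal$-shaped bases and coefficients determined only modulo $\maxIdeal^{r-v(\varepsilon_\kappa)}$ and then shows all such representations yield the same $\e$; since $\Espace$ is assumed free, you correctly observe that $\{f_\ell\varepsilon_\mu\}$ is an honest free basis of $\Espace\cdot\Fspace$ (using $\lambda t\le m$), so the representations are unique and that bookkeeping collapses.
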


\begin{proof}
Let $f_1,\dots,f_\lambda$ be a basis of the free module $\Fspace$. Furthermore, let $\varepsilon_1,\dots,\varepsilon_t$ be an \mingenset of $\Mmodule$. To avoid too complicated sums in the derivation below, we use a slightly different notation as in the definition of \mingenset and write $\varepsilon_j = g_\maxIdeal^{v(\varepsilon_j)} \varepsilon_j^*$ for all $j=1,\dots,t$, where $\varepsilon^*_j \in \Rqm^*$ are units.

Due to $\phi^{\Espace \cdot \Fspace} = \phi^\Espace \phi^\Fspace$, we have that $f_i \varepsilon_\kappa$ for $i=1,\dots,\lambda$ and $\kappa=1,\dots,t$ is an \mingenset of the product space $\Espace\cdot\Fspace$.
Any entry of the parity-check matrix $\H$ has a unique representation $H_{i,j} = \sum_{\ell = 1}^{\lambda} h_{i,j,\ell} f_{\ell}$ for $h_{i,k,\ell} \in \Rq$.
Furthermore, any entry of error vector $\e = [e_1,\dots,e_n]$ can be represented as $e_j = \sum_{\kappa=1}^{t} e_{j,\kappa} \varepsilon_\kappa$, where the $e_{j,\kappa} \in \Rq$ are unique modulo $\maxIdeal^{r-v(\varepsilon_\kappa)}$.

We want to recover the error vector $\e$ from the syndrome $\s = [s_1,\dots,s_{n-k}]^\top$, which are related by definition as follows:
\begin{align*}
s_{i} &=\sum_{j=1}^{n}H_{i,j}e_{j} \\
&=\sum_{j=1}^{n}\sum_{\ell=1}^{\lambda}h_{i,j,\ell}f_{\ell}\sum_{\kappa=1}^{t}e_{j,\kappa}\varepsilon_{\kappa} \\
&=\sum_{j=1}^{n}\sum_{\ell=1}^{\lambda} \underbrace{\sum_{\kappa=1}^{t}h_{i,j,\ell}e_{j,\kappa}}_{=: \, s_{i,\ell,\kappa}} f_{\ell}\varepsilon_{\kappa} \\
&=\sum_{\ell=1}^{\lambda}\sum_{\kappa=1}^{t}s_{i,\ell,\kappa} f_{\ell}\varepsilon_{\kappa}. 
\end{align*}
Hence, for any representation $e_{j,\kappa}$ of the error $\e$, there is a representation $s_{i,\ell,\kappa}$ of $\s$. If we know the latter representation, it is easy to obtain the corresponding $e_{j,\kappa}$ under the assumed conditions: write
\begin{align*}
s_{i,\ell,\kappa} = \sum_{j=1}^{n}h_{i,j,\ell} e_{j,\kappa},\quad \ell=1,\dots,\lambda, \, \kappa=1,\dots,t, \,  i=1,\dots,n-k.
\end{align*}
We can rewrite this into $t$ independent linear systems of equations of the form
\begin{align}
\underbrace{\begin{bmatrix}
s_{1,1,\kappa} \\
s_{1,2,\kappa} \\
\vdots \\
s_{2,1,\kappa} \\
s_{2,2,\kappa} \\
\vdots
\end{bmatrix}}_{=: \, \s^{(\kappa)}} = \H_{\mathrm{ext}} \cdot \underbrace{\begin{bmatrix}
e_{1,\kappa} \\
e_{2,\kappa} \\
\vdots \\
e_{n,\kappa} \\
\vdots
\end{bmatrix}}_{=: \, \e^{(\kappa)}} \label{eq:erasure_decoding_system}
\end{align}
for each $\kappa=1,\dots,t$, where $\H_{\mathrm{ext}} \in \Rq^{(n-k)\lambda \times n}$ is independent of $\kappa$ and defined as in \eqref{eq:H_ext}.

By the unique decoding property, $\H_{\mathrm{ext}}$ has more rows than columns (i.e, $(n-k)\lambda\geq n$) and full free rank and rank (equal to $n$). Hence, each system in \eqref{eq:erasure_decoding_system} has a unique solution $\e^{(\kappa)}$.

It is left to show that any representation $s_{i,\ell,\kappa}$ of $\s$ in the \mingenset $f_i \varepsilon_\kappa$ of $\Espace \cdot \Fspace$ yields the same error vector $\e$.
Recall that $s_{i,\ell,\kappa}$ is unique modulo $\maxIdeal^{r-v(\varepsilon_i)}$ (note that $v(f_i \varepsilon_\kappa) = v(\varepsilon_\kappa)$).
Assume now that we have a different representation, say
\begin{equation*}
{\s'}^{(\kappa)} = \s^{(\kappa)} + g_{\maxIdeal}^{r-v(\varepsilon_\kappa)} \ve{\chi},
\end{equation*}
where $\ve{\chi} \in \Rq^{(n-k)\lambda}$. Then the unique solution ${\e'}^{(\kappa)}$ of the linear system ${\s'}^{(\kappa)} \H_\mathrm{ext} {\e'}^{(\kappa)}$ is of the form
\begin{align*}
{\e'}^{(\kappa)} = \e^{(\kappa)} + g_{\maxIdeal}^{r-v(\varepsilon_\kappa)} \ve{\mu}
\end{align*}
for some $\ve{\mu'} \in \Rq^{(n-k)\lambda}$.
Hence, ${\e'}^{(\kappa)} \equiv \e^{(\kappa)} \mod \maxIdeal^{r-v(\varepsilon_\kappa)}$, which means that the two representations ${\e'}^{(\kappa)}$ and $\e^{(\kappa)}$ belong to the same error~$\e$.

This shows that we can take \emph{any} representation of the syndrome vector $\s$, solve the system in \eqref{eq:erasure_decoding_system} for $\e^{(\kappa)}$ for $\kappa=1,\dots,t$, and obtain the unique error vector $\e$ corresponding to this syndrome $\s$ and support $\Espace$.
\qed
\end{proof}

\section{Failure Probability}\label{sec:failure}

Consider an error vector $\e$ that is chosen uniformly at random from the set of error vectors whose support is a module of a given rank profile $\phi \in \mathbb Z[x]/(x^r)$  and rank $\phi(1) = t$.
In this section, we derive a bound on the failure probability of the LRPC decoder over Galois rings for this error model.
The resulting bound does not depend on the whole rank profile $\phi$, but only on the rank $t$.

This section is the most technical and involved part of the paper.
Therefore, we derive the bound in three steps, motivated by the discussion on failure conditions in Section~\ref{sec:decoding}:
In Section~\ref{ssec:failure_product}, we derive an upper bound on the failure probability of the product condition.
Section~\ref{ssec:failure_syndrome} presents a bound on the syndrome condition failure probability conditioned on the event that the product condition is fulfilled.
Finally, in Section~\ref{ssec:failure_intersection}, we derive a bound on the intersection failure probability, given that the first conditions are satisfied.

The proof strategy is similar to the analogous derivation for LRPC codes over fields by Gaborit et al.~\cite{gaborit2013low}. However, our proof is much more involved for several reasons:
\begin{itemize}
\item we need to take care of the weaker structure of Galois rings and modules over them, e.g., zero divisors and the fact that not all modules have bases and thus module elements may not be uniquely represented in a minimal generating set;
\item we correct a few (rather minor) technical inaccuracies in the original proof; and
\item some for finite fields well-known prerequisite results are, to the best of our knowledge, not known over Galois rings.
\end{itemize}

Before analyzing the three conditions, we show the following result, whose implication is that if $\e$ is chosen randomly as described above, then the random variable $\Espace$, the support of the chosen error, is also uniformly distributed on the set of modules with rank profile $\phi$.
Note that the analogous statement for errors over a finite field follows immediately from linear algebra, but here, we need a bit more work.

\begin{lemma}\label{lem:uniform_support}
 Let $\phi(x) \in \mathbb Z[x]/(x^r)$ with nonnegative coefficients and let $\Espace$ be an $\Rq$-submodule of $\Rqm$ with rank profile $\phi(x)$. Then, the number of vectors $\e\in\Rqm^n$ whose support is equal to $\Espace$ only depends on $\phi(x)$.
\end{lemma}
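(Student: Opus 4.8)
The plan is to count the vectors $\e \in \Rqm^n$ whose support is \emph{exactly} $\Espace$ by first counting those whose support is \emph{contained} in $\Espace$ and then applying an inclusion–exclusion / Möbius argument over the lattice of submodules of $\Espace$, showing at each step that the relevant counts depend only on the rank profile. First I would observe that the number of vectors $\e \in \Rqm^n$ with $\supp(\e) \subseteq \Espace$ is precisely $|\Espace|^n$, since choosing such a vector amounts to choosing each of its $n$ entries freely in $\Espace$. The cardinality $|\Espace|$ is determined by the rank profile: if $\phi^{\Espace}(x) = \sum_{i=0}^{r-1}\phi_i x^i$, then $\Espace$ decomposes (via a \mingenset, as in Section~\ref{sec:preliminaries}) so that $|\Espace| = \prod_{i=0}^{r-1} |\maxIdeal^i/\maxIdeal^r|^{\,\phi_i \text{-ish}}$; concretely each generator $g_\maxIdeal^i a_{i,\ell_i}$ contributes a cyclic factor of size $p^{s(r-i)}$, so $|\Espace| = \prod_{i=0}^{r-1} p^{s(r-i)\phi_i}$, which depends only on $\phi$.

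Next I would set up the inclusion–exclusion. Let $N_=(\Mspace)$ denote the number of vectors in $\Rqm^n$ with support exactly $\Mspace$, and $N_\subseteq(\Mspace) = |\Mspace|^n$ the number with support contained in $\Mspace$. Then $N_\subseteq(\Espace) = \sum_{\Mspace \subseteq \Espace} N_=(\Mspace)$, and by Möbius inversion on the poset of submodules of $\Espace$,
\begin{align*}
N_=(\Espace) = \sum_{\Mspace \subseteq \Espace} \mu(\Mspace,\Espace)\, |\Mspace|^n,
\end{align*}
where $\mu$ is the Möbius function of that poset. The key point is that the summand $|\Mspace|^n$ depends only on $\phi^{\Mspace}$, and the Möbius value $\mu(\Mspace,\Espace)$ depends only on the isomorphism type of the interval $[\Mspace,\Espace]$ in the submodule lattice — equivalently, on the pair $(\phi^{\Mspace},\phi^{\Espace})$ together with the embedding, but crucially the \emph{number} of submodules $\Mspace \subseteq \Espace$ of each given rank profile $\psi$, call it $c(\psi,\phi)$, depends only on $\psi$ and $\phi$. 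This last fact follows because $\Espace$, as a finite module over the chain ring $\Rq$ with rank profile $\phi$, is determined up to isomorphism by $\phi$ (it is a direct sum of cyclic modules $\Rq/\maxIdeal^{r-i}$ with the multiplicities prescribed by $\phi$), so its lattice of submodules — and the rank-profile-graded counts therein — is an invariant of $\phi$. Grouping the sum by the rank profile $\psi$ of $\Mspace$ and using that $\mu(\Mspace,\Espace)$ also only depends on $(\psi,\phi)$ (again because the interval is determined by the isomorphism type), I get
\begin{align*}
N_=(\Espace) = \sum_{\psi \preceq \phi} c(\psi,\phi)\, \tilde\mu(\psi,\phi)\, \Big(\textstyle\prod_{i} p^{s(r-i)\psi_i}\Big)^{n},
\end{align*}
which manifestly depends only on $\phi$.

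The main obstacle I anticipate is making rigorous the claim that all submodules $\Espace$ with the same rank profile $\phi$ have ``the same'' submodule lattice in the strong sense needed — i.e. that the graded Möbius data is a function of $\phi$ alone. The clean way to handle this is to invoke the classification of finitely generated modules over a finite chain ring: $\Espace \cong \bigoplus_{i=0}^{r-1} (\Rq/\maxIdeal^{r-i})^{\phi_i}$ as an abstract $\Rq$-module, with the iso type recorded faithfully by $\phi$, and any $\Rq$-module isomorphism $\Espace \to \Espace'$ induces a poset isomorphism of their submodule lattices preserving rank profiles (since rank profile is an invariant of the abstract module, computable from Smith normal form). Hence $c(\psi,\phi)$ and the Möbius function are genuinely functions of $\phi$, not of the particular embedded copy $\Espace \subseteq \Rqm$. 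A secondary, more elementary route that sidesteps Möbius inversion entirely: argue directly that the number of \emph{surjective} $\Rq$-module homomorphisms $\Rq^n \twoheadrightarrow \Espace$ depends only on $\phi$ (each vector with support exactly $\Espace$ corresponds to such a surjection, sending the standard basis to the entries), and this count is a standard quantity — a ``$q$-analogue'' type product over the chain-ring structure — expressible purely in terms of $\phi$ and $n$; I would likely present this homomorphism-counting version as the cleaner argument, with the inclusion–exclusion as motivation.
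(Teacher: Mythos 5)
Your argument is correct in substance, but it takes a genuinely different route from the paper. The paper's proof stays inside the concrete matrix picture: it fixes one vector with support $\Espace$, uses that $\GL(n,\Rq)$ acts transitively by left multiplication on the set of vectors whose support equals $\Espace$ (two matrices over $\Rq$ with the same row module are left-equivalent, via the Smith normal form), and then applies the orbit--stabilizer theorem; the stabilizer is the set of $\A\in\GL(n,\Rq)$ with $(\A-\I_n)\D\T=\0$, and because the rows of $\T$ are $\Rq$-linearly independent this condition forces the relevant blocks of $\A-\I_n$ into prescribed powers of $\maxIdeal$, a description that depends only on the coefficients of $\phi$. You instead reduce everything to abstract module-isomorphism invariance: either by M\"obius inversion over the submodule lattice of $\Espace$, starting from the count $|\Espace|^n$ of vectors with support \emph{contained} in $\Espace$, or --- more cleanly, as you yourself suggest --- by identifying vectors with support exactly $\Espace$ with surjective $\Rq$-homomorphisms $\Rq^n\twoheadrightarrow\Espace$ and noting that this count is an invariant of the isomorphism class of $\Espace$, which is determined by $\phi$ (the Smith normal form exhibits $\Espace\cong\bigoplus_i(\Rq/\maxIdeal^{r-i})^{\phi_i}$). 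Your route needs no transitivity statement and no stabilizer computation, and it proves the invariance without producing any explicit count; the paper's route matches the orbit/stabilizer machinery it reuses elsewhere (Corollaries~\ref{cor:num_free} and~\ref{cor:num_mod}) and avoids appealing to the structure theory of modules over finite chain rings.

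One refinement in your write-up is not correct as stated: $\mu(\Mspace,\Espace)$ is \emph{not} a function of the pair of rank profiles $(\phi^{\Mspace},\phi^{\Espace})$ alone, so the grouped formula $\sum_{\psi}c(\psi,\phi)\,\tilde\mu(\psi,\phi)\,(\cdot)^n$ is not justified as written. For example, with $r\geq 2$ let $\Espace$ have rank profile $x^{r-2}+x^{r-1}$, so $\Espace\cong\Rq/\maxIdeal^2\oplus\Rq/\maxIdeal$; it contains two submodules of rank profile $x^{r-1}$, one with $\Espace/\Mspace\cong(\Rq/\maxIdeal)^2$ (interval M\"obius value $p^s$) and one with $\Espace/\Mspace\cong\Rq/\maxIdeal^2$ (a chain, M\"obius value $0$). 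This does not damage your proof, however: the module isomorphism $\Espace\to\Espace'$ transports the full sum $\sum_{\Mspace\subseteq\Espace}\mu(\Mspace,\Espace)|\Mspace|^n$ term by term (it preserves the lattice and the cardinalities $|\Mspace|$), which already yields $N_{=}(\Espace)=N_{=}(\Espace')$, and your surjection-counting variant avoids the issue entirely. I would drop the grouped display and present the homomorphism-counting argument as the proof.
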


\begin{proof}
 Let us write $\phi(x)=\sum_{i=0}^{r-1}n_ix^i$ with $N:=\phi(1)=\sum_{i=0}^{r-1}n_i=\rk_{\Rq}(\Espace)$, and let $\Gamma$ be a \mingenset for $\Espace$. Then, the vector $\e$ whose first $N$ entries are the element of $\Gamma$ and whose last $n-N$ entries are $0$ is a vector whose support is equal to $\Espace$. Moreover, all the vectors in $\Rqm^n$ whose support is equal to $\Espace$ are of the form $(\A\e^\top)^\top$, for $\A\in\GL(n,\Rq)$.
 Let us fix a basis of $\Rqm$ so that we can identify $\Rqm$ with $\Rq^m$. In this representation, $\e^\top$ corresponds to a matrix
 $\D\T$, where 
 \begin{equation*}
  \D=\begin{bmatrix} \I_{n_0} & & & &\\
  & g_\maxIdeal \I_{n_1} & & &\\
  & & \ddots & &\\
  & & & g_{\maxIdeal}^{r-1}\I_{n_{r-1}}&\\
  & & & & \0
\end{bmatrix}\in\Rq^{n\times n}
 \end{equation*}
  and $\T\in\Rq^{n\times m}$ has linearly independent rows over $\Rq$.
 Then, the vectors in $\Rqm^n$ whose support is equal to $\Espace$ correspond to matrices $\A\D\T$ for $\A\in \GL(n,\Rq)$, and their number is equal to the cardinality of the set
 \begin{equation*}
   \mathrm{Vec}(\Espace,n):=\{\A\D\T \mid \A\in\GL(n,\Rq)\}.
 \end{equation*}
 The group $\GL(n,\Rq)$ left acts on $\mathrm{Vec}(\Espace,n)$ and, by definition, its action is transitive. Hence, by the orbit-stabilizer theorem, we have
 \begin{equation*}
   |\mathrm{Vec}(\Espace,n)|=\frac{|\GL(n,\Rq)|}{|\mathrm{Stab}(\D\T)|},
 \end{equation*}
 where $\mathrm{Stab}(\D\T)=\mathrm{Stab}_{\GL(n,\Rq)}(\D\T)=\{\A\in \GL(n,\Rq) \mid \A\D\T=\D\T\}$.
 Hence, we need to count how many matrices $\A\in \GL(n,\Rq)$ satisfy
 \begin{equation*}
   (\A-\I_n)\D\T=0.
 \end{equation*}
 Let us call $\S:=\A-\I_n$ and divide it in $r+1$ block $\S_i\in\Rq^{n\times n_i}$ for $i\in\{0,\ldots,r-1\}$ and  $\S_r\in\Rq^{n\times (n-N)}$. Moreover, do the same with $\T$, dividing it in $r+1$ blocks $\T_i\in\Rq^{n_i\times m}$ for $i\in\{0,\ldots,r-1\}$ and  $\T_r\in\Rq^{(n-N)\times m}$. Therefore, we get
 \begin{equation*}
  \begin{bmatrix}\S_0 & \S_1 & \cdots & \S_{r-1} & \S_r\end{bmatrix}\begin{bmatrix}\T_0\\ g_{\maxIdeal} \T_1 \\ \vdots\\ g_{\maxIdeal}^{r-1}\T_{r-1}\\ \0\end{bmatrix}=\0. 
 \end{equation*}
 Since the rows of $\T$ are linearly independent over $\Rq$, this is true if and only if
 $\S_i\in\maxIdeal^{r-i}\Rq^{n\times n_i}$. This condition clearly only depends on the values $n_i$'s, and hence on $\phi(x)$.
\qed
\end{proof}

\subsection{Failure of Product Condition}\label{ssec:failure_product}

The product condition means that the product space of the randomly chosen support $\Espace$ and the fixed free module $\Fspace$ (in which the parity-check matrix coefficients are contained) has maximal rank profile $\phi^{\Espace \cdot \Fspace} = \phi^{\Espace} \phi^{\Fspace}$.
If $\Espace$ was a free module, the condition would translate to $\Espace \cdot \Fspace$ being a free module of rank $\lambda t$.
In fact, our proof strategy reduces the question if $\phi^{\Espace \cdot \Fspace} = \phi^{\Espace} \phi^{\Fspace}$ to the question whether a free module of rank $t$, which is related to $\Espace$, results in a product space with the free module $\Fspace$ of maximal rank profile.
Hence, we first study this question for products of free modules.
This part of the bound derivation is similar to the case of LRPC codes over finite fields (cf.~\cite{aragon2019low}), but the proofs and counting arguments are more involved since we need to take care of non-units in the ring.

\begin{lemma}\label{lem:product_space_full_dimension_probability_induction}
Let $\alpha',\beta$ be non-negative integers with $(\alpha'+1)\beta < m$.
Further, let $\Aspace',\Bspace$ be free submodules of $\Rqm$ of free rank $\alpha'$ and $\beta$, respectively, such that also $\Aspace'\cdot \Bspace$ is a free submodule of $\Rqm$ of free rank $\alpha'\beta$. For an element $a \in \Rqm^\ast$, chosen uniformly at random, let $\Aspace := \Aspace' + \langle a \rangle$. Then, we have
\begin{align*}
\Pr\big( \frk_{\Rq}(\Aspace\cdot\Bspace) < \alpha'\beta+\beta\big)
\leq \left(1-p^{-s\beta}\right) \sum_{j = 0}^{r-1} p^{s(r-j)[(\alpha'+1) \beta-m]}.
\end{align*}
\end{lemma}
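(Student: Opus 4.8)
The plan is to reduce the question to a rank condition over the residue field $\F_{p^s}$ plus a counting argument involving the valuation. First I would fix a free basis $b_1,\dots,b_\beta$ of $\Bspace$ and a free basis $a_1,\dots,a_{\alpha'}$ of $\Aspace'$, so that $\{a_i b_j\}$ is a free basis of $\Aspace'\cdot\Bspace$ (free rank $\alpha'\beta$ by hypothesis). For the random unit $a$, the module $\Aspace\cdot\Bspace = \Aspace'\cdot\Bspace + \langle a b_1,\dots,a b_\beta\rangle$. The free rank drops below $\alpha'\beta+\beta$ precisely when the images of $ab_1,\dots,ab_\beta$ in the quotient $\Rqm/\maxIdeal$-vector space $(\Rqm/\maxIdeal\Rqm)$ — more precisely in $\overline{\Rqm}^{\,m}$ modulo the $\F_{p^s}$-span of the reductions $\overline{a_i b_j}$ — fail to be $\beta$ linearly independent vectors modulo that span. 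So the event to bound is: the reduction $\bar a \in \overline{\Rqm}^\ast$ is such that $\bar a\overline{\Bspace}$ does not meet the $\alpha'\beta$-dimensional subspace $\overline{\Aspace'\cdot\Bspace}$ trivially (as $\F_{p^s}$-subspaces of the $m$-dimensional space $\overline{\Rqm}$, viewing the latter as $\F_{p^{sm}}$).

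The key step is then the following field-level estimate, which should be provable by a direct count: for a fixed $\beta$-dimensional $\F_{p^s}$-subspace $\overline{\Bspace}$ and a fixed $\alpha'\beta$-dimensional subspace $W$ of $\F_{p^{sm}}$, the number of $\bar a\in\F_{p^{sm}}^\ast$ with $\dim_{\F_{p^s}}(\bar a\overline{\Bspace}\cap W)\ge 1$ is at most (roughly) $p^{s\beta}\cdot p^{s\alpha'\beta}\cdot p^{sm}/(p^{sm}-1)\cdot$(something), giving a probability bound of order $p^{s[(\alpha'+1)\beta-m]}$ when $(\alpha'+1)\beta<m$; combined with the factor $(1-p^{-s\beta})$ coming from the fact that we only care about the "new" directions $\bar a b_1,\dots,\bar a b_\beta$ being independent among themselves and from $W$, this yields the $j=0$ term. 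To obtain the full sum $\sum_{j=0}^{r-1} p^{s(r-j)[(\alpha'+1)\beta-m]}$ one must account for the lift from the residue field to $\Rqm$: writing $a$ in Teichm\"uller form $a=\sum_{j} g_\maxIdeal^j a_j$ with $a_j\in T_{sm}$, a failure at "level $j$" (i.e.\ a coincidence modulo $\maxIdeal^{r-j}$ but not modulo $\maxIdeal^{r-j-1}$) contributes a term scaled by the number of free choices of the remaining coordinates, which produces the exponent $s(r-j)$ in place of $s$. Summing over $j=0,\dots,r-1$ gives exactly the stated bound. Lemma~\ref{lem:shiftedrank} and the structure of \mingensets will be used here to organize the level-by-level analysis, and the uniformity of $a$ over $\Rqm^\ast$ together with \eqref{eq:Rq_number_units}-type cardinality counts supplies the denominators.

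I would carry this out as follows: (1) set up the reduction to the residue field and identify the failure event as a subspace-intersection event; (2) prove the field-level counting lemma bounding the number of bad $\bar a$; (3) stratify the lift of $\bar a$ to $\Rqm^\ast$ by valuation level $j$ of the "collision," bounding the conditional probability at each level by $p^{s(r-j)[(\alpha'+1)\beta-m]}$; (4) apply a union bound over $j=0,\dots,r-1$ and factor out $(1-p^{-s\beta})$, which is the probability that $\bar a\overline{\Bspace}$ contributes $\beta$ genuinely new dimensions in the generic case — this normalization is what makes the leading constant correct. The main obstacle I expect is step (3): over a field the collision is an all-or-nothing linear-algebra event, but over $\Rqm$ a "partial collision" modulo an intermediate power of $\maxIdeal$ can still spoil the free rank without spoiling the rank, so one has to track which power of $g_\maxIdeal$ the defect sits in and count the lifts compatible with that — this is precisely where zero divisors make the argument genuinely harder than the finite-field case, and where care with \mingensets and the $\Ann(g_\maxIdeal^i)=\maxIdeal^{r-i}$ bookkeeping from Section~\ref{sec:preliminaries} is essential.
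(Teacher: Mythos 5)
Your step (1) is correct and is a genuinely different route from the paper's. The paper never passes to the residue field: it bounds the failure event by ``$\exists\, b\in\Bspace\setminus\{0\}$ with $ab\in\Aspace'\cdot\Bspace$'', replaces the uniform unit $a$ by a uniform element $c$ of the whole ring $\Rqm$ (which only increases the probability), and then union-bounds over $b\in\Bspace\setminus\{0\}$ \emph{stratified by the valuation} $v(b)=j$; the exponent $s(r-j)$ comes from $\Pr(cb\in\Aspace'\cdot\Bspace)=|\MaxIdeal^{j}\cap\Aspace'\cdot\Bspace|/|\MaxIdeal^{j}|=p^{s(r-j)(\alpha'\beta-m)}$, and the factor $1-p^{-s\beta}$ from $|\{b\in\Bspace: v(b)=j\}|=p^{s(r-j)\beta}-p^{s(r-j-1)\beta}$. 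Your residue-field reduction, once you actually carry out the (easy) union bound over $\bar b\in\overline{\Bspace}\setminus\{0\}$ --- note $\bar a$ is uniform on $\F_{p^{sm}}^{\ast}$ since every nonzero residue class contains the same number of units --- already proves the lemma: it gives failure probability at most $(p^{s\beta}-1)(p^{s\alpha'\beta}-1)/(p^{sm}-1)\le(1-p^{-s\beta})\,p^{s[(\alpha'+1)\beta-m]}$, which is exactly the single largest term of the stated sum (the $j=r-1$ term, not the $j=0$ term as you write), hence at most the full sum. So your approach, correctly executed, is cleaner and even slightly sharper than the printed bound.

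The genuine flaw is in your steps (3)--(4). The premise that ``a partial collision modulo an intermediate power of $\maxIdeal$ can still spoil the free rank'' is false: the free rank of $\Aspace\cdot\Bspace$ equals the $\F_{p^s}$-rank of the reduction of any generating matrix (in the Smith normal form, the unit invariant factors are precisely those that survive reduction modulo $\maxIdeal$), so the event $\frk_{\Rq}(\Aspace\cdot\Bspace)<(\alpha'+1)\beta$ depends \emph{only} on $\bar a$. Once the reductions $\bar a_i\bar b_j,\bar a\bar b_j$ are independent, any $\Rq$-relation among the generators has all coefficients in $\maxIdeal$ and cannot lower the free rank, so there is no ``level-$j$ defect'' left to count. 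Consequently the proposed Teichm\"uller, level-by-level lifting analysis has nothing to stratify and would not produce the terms $p^{s(r-j)[(\alpha'+1)\beta-m]}$ for $j<r-1$; as written, that step is reverse-engineered from the shape of the target bound rather than derived, and it is where the argument would break down if you tried to make it precise. Fortunately it is also unnecessary: drop (3)--(4), prove the field-level counting estimate, and conclude by comparing with the largest term of the stated sum as above.
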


\begin{proof}
First note that since $a$ is a unit in $\Rqm$, the mapping
$\varphi_a \, : \, \Bspace \to \Rqm, ~
b \mapsto ab$
is injective. This means that $a\Bspace$ is a free module with $\frk_{\Rq}(a\Bspace)=\frk_{\Rq}(\Bspace)=\beta$. Let $b_1,\dots,b_\beta$ be a basis of $\Bspace$.
Then, $a b_1, \dots, a b_\beta$ is a basis of $a\Bspace$.
Therefore, $\Aspace\cdot\Bspace$ is a free module with $\frk_{\Rq}(\Aspace\cdot\Bspace) = \alpha \beta+\beta$ if and only if $a\Bspace \cap \Aspace'\cdot\Bspace = \{0\}$.
Hence,
\begin{equation}
\Pr\big( \frk_{\Rq}(\Aspace\cdot\Bspace) < \alpha'\beta+\beta\big)
\leq \Pr\left( \exists b \in \Bspace  \setminus \{0\} : ab \in \Aspace'\cdot\Bspace \right). \label{eq:product_inequality_1}
\end{equation}
Let $c$ be chosen uniformly at random from $\Rqm$. Recall that $a$ is chosen uniformly at random from $\Rqm^*$. Then,
\begin{equation}
\Pr \! \left(  \exists b \in \Bspace \setminus \{0\} :  ab  \in  \Aspace' \cdot\Bspace   \right)
 \leq  \Pr \! \left( \exists b  \in  \Bspace  \setminus  \{0\}  :  cb  \in  \Aspace'\cdot\Bspace  \right). \label{eq:product_inequality_2}
\end{equation}
This holds since if $c$ is chosen to be a non-unit in $\Rqm$, then the statement ``$\exists \, b \in \Bspace \setminus \{0\} \, : \, cb \in \Aspace'\cdot\Bspace$'' is always true.
To see this, write $c = g_\maxIdeal c'$ for some $c' \in \Rqm$. Since $\beta>0$, there is a unit $b^* \in \Bspace \cap \Rqm^*$. 
Choose $b := g_\maxIdeal^{r-1}b^* \in \Bspace\setminus \{0\}$. Hence, $c b = g_\maxIdeal c' g_\maxIdeal^{r-1}b^* = 0$, and $b$ is from $\Bspace$ and non-zero.

Now we bound the right-hand side of \eqref{eq:product_inequality_2} as follows
\begin{align*}
\Pr\left( \exists  b \in  \Bspace  \setminus  \{0\}  :  cb \in  \Aspace'\cdot\Bspace \right)  &\leq \textstyle \sum_{b \in \Bspace \setminus \{0\}}  \Pr \left( cb \in \Aspace'\cdot\Bspace \right) \\
&=  \sum_{j = 0}^{r-1}  \sum_{b \in \Bspace :  v(b) = j}  \Pr\left( cb^* g_\maxIdeal^{j}  \in  \Aspace'\cdot\Bspace \right).
\end{align*}
Since $b^*$ is a unit in $\Rqm$, for uniformly drawn $c$, $c b^*$ is also uniformly distributed on $\Rqm$. Hence, $cb^* g_\maxIdeal^{j}$ is uniformly distributed on the ideal $\MaxIdeal^{j}$ of $\Rqm$ (the mapping $\Rqm \to \MaxIdeal^j$, $\chi \mapsto \chi g_\maxIdeal^j$ is surjective and maps equally many elements to the same image)
and we have
$\Pr\left( cb^* g_\maxIdeal^{j} \in \Aspace'\cdot\Bspace \right) = \frac{\left| \MaxIdeal^{j} \cap \Aspace'\cdot\Bspace \right|}{|\MaxIdeal^{j}|}$.
Let $v_1,\dots,v_{\alpha'\beta}$ be a basis of $\Aspace'\cdot\Bspace$. Then, by \eqref{eq:valuation}, an element $c \in \Aspace'\cdot\Bspace$ is in $\MaxIdeal^{j}$ if and only if it can be written as
$c = \sum_{i} \mu_i v_i$,
where $\mu_i \in \maxIdeal^j$ for all $i$.

Hence, 
$\left| \MaxIdeal^{j} \cap \Aspace'\cdot\Bspace \right| = |\maxIdeal^{j}|^{\alpha' \beta}$.
Moreover, we have
$|\MaxIdeal^{j}| = |\maxIdeal^{j}|^m$,
where
$|\maxIdeal^{j}| = p^{s(r-j)}$.
Overall, we get
\begin{align}
\Pr\left( \exists \, b \in  \Bspace  \setminus \{0\} \, : \, cb  \in  \Aspace'\cdot\Bspace \right) &\leq \sum_{j = 0}^{r-1} \sum_{b \in \Bspace \, : \, v(b) = j} p^{s(r-j)(\alpha' \beta-m)} \notag \\ 
&= \sum_{j = 0}^{r-1} \big|\{b \in \Bspace \, : \, v(b) = j\}\big| p^{s(r-j)(\alpha' \beta-m)}. \label{eq:product_inequality_3}
\end{align}
Furthermore, we have (note that $\MaxIdeal^{j+1} \subseteq \MaxIdeal^{j}$)
\begin{align}
\big|\{b \in \Bspace \, : \, v(b) = j\}\big| &= \Big|\big(\MaxIdeal^{j} \setminus \MaxIdeal^{j+1}\big) \cap \Bspace \Big| = \big|\MaxIdeal^{j} \cap \Bspace \big| - \big|\MaxIdeal^{j+1} \cap \Bspace \big| \notag \\
&= p^{s(r-j)\beta}-p^{s(r-j-1)\beta}. \label{eq:product_inequality_4}
\end{align}
Combining and simplifying \eqref{eq:product_inequality_1}, \eqref{eq:product_inequality_2}, \eqref{eq:product_inequality_3}, and \eqref{eq:product_inequality_4} we obtain the desired result.
\qed
\end{proof}

\begin{lemma}\label{lem:product_space_full_dimension_probability}
Let $\Bspace$ be a fixed free submodule of $\Rqm$ with $\frk_{\Rq}(\Bspace)=\beta$. For a positive integer $\alpha$ with $\alpha \beta <m$, let $\Aspace$ be drawn uniformly at random from the set of free submodules of $\Rqm$ of free rank $\alpha$.
Then,
\vspace{-0.005cm}
\begin{align*}
\Pr\left( \frk_{\Rq}(\Aspace\cdot\Bspace)< \alpha \beta \right)
\leq \left(1-p^{-s\beta}\right) \sum_{i=1}^{\alpha} \sum_{j = 0}^{r-1} p^{s(r-j)(i \beta-m)} \leq 2 \alpha p^{s(\alpha \beta-m)}
\end{align*}
\end{lemma}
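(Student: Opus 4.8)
The plan is to prove this by induction on $\alpha$, feeding Lemma~\ref{lem:product_space_full_dimension_probability_induction} into the inductive step. The base case $\alpha=1$ is immediate: every free submodule of free rank $1$ equals $\langle a\rangle$ for some $a\in\Rqm^*$, so $\Aspace\cdot\Bspace=a\Bspace$, and multiplication by the unit $a$ preserves the rank profile; hence $\frk_{\Rq}(\Aspace\cdot\Bspace)=\beta$ always, and the (nonnegative) bound holds trivially. For the inductive step I will first reduce to the following situation: draw $\Aspace'$ uniformly from the free rank-$(\alpha-1)$ submodules of $\Rqm$ and $a$ uniformly from $\Rqm^*$, independently, and set $\hat{\Aspace}:=\Aspace'+\langle a\rangle$. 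A counting argument — analogous to the orbit--stabilizer computation in the proof of Lemma~\ref{lem:uniform_support}, and using that a free $\Rq$-submodule of a free $\Rq$-module is a direct summand (so $\Rqm/\Aspace'$ is again free) — shows that conditioning on the event $\mathcal G:=\{\hat{\Aspace}\text{ is free of rank }\alpha\}$ makes $\hat{\Aspace}$ uniform over the free rank-$\alpha$ submodules.

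Next I claim this conditioning can only decrease the failure probability. By Proposition~\ref{prop:productpolynomials} and $\phi^{\Bspace}(x)=\beta$, comparing constant terms gives $\frk_{\Rq}(\hat{\Aspace}\cdot\Bspace)\le\beta\cdot\frk_{\Rq}(\hat{\Aspace})$, so if $\hat{\Aspace}\notin\mathcal G$ then $\frk_{\Rq}(\hat{\Aspace})\le\alpha-1$ and hence $\frk_{\Rq}(\hat{\Aspace}\cdot\Bspace)<\alpha\beta$; that is, $\mathcal G^c$ is contained in the ``bad'' event $B:=\{\frk_{\Rq}(\hat{\Aspace}\cdot\Bspace)<\alpha\beta\}$. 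Writing $b:=\Pr(B)$ and $\delta:=\Pr(\mathcal G^c)\le b$, one gets $\Pr_{\Aspace\text{ unif}}(\text{bad})=\Pr(B\mid\mathcal G)=\tfrac{b-\delta}{1-\delta}\le b$, since $b\le 1$. So it suffices to bound $\Pr(B)=\Pr(\frk_{\Rq}(\hat{\Aspace}\cdot\Bspace)<\alpha\beta)$. For this I condition on $\Aspace'$: if $\frk_{\Rq}(\Aspace'\cdot\Bspace)<(\alpha-1)\beta$ then, since free rank is subadditive over module sums (a direct sum surjects onto the sum, and free rank cannot grow under surjections because free modules are projective), $\frk_{\Rq}(\hat{\Aspace}\cdot\Bspace)=\frk_{\Rq}(\Aspace'\cdot\Bspace+a\Bspace)\le\frk_{\Rq}(\Aspace'\cdot\Bspace)+\beta<\alpha\beta$ regardless of $a$, and this contributes at most the right-hand side of the lemma for $\alpha-1$ by the induction hypothesis; if $\frk_{\Rq}(\Aspace'\cdot\Bspace)=(\alpha-1)\beta$, then $\Aspace',\Bspace$ satisfy the hypotheses of Lemma~\ref{lem:product_space_full_dimension_probability_induction} with $\alpha'=\alpha-1$ (note $(\alpha'+1)\beta=\alpha\beta<m$), bounding the conditional failure probability by $(1-p^{-s\beta})\sum_{j=0}^{r-1}p^{s(r-j)(\alpha\beta-m)}$. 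Summing the two contributions and telescoping yields the first displayed inequality.

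For the closed-form bound, drop $(1-p^{-s\beta})\le 1$, substitute $k=r-j$, and note that $i\beta-m\le-1$ for $1\le i\le\alpha$, so that $x_i:=p^{s(i\beta-m)}\le p^{-s}\le\tfrac12$ and $\sum_{k=1}^{r}x_i^k\le\frac{x_i}{1-x_i}\le 2x_i$; then $\sum_{i=1}^{\alpha}2p^{s(i\beta-m)}=2p^{-sm}\sum_{i=1}^{\alpha}p^{si\beta}\le 2p^{-sm}\alpha p^{s\alpha\beta}=2\alpha p^{s(\alpha\beta-m)}$. I expect the reduction-to-uniformity step to be the main obstacle: one must verify that sampling $a$ from \emph{all} units of $\Rqm$ (not only those completing $\Aspace'$ to a basis), then conditioning on $\mathcal G$, really reproduces the uniform distribution on free rank-$\alpha$ submodules, and that $\mathcal G^c$ lies inside the failure event so that the conditioning can only help — both hinging on the good behaviour of free submodules over the chain ring $\Rq$ (they are direct summands) together with Proposition~\ref{prop:productpolynomials}.
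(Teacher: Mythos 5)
Your argument is correct, and its skeleton coincides with the paper's: build the random free module by adjoining one uniformly random unit at a time, apply Lemma~\ref{lem:product_space_full_dimension_probability_induction} for each adjoined generator (with $\alpha'=\alpha-1$, so $(\alpha'+1)\beta=\alpha\beta<m$), accumulate the $\alpha$ contributions, and simplify the geometric sums exactly as the paper does to reach $2\alpha p^{s(\alpha\beta-m)}$. Where you genuinely differ is the distributional bookkeeping for the new generator. The paper samples $a_i$ uniformly from the elements that are linearly independent of $\Aspace_{i-1}$ (so the sequential process reproduces the uniform law on free rank-$\alpha$ submodules via a bases-counting remark) and then, in the step marked $(\ast)$, simply asserts that the conditional failure probability under this restricted law is at most the one obtained when $a_i$ is uniform on $\Rqm^*$, which is the hypothesis of Lemma~\ref{lem:product_space_full_dimension_probability_induction}. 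You instead take $a$ uniform on all of $\Rqm^*$, so that Lemma~\ref{lem:product_space_full_dimension_probability_induction} applies verbatim after conditioning on a good $\Aspace'$, and only afterwards relate this experiment to the uniform law on free rank-$\alpha$ modules: conditioning on $\hat{\Aspace}=\Aspace'+\langle a\rangle$ being free of rank $\alpha$ gives the uniform law (your orbit/counting sketch is sound, since $\GL(m,\Rq)$ acts transitively on free rank-$\alpha$ submodules, preserves $\Rqm\setminus\MaxIdeal=\Rqm^*$, and hence maps generating pairs of one target bijectively to those of another), and the conditioning can only lower the failure probability because, comparing constant coefficients in Proposition~\ref{prop:productpolynomials}, non-freeness of $\hat{\Aspace}$ forces $\frk_{\Rq}(\hat{\Aspace}\cdot\Bspace)\leq(\alpha-1)\beta<\alpha\beta$, so the complement of the conditioning event sits inside the failure event. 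This buys a rigorous replacement for the paper's asserted inequality $(\ast)$, at the cost of the conditional-uniformity count, which you only sketch but which is correct and is the analogue of the paper's uniformity remark; your observation that $\frk_{\Rq}(\Aspace'\cdot\Bspace)=(\alpha-1)\beta$ already forces $\Aspace'\cdot\Bspace$ to be free (rank equals free rank) is exactly what makes Lemma~\ref{lem:product_space_full_dimension_probability_induction} applicable, and the rest (base case, use of the induction hypothesis when $\frk_{\Rq}(\Aspace'\cdot\Bspace)<(\alpha-1)\beta$, final estimate) matches the paper's computation.
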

\vspace{-0.15cm}
\begin{proof}
Drawing a free submodule $\Aspace \subseteq \Rqm$ of rank $\alpha$ uniformly at random is equivalent to drawing iteratively
$\Aspace_0 := \{0\}, ~
\Aspace_i := \Aspace_{i-1} + \langle a_i \rangle$ for $i=1,\dots,\alpha$
where for each iteration $i$, the element $a_i \in \Rqm$ is chosen uniformly at random from the set of vectors that are linearly independent of $\Aspace_{i-1}$. The equivalence of the two random experiments is clear since the possible choices of the sequence $a_1,\dots,a_\alpha$ gives exactly all bases of free $\Rq$-submodules of $\Rqm$ of rank $\alpha$. Furthermore, all sequences are equally likely and each resulting submodule has the same number of bases that generate it (which equals the number of invertible $\alpha \times \alpha$ matrices over $\Rq$).
We have the following recursive formula for any $i=1,\dots,\alpha$:
\begin{align*}
&\Pr\big( \frk_{\Rq}(\Aspace_i \cdot\Bspace)< i \beta \big) \\
&= \Pr\big( \frk_{\Rq}(\Aspace_i \cdot\Bspace)< i \beta \land \frk_{\Rq}(\Aspace
_{i-1}\cdot\Bspace)=(i-1)\beta\big) \\
&\quad \quad + \underbrace{\Pr\big( \frk_{\Rq}(\Aspace_i \cdot\Bspace)< i \beta \land \frk_{\Rq}(\Aspace_{i-1}\cdot\Bspace)<(i-1)\beta\big)}_{\text{$\frk_{\Rq}(\Aspace_{i-1}\cdot\Bspace)<(i-1)\beta$ implies $\frk_{\Rq}(\Aspace_i \cdot\Bspace)< i \beta$}} \\
&= \Pr\big( \frk_{\Rq}(\Aspace_i \cdot\Bspace)< i \beta  \mid \frk_{\Rq}(\Aspace_{i-1}\cdot\Bspace)=(i-1)\beta\big) \\
&\quad \quad\cdot \underbrace{\Pr(\frk_{\Rq}(\Aspace_{i-1}\cdot\Bspace)=(i-1)\beta)}_{\leq 1} + \Pr\big(\frk_{\Rq}(\Aspace_{i-1}\cdot\Bspace)<(i-1)\beta\big) \\
&\overset{(\ast)}{\leq} \left(1-p^{-s\beta}\right) \sum_{j = 0}^{r-1} p^{s(r-j)(i \beta-m)} + \Pr\big(\frk_{\Rq}(\Aspace_{i-1}\cdot\Bspace)<(i-1)\beta\big),
\end{align*}
where ($\ast$) follows from Lemma~\ref{lem:product_space_full_dimension_probability_induction} by the following additional argument:
\begin{align*}
&\Pr\big( \frk_{\Rq}(\Aspace_i \cdot\Bspace)< i \beta  \mid \frk_{\Rq}(\Aspace_{i-1}\cdot\Bspace)=(i-1)\beta\, \land\, a_i \text{ linearly independent and}\\
&\quad \quad \text{its span trivially intersects with $\Aspace_{i-1}$}\big) \\
&\leq \Pr\big( \frk_{\Rq}(\Aspace_i \cdot\Bspace)< i \beta  \mid \frk_{\Rq}(\Aspace_{i-1}\cdot\Bspace)=(i-1)\beta \, \land \, a_i \text{ uniformly from } \Rqm^* \big) \\
&\leq \left(1-p^{-s\beta}\right) \sum_{j = 0}^{r-1} p^{s(r-j)(i \beta-m)},
\end{align*}
where the last inequality is exactly the statement of Lemma~\ref{lem:product_space_full_dimension_probability_induction}.
By 
$\Pr\big(\frk_{\Rq}(A_{0}B)<0\big) = 0$, we get
\begin{align*}
\Pr\left( \frk_{\Rq}(\Aspace\cdot\Bspace)< \alpha \beta \right) &= \Pr\big( \frk_{\Rq}(\Aspace_\alpha \cdot\Bspace)< \alpha \beta \big) \\
&= \left(1-p^{-s\beta}\right) \sum_{i=1}^{\alpha} \sum_{j = 0}^{r-1} p^{s(r-j)(i \beta-m)} \\
&\leq \alpha \underbrace{\left(1-p^{-s\beta}\right)}_{\leq 1} p^{-rs(m-\alpha \beta)} \underbrace{\sum_{j = 0}^{r-1} p^{js(m-\alpha \beta)}}_{\leq 2 p^{(r-1)s(m-\alpha \beta)}} \\
&\leq 2 \alpha p^{s(\alpha \beta-m)}.
\end{align*}
This proves the claim.
\qed
\end{proof}

Recall that the error support $\Espace$ is not necessarily a free module.
In the following sequence of statements, we will therefore answer the question how the results of \new{Lemma~\ref{lem:product_space_full_dimension_probability_induction} and Lemma~\ref{lem:product_space_full_dimension_probability}} can be used to derive a bound on the product condition failure probability.
To achieve this, we study the following free modules related to modules of arbitrary rank profile.
Note that this part of the proof differs significantly from LRPC codes over finite fields, where all modules are vector spaces, and thus free.

For a module $\Mspace \subseteq \Rqm$ with \mingenset $\Gamma$, define $\FreeModule{\Gamma} \subseteq \Rqm$ be the free module that is obtained from $\Mspace$ as follows: Let us write $\Gamma=\{g_\maxIdeal^ia_{i,\ell_i} \mid 0\leq i \leq r-1, 1 \leq \ell_i \leq \phi^{\Mspace}_i \}$, where the elements $a_{i,\ell_i}$ are all reduced modulo $\MaxIdeal^{r-i}$, that is, the Teichm\"uller representation of $a_{i,\ell_i}$ is of the form
$$ a_{i,\ell_i}=\sum_{j=0}^{r-i-1} g_\maxIdeal^jz_j, \quad z_j\in T_{tm}.$$
This is clearly possible since if we add to $a_{i,\ell_i}$ an element $y\in \MaxIdeal^{r-i}=(g_\maxIdeal^{r-i})$, then  $g_\maxIdeal^i(a_{i,\ell_i}+y)=g_\maxIdeal^ia_{i,\ell_i}$. At this point, we define  $\FreeModuleBasis{\Gamma} := \{a_{i,\ell_i} \mid 0\leq i \leq r-1, 1 \leq \ell_i \leq \phi^{\Mspace}_i\}$, and $\FreeModule{\Gamma}:=\langle \FreeModuleBasis{\Gamma}\rangle_{\Rq}$. The fact that $\FreeModule{\Gamma}$ is free directly follows from considering its Smith Normal Form, which tells us that in the matrix representation it is spanned by (some of) the rows of an invertible matrix in $\GL(m,\Rq)$. In particular, we have $\frk_{\Rq}(\FreeModule{\Gamma})=\rank_{\Rq}(\Mspace)$.

\begin{example}
 Let $p=2$, $s=1$, $r=3$ as in Example~\ref{exp:Rq}, $h(z) = z^3+z+1$ and $\Mspace$ a module with \mingenset $\Gamma = \{1,2z^2+2z,4z^2+2z+2\}$. Then, $\Mspace$ has a diagnonal matrix in Smith normal form of
  \begin{equation*}
    \begin{bmatrix}
      1&0&0\\
      0&2&0\\
      0&0&2
    \end{bmatrix}
  \end{equation*}
  and $\phi^{\Mspace}(z) = 2z+1$. Using the notation above, we observe $a_{0,1}=1$, $a_{1,1}=z^2+z$, $a_{1,2} = z^3+2z^2$ and $\FreeModule{\Gamma} = \langle \{1,z^2+z,z^3+2z^2\}  \rangle_{\Rq}$. 
\end{example}

At this point, for two different \mingensets $\Gamma, \Lambda$ of $\Mspace$, one could ask whether  $\FreeModule{\Gamma}= \FreeModule{\Lambda}$. The answer is affirmative, and it can be deduced from the following result.

\begin{proposition}\label{prop:freemodule_matrix}
 Let $n_0,\ldots,n_{r-1}\in \mathbb{N}$ be nonnegative integers, let $N:=n_0+\ldots+n_{r-1}$ and let $\D\in \Rq^{N\times N}$ be a diagonal matrix given by 
 \begin{equation*}
   \D:=\begin{bmatrix} \I_{n_0} & & & \\
  & g_\maxIdeal \I_{n_1} & & \\
  & & \ddots & \\
  & & & g_{\maxIdeal}^{r-1}\I_{n_{r-1}}
 \end{bmatrix}.
 \end{equation*}
 Moreover, let $\T_1,\T_2 \in \Rq^{r\times m}$ be such that the rows of $\T_i$ are $\Rq$-linearly independent for each $i\in\{1,2\}$. Then, the rowspaces of $\D\T_1$ and $\D\T_2$ coincide 
 if and only if for every $i,j \in \{0,\ldots,r-1\}$ there exist $\Y_{i,j}\in\Rq^{n_i\times n_j}$ with $\Y_{i,i}\in \GL(n_i,\Rq)$ and $\Z_i\in\Rq^{n_i\times m}$ such that $$\T_2=\Y\T_1+\Z,$$ where 
 \begin{equation*}
   \Y= \begin{bmatrix} \Y_{0,0} & g_\maxIdeal \Y_{0,1} & g_\maxIdeal^2 \Y_{0,2} & \cdots & g_\maxIdeal^{r-1} \Y_{0,r-1} \\ 
 \Y_{1,0} & \Y_{1,1} & g_\maxIdeal \Y_{1,2} & \cdots & g_\maxIdeal^{r-2} \Y_{1,r-1} \\
 \Y_{2,0} & \Y_{2,1} &  \Y_{2,2} & \cdots & g_\maxIdeal^{r-3} \Y_{2,r-1} \\
 \vdots & \vdots & \vdots & &\vdots \\
 \Y_{r-1,0} & \Y_{r-1,1} & \Y_{r-1,2} & \cdots & \Y_{r-1,r-1} \\
 \end{bmatrix}, \quad \Z=\begin{bmatrix} \0 \\
 g_\maxIdeal^{r-1}\Z_1 \\ g_\maxIdeal^{r-2}\Z_2 \\ \vdots \\ g_\maxIdeal \Z_{r-1}\end{bmatrix}. 
 \end{equation*}
 
\end{proposition}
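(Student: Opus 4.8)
\emph{Proof idea.} The plan is to turn the statement into a pair of matrix identities and then lean on two features of the finite chain ring $\Rq$: any matrix over $\Rq$ whose rows are $\Rq$-linearly independent has a right inverse (equivalently, its rows extend to a basis of the free module), so such a matrix can be cancelled on the right; and $\Rq$ is local, so a square matrix is a unit iff its reduction modulo $\maxIdeal$ is. Throughout I would write $\T_1,\T_2\in\Rq^{N\times m}$ in block rows $\T_i^{(0)},\dots,\T_i^{(r-1)}$ with $\T_i^{(j)}\in\Rq^{n_j\times m}$, so that $\D\T_i$ has block rows $g_\maxIdeal^{j}\T_i^{(j)}$; note every stack of consecutive blocks $\T_i^{(0)},\dots,\T_i^{(j)}$ still has $\Rq$-linearly independent rows, hence a right inverse. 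In closed form, the matrix $\Y$ in the statement has $(i,j)$-block $g_\maxIdeal^{\max\{0,\,j-i\}}\Y_{i,j}$ and the $i$-th block row of $\Z$ is $g_\maxIdeal^{r-i}\Z_i$ (read as $\0$ when $i=0$).

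For the ``if'' direction I would start from $\T_2=\Y\T_1+\Z$ with these shapes. First, $\D\Z=\0$, since the $i$-th block row of $\D\Z$ is $g_\maxIdeal^{i}\,g_\maxIdeal^{r-i}\Z_i=g_\maxIdeal^{r}\Z_i=\0$. Next, let $\Y'\in\Rq^{N\times N}$ be the matrix with $(i,j)$-block $g_\maxIdeal^{\max\{0,\,i-j\}}\Y_{i,j}$; using $i+\max\{0,j-i\}=\max\{i,j\}=j+\max\{0,i-j\}$ one checks block-by-block that $\D\Y=\Y'\D$. Modulo $\maxIdeal$ the strictly-below-diagonal blocks of $\Y'$ vanish, so $\Y'$ reduces to a block upper triangular matrix with invertible diagonal blocks $\Y_{i,i}$, hence $\Y'\in\GL(N,\Rq)$ because $\Rq$ is local. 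Therefore $\D\T_2=\D\Y\T_1+\D\Z=\Y'\D\T_1$, and $\rowspace(\D\T_1)=\rowspace(\D\T_2)$.

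For the ``only if'' direction, rowspace equality yields $\XX,\XX'\in\Rq^{N\times N}$ with $\D\T_2=\XX\D\T_1$ and $\D\T_1=\XX'\D\T_2$. Comparing the $i$-th block rows of the first identity gives $g_\maxIdeal^{i}\T_2^{(i)}=\sum_{j}g_\maxIdeal^{j}\XX_{i,j}\T_1^{(j)}$. Each term with $j\ge i$ is divisible by $g_\maxIdeal^{i}$, so the remaining sum $\sum_{j<i}g_\maxIdeal^{j}\XX_{i,j}\T_1^{(j)}$ has all entries in $\maxIdeal^{i}$; multiplying on the right by a right inverse of the stack of the first $i$ block rows of $\T_1$ forces each $g_\maxIdeal^{j}\XX_{i,j}$ to have entries in $\maxIdeal^{i}$, i.e.\ $\XX_{i,j}\in\maxIdeal^{i-j}\Rq^{n_i\times n_j}$ for $j<i$. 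Choosing $\Y_{i,j}$ with $\XX_{i,j}=g_\maxIdeal^{i-j}\Y_{i,j}$ for $j<i$ and setting $\Y_{i,j}:=\XX_{i,j}$ for $j\ge i$, the block identity becomes $g_\maxIdeal^{i}\bigl(\T_2^{(i)}-\sum_{j}g_\maxIdeal^{\max\{0,\,j-i\}}\Y_{i,j}\T_1^{(j)}\bigr)=\0$; hence $\T_2^{(i)}-\sum_{j}g_\maxIdeal^{\max\{0,\,j-i\}}\Y_{i,j}\T_1^{(j)}\in\Ann(g_\maxIdeal^{i})\Rq^{n_i\times m}=\maxIdeal^{r-i}\Rq^{n_i\times m}$, which is $\0$ for $i=0$ and of the form $g_\maxIdeal^{r-i}\Z_i$ for $i\ge1$. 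Stacking over $i$ produces $\T_2=\Y\T_1+\Z$ with exactly the prescribed block shapes.

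The one step still missing is that the diagonal blocks $\Y_{i,i}=\XX_{i,i}$ are invertible, and I expect this to be the main obstacle, since nothing above used the inclusion $\rowspace(\D\T_1)\subseteq\rowspace(\D\T_2)$. The plan is to re-run the previous paragraph with $\T_1$ and $\T_2$ interchanged, obtaining $\T_1=\widehat{\Y}\T_2+\widehat{\Z}$ with $\widehat{\Y}$ of the same shape and $(i,i)$-block $\XX'_{i,i}$. Substituting gives $(\I-\Y\widehat{\Y})\T_2=\Y\widehat{\Z}+\Z$; matching powers of $g_\maxIdeal$ (as in the computation of $\D\Z=\0$) shows the $i$-th block row of $\Y\widehat{\Z}+\Z$ lies in $\maxIdeal^{r-i}\Rq^{n_i\times m}$, so cancelling $\T_2$ on the right shows the $i$-th block row of $\I-\Y\widehat{\Y}$ lies in $\maxIdeal^{r-i}\subseteq\maxIdeal$, whence $(\Y\widehat{\Y})_{i,i}\equiv\I_{n_i}\pmod{\maxIdeal}$. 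Since the block shapes of $\Y$ and $\widehat{\Y}$ make every off-diagonal contribution to $(\Y\widehat{\Y})_{i,i}$ carry a factor $g_\maxIdeal$, this gives $\Y_{i,i}\widehat{\Y}_{i,i}\equiv\I_{n_i}\pmod{\maxIdeal}$, so $\Y_{i,i}$ is a unit modulo $\maxIdeal$ and, $\Rq$ being local, $\Y_{i,i}\in\GL(n_i,\Rq)$. This closes both directions.
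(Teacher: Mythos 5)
Your proof is correct, and for the ``only if'' direction it takes a genuinely different route from the paper's. The paper begins by invoking the (true but non-elementary) chain-ring fact that two $N\times m$ matrices with the same row module are related by a \emph{single invertible} $\XX\in\GL(N,\Rq)$; it then deduces the block shape of $\XX$ from the $\Rq$-linear independence of $\T_1$'s rows, and obtains invertibility of the diagonal blocks $\Y_{i,i}$ by writing $\XX=\U+g_\maxIdeal\LL$ with $\U$ block upper triangular and $g_\maxIdeal\LL$ nilpotent, so that $\U$ (hence each $\Y_{i,i}$) inherits invertibility from $\XX$. You instead start from the elementary double inclusion, producing two transition matrices $\XX,\XX'$ without assuming either is invertible; the block shape and the decomposition $\T_2=\Y\T_1+\Z$ then come out just as in the paper, via right inverses of partial stacks of $\T_1$'s rows. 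The price is that invertibility of $\Y_{i,i}$ is no longer automatic, and your fix---substituting $\T_1=\widehat{\Y}\T_2+\widehat{\Z}$ into $\T_2=\Y\T_1+\Z$, observing that the $i$-th block row of $\Y\widehat{\Z}+\Z$ lies in $\maxIdeal^{r-i}$, right-cancelling $\T_2$, and reducing modulo $\maxIdeal$ to get $\Y_{i,i}\widehat{\Y}_{i,i}\equiv\I_{n_i}\pmod{\maxIdeal}$, then appealing to locality of $\Rq$---is exactly the right extra step, and it is correct. Your ``if'' direction (constructing $\Y'$ with $\D\Y=\Y'\D$ and showing $\Y'\in\GL(N,\Rq)$ by reducing mod $\maxIdeal$) is essentially the paper's computation $\XX\D=\D\Y$ read in reverse; the paper leaves that direction implicit. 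Net effect: your argument is slightly longer on the $\Y_{i,i}$ step, but it replaces the appeal to the ``same row module $\Rightarrow$ related by an element of $\GL(N,\Rq)$'' lemma with a self-contained calculation, which is a cleaner dependency.
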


\begin{proof}
 The rowspaces of $\D\T_1$ and $\D\T_2$ coincide if and only if there exists a matrix $\XX\in \GL(N,\Rq)$ such that $\XX\D\T_1=\D\T_2$. Divide $\T_\ell$ in $r$ blocks $\T_{\ell,i}\in \Rq^{n_i \times m}$ for $i\in\{0,\ldots, r-1\}$ and divide $\XX$ in $r\times r$ blocks $\XX_{i,j}\in \Rq^{n_i\times n_j}$ for $i,j \in\{0,\ldots,r-1\}$. Hence, from $\XX\D\T_1=\D\T_2$ we get 
 \begin{equation}\label{eq:T_1T_2}
 \sum_{j=0}^{r-1} \XX_{i,j}g_\maxIdeal^j\T_{1,j}=g_\maxIdeal^i\T_{2,i}.
 \end{equation}
 Since the rows of $\T_{1}$ are $\Rq$-linearly independent,  \eqref{eq:T_1T_2} implies that $g_\maxIdeal^j\XX_{i,j} \in g_\maxIdeal^i\Rq^{n_i\times n_j}$. This shows that 
 \begin{equation*}
   \XX= \begin{bmatrix} \Y_{0,0} &  \Y_{0,1} & \Y_{0,2} & \cdots &  \Y_{0,r-1} \\ 
 g_\maxIdeal \Y_{1,0} & \Y_{1,1} &  \Y_{1,2} & \cdots &  \Y_{1,r-1} \\
 g_\maxIdeal^2  \Y_{2,0} & g_\maxIdeal \Y_{2,1} &  \Y_{2,2} & \cdots &  \Y_{2,r-1} \\
 \vdots & \vdots & \vdots & &\vdots \\
 g_\maxIdeal^{r-1} \Y_{r-1,0} & g_\maxIdeal^{r-2} \Y_{r-1,1} & g_\maxIdeal^{r-3} \Y_{r-1,2} & \cdots & \Y_{r-1,r-1} \\
 \end{bmatrix},
 \end{equation*}
 for some $\Y_{i,j}\in \Rq^{n_i\times n_j}$. Observe now that $\XX=\U+g_\maxIdeal\LL$, where 
\begin{align*}
\U &= \begin{bmatrix} \Y_{0,0} &  \Y_{0,1} & \Y_{0,2} & \cdots &  \Y_{0,r-1} \\ 
\0 & \Y_{1,1} &  \Y_{1,2} & \cdots &  \Y_{1,r-1} \\
\0 & \0 &  \Y_{2,2} & \cdots &  \Y_{2,r-1} \\
\vdots & \vdots & \vdots & &\vdots \\
\0 & \0 & \0 & \cdots & \Y_{r-1,r-1} \\
\end{bmatrix}, \\
\LL &=\begin{bmatrix} \0 &  \0 & \0 & \cdots &  \0 \\ 
\Y_{1,0} & \0 &  \0 & \cdots & \0 \\
g_\maxIdeal \Y_{2,0} &  \Y_{2,1} &  \0 & \cdots &  \0 \\
\vdots & \vdots & \vdots & &\vdots \\
g_\maxIdeal^{r-2} \Y_{r-1,0} & g_\maxIdeal^{r-3} \Y_{r-1,1} & g_\maxIdeal^{r-4} \Y_{r-1,2} & \cdots & \0 \\
\end{bmatrix}.
\end{align*}

 Since $\XX$ is invertible and $g_\maxIdeal\LL$ is nilpotent, then $\U$ is also invertible and hence  $\Y_{i,i}\in\GL(n_i,\Rq)$, for every $i\in\{0,\ldots,r-1\}$. At this point, observe that $ \XX\D= \D\Y$, from which we deduce
 $$ \D(\T_2-\Y\T_1)=\0.$$
 This implies that the $i$th block of $\T_2-\Y\T_1 \in \Ann(g_\maxIdeal^i)\Rq^{n_i\times m}=g_\maxIdeal^{r-i}\Rq^{n_i\times m}$ and we conclude.
\qed
\end{proof}

Let $\Mspace$ be an $\Rq$-submodule of $\Rqm$. Proposition \ref{prop:freemodule_matrix} implies that if we restrict to take  a \mingenset $\Gamma=\{g_\maxIdeal^ia_{i,j_i} \mid 0\leq i \leq r-1, 1 \leq j_i \leq \phi^{\Mspace}_i\}$ such that the  elements $a_{i,j_i}$ have Teichm\"uller representation 
\begin{equation}\label{eq:Teichmuller_mingensets}
a_{i,j_i}=\sum_{\ell=0}^{r-i-1}g_\maxIdeal^\ell z_\ell, \quad z_\ell\in T_{tm},
\end{equation}
then the module $\FreeModule{\Gamma}$ is well-defined and does not depend on the choice of $\Gamma$.

\begin{definition}\label{def:FreeModule}
 We define $\FreeModule{\Mspace}$ to be the space $\FreeModule{\Gamma}$, where $\Gamma=\{g_\maxIdeal^ia_{i,j_i} \mid 0\leq i \leq r-1, 1 \leq j_i \leq \phi^{\Mspace}_i\}$ is any \mingenset such that the elements  $a_{i,j_i}$ have Teichm\"uller representation  as in \eqref{eq:Teichmuller_mingensets}.
\end{definition}

The following two corollaries follow from observations in Proposition~\ref{prop:freemodule_matrix}. We will use them to show that for certain uniformly chosen modules $\Mspace$, the corresponding free modules $\FreeModule{\Mspace}$ are uniformly chosen from the set of free modules of rank equal to the rank of $\Mspace$. The proofs can be found in Appendix~\ref{app:proofs_freemodule_corollaries}.

Now, for a given $\Rq$-submodule of $\Rqm$ we consider all the free modules that comes from a \mingenset for $\Mspace$. More specifically, we set
\begin{align*}
\mathrm{Free}(\Mspace):=\Big\{ \Aspace \mid &\Aspace \mbox{ is free with } \frk_{\Rq}(\Aspace)=\rk_{\Rq}(\Mspace) \mbox{ and } \exists \{ a_{i,\ell_i}\} \mbox{  basis of } \Aspace \\ & \mbox{ such that } \{ g_\maxIdeal^ia_{i,\ell_i}\} \mbox{ is a \mingenset for } \Mspace   \Big\}.
\end{align*}
In fact, even though for the $\Rq$-module $\Mspace$ there is a unique free module $\FreeModule{\Mspace}$ as explained in Definition \ref{def:FreeModule}, we have more than one free module $\Aspace$ belonging to $\mathrm{Free}(\Mspace)$. The exact number of such free modules is given in the following Corollary.

\begin{corollary}\label{cor:num_free}
 Let $\Mspace$ be an $\Rq$-submodule of $\Rqm$ with rank profile $\phi^{\Mspace}(x)$ and rank $N := \rank_{\Rq}(\Mspace)$. Then $$|\mathrm{Free}(\Mspace)|=s^{(m-N)\sum_{i=1}^{r-1}i \phi^{\Mspace}_i }.$$
 In particular, $|\mathrm{Free}(\Mspace)|$ only depends on $\phi^{\Mspace}(x)$.
\end{corollary}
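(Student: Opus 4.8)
The plan is to reduce everything to a matrix computation and then count using Proposition~\ref{prop:freemodule_matrix}. Write $n_i := \phi^{\Mspace}_i$, so that $N = \sum_{i=0}^{r-1} n_i$, and let $\D := \mathrm{diag}(\I_{n_0}, g_\maxIdeal \I_{n_1},\dots,g_\maxIdeal^{r-1}\I_{n_{r-1}}) \in \Rq^{N\times N}$. By the discussion preceding the corollary, the elements of $\mathrm{Free}(\Mspace)$ are exactly the modules $\rowspace(\T)$ with $\T \in \Rq^{N\times m}$ having $\Rq$-linearly independent rows and $\rowspace(\D\T) = \Mspace$. First I would pass to a normal form: fixing any \mingenset of $\Mspace$ gives such a $\T_1$; since its rows are $\Rq$-linearly independent and $N\le m$, $\T_1$ is unimodular over the local ring $\Rq$, so there is $\P \in \GL(m,\Rq)$ with $\T_1\P = [\I_N \mid \0]$. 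Right multiplication by $\P$ preserves $\Rq$-linear independence, rank profiles and valuations, hence induces a bijection $\mathrm{Free}(\Mspace) \to \mathrm{Free}(\Mspace\P)$, and $\Mspace\P = \rowspace([\D \mid \0])$ has the same rank profile as $\Mspace$. This already shows $|\mathrm{Free}(\Mspace)|$ depends only on $\phi^{\Mspace}$, and lets us assume $\Mspace = \rowspace([\D \mid \0])$ with $\T_1 = [\I_N \mid \0]$.

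Next I would invoke Proposition~\ref{prop:freemodule_matrix}: the matrices $\T$ with $\rowspace(\D\T) = \Mspace$ and $\Rq$-linearly independent rows are precisely those of the form $\T = \Y[\I_N\mid\0] + \Z = [\,\Y + \Z^{(1)} \mid \Z^{(2)}\,]$, where $\Y$ runs over the set $\mathcal Y$ of block matrices whose $(i,j)$-block lies in $g_\maxIdeal^{\max(0,j-i)}\Rq^{n_i\times n_j}$ with invertible diagonal blocks, and $\Z = [\Z^{(1)} \mid \Z^{(2)}]$ (with $\Z^{(1)}\in\Rq^{N\times N}$, $\Z^{(2)}\in\Rq^{N\times(m-N)}$) runs over the block matrices whose $i$-th block of rows lies in $\maxIdeal^{r-i}$ (so the $0$-th block is $\0$). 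Since every entry of $\Z^{(1)}$ lies in $\maxIdeal$, $\Y + \Z^{(1)} \equiv \Y \pmod{\maxIdeal}$ has invertible reduction and hence lies in $\GL(N,\Rq)$; therefore $\rowspace(\T) = \rowspace\big([\,\I_N \mid (\Y+\Z^{(1)})^{-1}\Z^{(2)}\,]\big)$. The map $\W \mapsto \rowspace([\I_N\mid\W])$ is injective (the first $N$ coordinates of an element of this module read off the coefficient vector, hence determine the remaining coordinates), so $|\mathrm{Free}(\Mspace)|$ equals the number of distinct matrices $\W = (\Y+\Z^{(1)})^{-1}\Z^{(2)}$ obtained in this way.

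The core of the argument, which I expect to be the main obstacle, is then an elementary bookkeeping on powers of $g_\maxIdeal$. I would show: (i) $\mathcal Y$ is a subgroup of $\GL(N,\Rq)$ (closure under multiplication uses $\max(0,k-i)+\max(0,j-k)\ge\max(0,j-i)$, and each product's diagonal blocks stay invertible mod $\maxIdeal$); (ii) adding to any $\Y\in\mathcal Y$ a matrix whose $i$-th block of rows lies in $\maxIdeal^{r-i}$ again lands in $\mathcal Y$, because $r - i \ge \max(0,j-i)$ for all $0\le i,j\le r-1$, so $\Y+\Z^{(1)}$ sweeps out all of $\mathcal Y$ as the parameters vary; (iii) left multiplication by any $\Y\in\mathcal Y$ maps the set $\mathcal Z^{(2)}$ of admissible $\Z^{(2)}$ into itself, since the $(i,j)$-block of $\Y$ lies in $\maxIdeal^{\max(0,j-i)}$ and the $j$-th block of $\Z^{(2)}$ in $\maxIdeal^{r-j}$, whose product lies in $\maxIdeal^{\max(0,j-i)+(r-j)}\subseteq\maxIdeal^{r-i}$. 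Combining (i)--(iii): as the parameters range, $(\Y+\Z^{(1)})^{-1}$ sweeps out the whole group $\mathcal Y$ while $\Z^{(2)}$ ranges freely over $\mathcal Z^{(2)}$, so the admissible $\W$ form $\mathcal Y\cdot\mathcal Z^{(2)}$, which equals $\mathcal Z^{(2)}$ because $\I_N\in\mathcal Y$. Hence, using $|\maxIdeal^{r-i}| = p^{si}$ and $n_i = \phi^{\Mspace}_i$,
\[
|\mathrm{Free}(\Mspace)| = |\mathcal Z^{(2)}| = \prod_{i=1}^{r-1} \big|\maxIdeal^{r-i}\big|^{\,\phi^{\Mspace}_i(m-N)} = \big(p^{s}\big)^{(m-N)\sum_{i=1}^{r-1} i\,\phi^{\Mspace}_i},
\]
which is the claimed count (the base being the residue-field cardinality $p^{s} = |\Rq/\maxIdeal|$), and in particular it depends only on $\phi^{\Mspace}(x)$.
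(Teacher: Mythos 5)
Your proof is correct, and while it shares the paper's key tool (Proposition~\ref{prop:freemodule_matrix}), the counting argument is genuinely different. The paper fixes a base free module $\Nspace\in\mathrm{Free}(\Mspace)$ with row matrix $\T_\Nspace$, uses property \ref{p6} to rewrite $\mathrm{Free}(\Mspace)=\{\rowspace(\T_\Nspace+\Z):\Z\in H_\phi\}$, computes the number of $\Z$ that give back $\Nspace$ itself, and then divides $|H_\phi|$ by that number---implicitly invoking an equal-fiber-size property of $\Z\mapsto\rowspace(\T_\Nspace+\Z)$ that is asserted (``every module is counted \dots\ many times'') but not verified. You instead normalize by a right $\GL(m,\Rq)$-action to assume $\T_1=[\I_N\mid\0]$ (a legitimate reduction since right multiplication by an invertible matrix preserves linear independence, valuations, rank profiles, and hence the whole $\mathrm{Free}(\cdot)$ correspondence), bring every admissible $\T=[\Y+\Z^{(1)}\mid\Z^{(2)}]$ to its unique echelon representative $[\I_N\mid\W]$, and show directly that the resulting set of $\W$'s equals $\mathcal Z^{(2)}$---using that $G_\phi^*$ is a group stabilizing $\mathcal Z^{(2)}$ under left multiplication and that $\I_N\in G_\phi^*$. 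Combined with the injectivity of $\W\mapsto\rowspace([\I_N\mid\W])$, this gives the count as $|\mathcal Z^{(2)}|$ by an explicit bijection rather than by dividing, which makes the equal-fiber issue disappear entirely; that is the main thing your route buys. One further point worth recording: your final base is $p^s=|\Rq/\maxIdeal|$, coming from $|\maxIdeal^{r-i}|=p^{si}$, whereas the corollary in the paper writes $s$ in that position; your version is the correct one (the paper's $s$ there is a typographical slip, inconsistent with the rest of the paper's use of $s$ as the extension degree of $\Rq=\GR(p^r,s)$).
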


\begin{proof}
  See Appendix~\ref{app:proofs_freemodule_corollaries}.
 \qed 
\end{proof}

Now we estimate an opposite quantity.  For a fixed rank profile $\phi(x)$ with $\phi(1)\leq m$, and given a free $\Rq$-submodule $\Nspace$ of $\Rqm$ with free rank $\frk_{\Rq}(\Nspace)=\phi(1)$, for how many $\Rq$-submodules $\Mspace$ of $\Rqm$ with rank profile $\phi^{\Mspace}(x)=\phi(x)$ the module $\Nspace$ belongs to $\mathrm{Free}(\Mspace)$?
Formally, we want to estimate the cardinality of the set
$$ \mathrm{Mod}(\phi,\Nspace):=\left\{\Mspace \subseteq \Rqm \mid \phi^{\Mspace}(x)=\phi(x) \mbox{ and } \Nspace \in \mathrm{Free}(\Mspace) \right\}.$$

\begin{corollary}\label{cor:num_mod}
  Let $\phi(x)=\sum_{i=0}^{r-1}n_ix^i\in \mathbb{N}[x]/(x^r)$ such that $\phi(1)=N\leq m$, and let $\Nspace$ be a free $\Rq$-submodule of $\Rqm$ with free rank $\frk_{\Rq}(\Nspace)=N$. Then $$|\mathrm{Mod}(\phi,\Nspace)|=\frac{|\GL(N,\Rq)|}{|G_{\phi}^*|}.$$
  In particular, $|\mathrm{Mod}(\phi,\Nspace)|$ only depends on $\phi(x)$.
\end{corollary}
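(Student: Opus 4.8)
The plan is to prove this by an orbit–counting argument: I would construct a surjection from the finite set $X$ of \emph{ordered} bases of the free module $\Nspace$ onto $\mathrm{Mod}(\phi,\Nspace)$ all of whose fibres are right translates of one fixed subgroup $G_\phi^*\le\GL(N,\Rq)$ depending only on $\phi$, and then divide. Write $\phi(x)=\sum_{i=0}^{r-1}n_i x^i$, fix once and for all the consecutive partition $\{1,\dots,N\}=B_0\sqcup\dots\sqcup B_{r-1}$ with $|B_i|=n_i$, and put $\iota(j):=i$ for $j\in B_i$. For an ordered basis $\mathbf b=(b_1,\dots,b_N)$ of $\Nspace$ I set
\[
\Mspace(\mathbf b):=\bigl\langle g_\maxIdeal^{\iota(1)}b_1,\dots,g_\maxIdeal^{\iota(N)}b_N\bigr\rangle_{\Rq}.
\]

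First I would check that $\mathbf b\mapsto\Mspace(\mathbf b)$ maps into, and surjects onto, $\mathrm{Mod}(\phi,\Nspace)$. Every element of a basis of a free $\Rq$-module is a unit (otherwise multiplying it by $g_\maxIdeal^{r-1}$ gives a nontrivial relation), so $v(g_\maxIdeal^{\iota(j)}b_j)=\iota(j)$ and the cyclic module $\langle g_\maxIdeal^{\iota(j)}b_j\rangle$ has rank profile $x^{\iota(j)}$; since the $b_j$ are linearly independent, $\Mspace(\mathbf b)$ is the internal direct sum of these cyclic modules, and rank profiles add over direct sums, so $\phi^{\Mspace(\mathbf b)}=\sum_j x^{\iota(j)}=\phi$. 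Thus $\{g_\maxIdeal^{\iota(j)}b_j\}_j$ is an \mingenset of $\Mspace(\mathbf b)$ and $\Nspace\in\mathrm{Free}(\Mspace(\mathbf b))$ by construction. Conversely, by the definition of $\mathrm{Free}$, any $\Mspace\in\mathrm{Mod}(\phi,\Nspace)$ admits a basis $\{a_{i,\ell_i}\}$ of $\Nspace$ with $\{g_\maxIdeal^i a_{i,\ell_i}\}$ an \mingenset of $\Mspace$; listing this basis in the order fixed by the blocks $B_i$ produces a preimage.

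The crux — and what I expect to be the only genuinely delicate point — is to describe the fibres. The group $\GL(N,\Rq)$ acts simply transitively on $X$ by change of basis, $\mathbf b\cdot\A:=(\sum_i A_{i,1}b_i,\dots,\sum_i A_{i,N}b_i)$, so $|X|=|\GL(N,\Rq)|$. Fixing a basis of $\Rqm$ over $\Rq$, identify $\mathbf b$ with the matrix $\T\in\Rq^{N\times m}$ whose rows are the coordinate vectors of the $b_j$; these rows are $\Rq$-linearly independent and even linearly independent modulo $\maxIdeal$ (a nontrivial $\F_{p^s}$-relation among them would, after multiplication by $g_\maxIdeal^{r-1}$, lift to a nontrivial $\Rq$-relation). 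With $\D:=\mathrm{diag}(\I_{n_0},g_\maxIdeal\I_{n_1},\dots,g_\maxIdeal^{r-1}\I_{n_{r-1}})$ one has $\Mspace(\mathbf b)=\rowspace(\D\T)$ and $\Mspace(\mathbf b\cdot\A)=\rowspace(\D\A^{\top}\T)$, so $\A$ fixes $\Mspace(\mathbf b)$ iff $\rowspace(\D\A^{\top}\T)=\rowspace(\D\T)$, which by Proposition~\ref{prop:freemodule_matrix} means $\A^{\top}\T=\Y\T+\Z$ with $\Y,\Z$ of the shape there. Using independence modulo $\maxIdeal$ in the sharpened form that $\mathbf c\T\in\maxIdeal^{e}\Rq^{m}$ implies $\mathbf c\in\maxIdeal^{e}\Rq^{N}$, and applying it row by row to $(\A^{\top}-\Y)\T=\Z$, I get $\A^{\top}=\Y+\mathbf E$ where the $i$-th block row of $\mathbf E$ lies in $\maxIdeal^{r-i}$. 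Since $(j-i)^+<r-i$ whenever $j\ge i$ (with $x^+:=\max\{x,0\}$), the correction term $\mathbf E$ is absorbed and one concludes that the stabiliser of $\Mspace(\mathbf b)$ in $\GL(N,\Rq)$ is exactly
\[
G_\phi^*=\bigl\{\A\in\GL(N,\Rq)\ :\ \A_{i,j}\in\maxIdeal^{(i-j)^+}\Rq^{n_i\times n_j}\text{ for all }i,j\bigr\},
\]
that is, up to transposition, the group of invertible matrices of the form $\Y$ in Proposition~\ref{prop:freemodule_matrix}. Two things matter here: $G_\phi^*$ depends only on $\D$, hence only on $\phi$, not on $\mathbf b$; and it is a subgroup, being a finite submonoid of $\GL(N,\Rq)$ containing $\I$.

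Finally I would put the pieces together: by base-point independence of $G_\phi^*$, the fibres of $\mathbf b\mapsto\Mspace(\mathbf b)$ are precisely the orbits of $G_\phi^*$ acting on $X$ by right multiplication, so each fibre has $|G_\phi^*|$ elements; hence $X$ partitions into $|\mathrm{Mod}(\phi,\Nspace)|$ fibres of size $|G_\phi^*|$, which gives $|\mathrm{Mod}(\phi,\Nspace)|=|\GL(N,\Rq)|/|G_\phi^*|$. Since $|\GL(N,\Rq)|$ depends only on $N=\phi(1)$ and $|G_\phi^*|$ only on the block sizes $n_i$, the count depends only on $\phi(x)$, as claimed. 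The main obstacle is the third paragraph — turning the criterion of Proposition~\ref{prop:freemodule_matrix} (which carries the additive correction term $\Z$) into a clean, basis-independent stabiliser subgroup — where the reduction-modulo-$\maxIdeal$ step and the divisibility bookkeeping that makes $\Z$ harmless are the substantive work.
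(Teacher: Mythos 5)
Your proposal is correct and follows essentially the same route as the paper: both count the $|\GL(N,\Rq)|$ matrices $\T$ (equivalently, ordered bases of $\Nspace$) whose rowspace is $\Nspace$, invoke Proposition~\ref{prop:freemodule_matrix} to characterize when $\rowspace(\D\A\T)=\rowspace(\D\T)$, and use the $\Rq$-linear independence of the rows of $\T$ to absorb the additive $\Z$-term and conclude that the set of such $\A$ is exactly $G_\phi^*$, with $|G_\phi^*|$ depending only on $\phi$. The only cosmetic difference is that you phrase the action via $\mathbf b\cdot\A$, which produces $\A^\top\T$ and hence identifies the stabiliser as the transpose of the paper's $G_\phi^*$, but since transposition is cardinality-preserving this does not affect the count.
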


\begin{proof}
See Appendix~\ref{app:proofs_freemodule_corollaries}.
\qed
\end{proof}

We need the following lemma to derive a sufficient condition for the product of two modules to have a maximal rank profile. 

\begin{lemma}\label{lem:product_free_differentbases}
Let $\Mspace$ be an $\Rq$-submodule  of $\Rqm$, and let $\Aspace, \Bspace \in \mathrm{Free}(\Mspace)$. Moreover, let $\Nspace$ be a free $\Rq$-submodule of $\Rqm$.  
Then, $\Nspace\cdot \Aspace$ is free with   $\frk_{\Rq}(\Nspace\cdot \Aspace)=\rk_{\Rq}(\Mspace)\frk_{\Rq}(\Nspace)$  if and only if $\Nspace\cdot \Bspace$ is free with $\frk_{\Rq}(\Nspace\cdot \Bspace)=\rk_{\Rq}(\Mspace)\frk_{\Rq}(\Nspace)$.
\end{lemma}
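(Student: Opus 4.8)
The plan is to reduce both conditions appearing in the statement to the single, symmetric equality $\phi^{\Nspace\cdot\Mspace}(x)=\phi^{\Nspace}(x)\,\phi^{\Mspace}(x)$, which refers to neither $\Aspace$ nor $\Bspace$; the equivalence then follows immediately. First I would fix a basis $\nu_1,\dots,\nu_d$ of the free module $\Nspace$ (so $d=\frk_{\Rq}(\Nspace)$ and each $\nu_p$ is a unit of $\Rqm$, being a single $\Rq$-linearly independent element) and, using $\Aspace\in\mathrm{Free}(\Mspace)$, a basis $\{a_{i,\ell_i}:0\le i\le r-1,\ 1\le\ell_i\le\phi^{\Mspace}_i\}$ of $\Aspace$ for which $\{g_{\maxIdeal}^{i}a_{i,\ell_i}\}$ is a \mingenset of $\Mspace$; in particular each $a_{i,\ell_i}$ is a unit and $N:=\rk_{\Rq}(\Mspace)=\sum_i\phi^{\Mspace}_i=\frk_{\Rq}(\Aspace)$, so $\rk_{\Rq}(\Mspace)\,\frk_{\Rq}(\Nspace)=Nd$. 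Since $\Nspace\cdot\Aspace=\langle\nu_p a_{i,\ell_i}\rangle_{\Rq}$ is generated by $Nd$ elements, the statement ``$\Nspace\cdot\Aspace$ is free with $\frk_{\Rq}(\Nspace\cdot\Aspace)=Nd$'' is equivalent to ``the $Nd$ elements $\nu_p a_{i,\ell_i}$ are $\Rq$-linearly independent''.

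Next I would show this last condition is equivalent to $\phi^{\Nspace\cdot\Mspace}=\phi^{\Nspace}\phi^{\Mspace}$, arguing in two directions. If the $\nu_p a_{i,\ell_i}$ are independent, then $\Nspace\cdot\Aspace$ is free with this family as a basis, hence splits as an internal direct sum $\bigoplus_{i=0}^{r-1}\Pspace^{(i)}$, where $\Pspace^{(i)}:=\langle\nu_p a_{i,\ell_i}:1\le p\le d,\ 1\le\ell_i\le\phi^{\Mspace}_i\rangle_{\Rq}$ is free of rank $d\,\phi^{\Mspace}_i$. Since $\{g_{\maxIdeal}^{i}a_{i,\ell_i}\}$ generates $\Mspace$, we have $\Nspace\cdot\Mspace=\langle g_{\maxIdeal}^{i}(\nu_p a_{i,\ell_i})\rangle_{\Rq}=\sum_i\maxIdeal^i\Pspace^{(i)}$, which is the internal direct sum $\bigoplus_i\maxIdeal^i\Pspace^{(i)}$ because the $\Pspace^{(i)}$ are in direct sum inside the free module $\Nspace\cdot\Aspace$; so by Lemma~\ref{lem:shiftedrank} and additivity of rank profiles over direct sums, $\phi^{\Nspace\cdot\Mspace}(x)=\sum_i x^i\phi^{\Pspace^{(i)}}(x)=\sum_i d\,\phi^{\Mspace}_i\,x^i=\phi^{\Nspace}(x)\,\phi^{\Mspace}(x)$ (using $\phi^{\Nspace}(x)=d$ since $\Nspace$ is free of rank $d$). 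Conversely, if $\phi^{\Nspace\cdot\Mspace}=\phi^{\Nspace}\phi^{\Mspace}=d\,\phi^{\Mspace}$, then $\Nspace\cdot\Mspace=\langle g_{\maxIdeal}^{i}(\nu_p a_{i,\ell_i})\rangle_{\Rq}$ is a generating family with exactly $d\,\phi^{\Mspace}_i=\phi^{\Nspace\cdot\Mspace}_i$ elements of valuation $i$ (the $\nu_p a_{i,\ell_i}$ being units), hence a \mingenset of $\Nspace\cdot\Mspace$ by the characterisation in Section~\ref{sec:preliminaries}; consequently its ``unit parts'' $\nu_p a_{i,\ell_i}$ are $\Rq$-linearly independent (a property of \mingensets also recorded in Section~\ref{sec:preliminaries}), i.e.\ $\Nspace\cdot\Aspace$ is free with $\frk_{\Rq}(\Nspace\cdot\Aspace)=Nd$.

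Putting the two directions together, ``$\Nspace\cdot\Aspace$ is free with $\frk_{\Rq}(\Nspace\cdot\Aspace)=\rk_{\Rq}(\Mspace)\,\frk_{\Rq}(\Nspace)$'' is equivalent to $\phi^{\Nspace\cdot\Mspace}=\phi^{\Nspace}\phi^{\Mspace}$; since this middle condition involves neither free module, the very same chain of equivalences with $\Bspace$ in place of $\Aspace$ finishes the proof. I expect the only genuinely delicate step to be the rank-profile computation in the forward direction --- concretely, that for $\P$ with $\Rq$-linearly independent rows and $\D'=\mathrm{diag}(\I_{d\phi^{\Mspace}_0},\,g_{\maxIdeal}\I_{d\phi^{\Mspace}_1},\,\dots)$, the row space of $\D'\P$ has rank profile $\sum_i d\,\phi^{\Mspace}_i\,x^i$ (equivalently, the additivity of rank profiles over direct sums used above). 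This is seen by clearing the columns of $\P$ by a matrix in $\GL(m,\Rq)$ on the right to bring $\D'\P$ into Smith normal form, and it is the single place where the non-freeness of $\Rq$-modules really needs to be handled; the rest is bookkeeping within the \mingenset formalism of Section~\ref{sec:preliminaries}.
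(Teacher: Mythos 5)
Your proof is correct, but it takes a genuinely different route from the paper's. The paper never leaves the level of bases: it fixes bases of $\Aspace$ and $\Bspace$ whose $g_\maxIdeal$-scalings are \mingensets of $\Mspace$, invokes Proposition~\ref{prop:freemodule_matrix} to write the generators of $\Bspace$ as $a_{i,j_i}+g_\maxIdeal x_{i,j_i}$, and then transfers $\Rq$-linear independence of $\{u_\ell a_{i,j_i}\}$ to $\{u_\ell(a_{i,j_i}+g_\maxIdeal x_{i,j_i})\}$ by an $r$-step annihilator induction (repeatedly multiplying a putative dependency by powers of $g_\maxIdeal$); symmetry then gives the equivalence. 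You instead identify a condition symmetric in the two free modules, namely $\phi^{\Nspace\cdot\Mspace}=\phi^{\Nspace}\phi^{\Mspace}$, and show each side of the lemma is equivalent to it --- in effect a direct proof of the ``one factor free'' case of Proposition~\ref{prop:maximal_product_space} in both directions, obtained \emph{before} that proposition and without circularity, since you only use Lemma~\ref{lem:shiftedrank} and the Section~\ref{sec:preliminaries} facts (your Smith-normal-form computation of the rank profile of the row space of $\D'\P$, via $\P\T''=(\I\mid\0)$ for independent rows, is sound). What the paper's route buys is self-containedness: it needs no claim about rank profiles and rests only on Proposition~\ref{prop:freemodule_matrix}. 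What your route buys is a cleaner invariant formulation that would let the lemma and the free case of Proposition~\ref{prop:maximal_product_space} be merged; the price is that your converse step leans on two facts you cite from Section~\ref{sec:preliminaries} rather than prove: (i) a generating set with exactly $\phi_i$ elements of valuation $i$ is a \mingenset and hence has $\Rq$-independent unit parts --- the paper asserts this ``alternative characterisation'' without proof, and the nontrivial direction needs a short cardinality-plus-annihilator argument; and (ii) $Nd$ generators of a free module of free rank $Nd$ form a basis (true here by counting, since all modules are finite). Both are fair to cite, but in a written-out version you should make them explicit.
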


\begin{proof}
 Let $A=\{a_{i,j_i} \mid 0\leq i \leq r-1, 1 \leq j_i \leq \phi^{\Mspace}_i\}$ be a basis of $\Aspace$ and  $B=\{b_{i,j_i} \mid 0\leq i \leq r-1, 1 \leq j_i \leq \phi^{\Mspace}_i\}$ be a basis of $\Bspace$ such that  $\Gamma:= \{g_\maxIdeal^ia_{i,j_i} \mid 0\leq i \leq r-1, 1 \leq j_i \leq \phi^{\Mspace}_i\}$ and $\Lambda:= \{g_\maxIdeal^ib_{i,j_i} \mid 0\leq i \leq r-1, 1 \leq j_i \leq \phi^{\Mspace}_i\}$ are two \mingensets for $\Mspace$, and let $\Delta=\{u_1,\ldots,u_t\}$ be a basis for $\Nspace$. Assume that $\Delta\cdot A=\{u_{\ell}a_{i,j_i} \}$ has $\rk_{\Rq}(\Mspace)\frk_{\Rq}(\Nspace)$ linearly independent elements over $\Rq$. By symmetry, it is enough to show that this implies $\Nspace\cdot \Bspace$ is free. 
 By Proposition \ref{prop:freemodule_matrix}, we know that there exists $x_{i,j_i} \in \Rqm$ such that $\Bspace=\langle \{a_{i,j_i}+g_\maxIdeal x_{i,j_i} \mid 0\leq i \leq r-1, 1 \leq j_i \leq \phi^{\Mspace}_i \} \rangle_{\Rq}$. Hence, we need to prove that the elements $\{u_{\ell}(a_{i,j_i}+g_mx_{i,j_i})\}$ are linearly independent over  $\Rq$.  Suppose that there exists $\lambda_{\ell,i,j_i}\in\Rq$ such that
 $$ \sum_{\ell,i,j_i}\lambda_{\ell,i,j_i}u_{\ell}(a_{i,j_i}+g_mx_{i,j_i})=0,$$
 hence, rearranging the sum, we get
 \begin{equation}\label{eq:ascending_equality}  \sum_{\ell,i,j_i}\lambda_{\ell,i,j_i}u_{\ell}a_{i,j_i}=- g_\maxIdeal\sum_{\ell,i,j_i}\lambda_{\ell,i,j_i}u_{\ell}x_{i,j_i}. \end{equation}
 Multiplying both sides by $g_\maxIdeal^{r-1}$ we obtain
 $$ \sum_{\ell,i,j_i}\lambda_{\ell,i,j_i}g_\maxIdeal^{r-1}u_{\ell}a_{i,j_i}=0,$$
 and since by hypothesis $\{u_{\ell}a_{i,j_i}\}$ is a basis, this implies 
 $\lambda_{\ell,i,j_i}\in \Ann(g_\maxIdeal^{r-1})=\maxIdeal$ and therefore there exist
 $\lambda'_{\ell,i,j_i}\in \Rq$, such that $\lambda_{\ell,i,j_i}=\g_\maxIdeal\lambda'_{\ell,i,j_i}$. Thus, \eqref{eq:ascending_equality} becomes
 $$g_\maxIdeal\sum_{\ell,i,j_i}\lambda'_{\ell,i,j_i}u_{\ell}a_{i,j_i}=- g_\maxIdeal^2\sum_{\ell,i,j_i}\lambda'_{\ell,i,j_i}u_{\ell}x_{i,j_i}.$$
 Now, multiplying both sides by $g_\maxIdeal^{r-2}$ and with the same reasoning as before, we obtain that all the $\lambda'_{\ell,i,j_i}\in\maxIdeal$ and the right-hand side of \eqref{eq:ascending_equality} belongs to $\maxIdeal^3$. Iterating this process $r-2$ times, we finally get that the right-hand side of \eqref{eq:ascending_equality} belongs to $\maxIdeal^r=(0)$, and therefore \eqref{eq:ascending_equality} corresponds to
 $$ \sum_{\ell,i,j_i}\lambda_{\ell,i,j_i}u_{\ell}a_{i,j_i}=0,$$
 which, by hypothesis implies $\lambda_{\ell,i,j_i}=0$ for every $\ell,i,j_i$. This concludes the proof, showing that the elements $\{u_{\ell}(a_{i,j_i}+g_mx_{i,j_i})\}$ are linearly independent over  $\Rq$.
\qed
\end{proof}

With the aid of Lemma~\ref{lem:product_free_differentbases} we can show that the property for the product of two arbitrary $\Rq$-modules $\Mspace_1, \Mspace_2$ of having maximal rank profile (according to Definition \ref{def:ordering_rankprofiles}) depends on the  free modules $\FreeModule{\Mspace_1}$ and $\FreeModule{\Mspace_2}$ and on their product.

\begin{proposition}\label{prop:maximal_product_space}
Let $\Mspace_1$ and $\Mspace_2$ be submodules of $\Rqm$.
If the product of free modules $\FreeModule{\Mspace_1}$ and $\FreeModule{\Mspace_2}$ has free rank 
\begin{equation*}
\frk_{\Rq}\!\left(\FreeModule{\Mspace_1}\FreeModule{\Mspace_2}\right) = \rank_{\Rq}(\FreeModule{\Mspace_1}) \rank_{\Rq}(\FreeModule{\Mspace_2}),
\end{equation*}
then we have
\begin{equation*}
\phi^{\Mspace_1 \cdot \Mspace_2}(x) = \phi^{\Mspace_1}(x) \phi^{\Mspace_2}(x).
\end{equation*}

Moreover, if we assume that $\deg(\phi^{\Mspace_1}(x))+\deg(\phi^{\Mspace_2}(x))<r$, then also the converse is true.
In particular, the converse is true if one of the two modules is free.
\end{proposition}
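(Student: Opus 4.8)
The plan is to reduce both implications to a single statement about the column span of a ``diagonal $\times$ arbitrary'' matrix. First I would fix, for $i\in\{1,2\}$, a \mingenset $\Gamma_i=\{g_\maxIdeal^{\,j}a^{(i)}_{j,\ell}:0\le j\le r-1,\ 1\le\ell\le\phi^{\Mspace_i}_j\}$ of $\Mspace_i$ of the reduced-Teichm\"uller kind realizing $\FreeModule{\Mspace_i}=\langle a^{(i)}_{j,\ell}\rangle_{\Rq}$ as in Definition~\ref{def:FreeModule}; each $a^{(i)}_{j,\ell}$ has valuation $0$, hence is a unit of $\Rqm$. Since $\Gamma_1\cdot\Gamma_2$ generates $\Mspace_1\cdot\Mspace_2$, writing $\alpha=(j_1,\ell_1,j_2,\ell_2)$, $v_\alpha:=a^{(1)}_{j_1,\ell_1}a^{(2)}_{j_2,\ell_2}$ (a unit), $e_\alpha:=j_1+j_2$ and $N:=\phi^{\Mspace_1}(1)\phi^{\Mspace_2}(1)$, one gets $\Mspace_1\cdot\Mspace_2=\langle g_\maxIdeal^{\,e_\alpha}v_\alpha:\alpha\rangle_{\Rq}$ and $\FreeModule{\Mspace_1}\FreeModule{\Mspace_2}=\langle v_\alpha:\alpha\rangle_{\Rq}$ (the latter with $N$ generators). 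Fixing an $\Rq$-basis of $\Rqm$, I form $\V\in\Rq^{N\times m}$ from the coordinate vectors of the $v_\alpha$ and $\D=\mathrm{diag}(g_\maxIdeal^{\,e_\alpha})\in\Rq^{N\times N}$; then $\Mspace_1\cdot\Mspace_2$ is the row space of $\D\V$, so (by transpose-invariance of the Smith normal form) its rank profile equals that of the column space $\D\,\mathcal U$, where $\mathcal U\subseteq\Rq^N$ is the column space of $\V$. Two observations do the bookkeeping: (i) $\D\Rq^N=\bigoplus_\alpha\maxIdeal^{\,e_\alpha}$ has rank profile $\sum_k\#\{\alpha:e_\alpha=k\}\,x^k=\phi^{\Mspace_1}(x)\phi^{\Mspace_2}(x)$ in $\ZZ[x]/(x^r)$; (ii) $\phi^{\mathcal U}=\phi^{\FreeModule{\Mspace_1}\FreeModule{\Mspace_2}}$, so (counting, as there are $N$ generators) $\mathcal U=\Rq^N\iff\frk_{\Rq}(\FreeModule{\Mspace_1}\FreeModule{\Mspace_2})=N\iff$ the $v_\alpha$ are $\Rq$-linearly independent. (Indices $\alpha$ with $e_\alpha\ge r$ contribute zero generators and are simply dropped from $\D$ and $\V$; under the degree hypothesis there are none.)

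For the first implication, if $\frk_{\Rq}(\FreeModule{\Mspace_1}\FreeModule{\Mspace_2})=N$ then by (ii) $\mathcal U=\Rq^N$, so $\D\mathcal U=\D\Rq^N$, whence the rank profile of $\Mspace_1\cdot\Mspace_2$ equals that of $\D\Rq^N$, which is $\phi^{\Mspace_1}\phi^{\Mspace_2}$ by (i).

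For the converse I would argue the contrapositive. Assume $\deg\phi^{\Mspace_1}+\deg\phi^{\Mspace_2}<r$ and $\frk_{\Rq}(\FreeModule{\Mspace_1}\FreeModule{\Mspace_2})<N$; then $\mathcal U\subsetneq\Rq^N$ by (ii). Since $e_\alpha=j_1+j_2\le\deg\phi^{\Mspace_1}+\deg\phi^{\Mspace_2}\le r-1$ for every $\alpha$, the kernel of $\D:\Rq^N\to\Rq^N$ is $\bigoplus_\alpha\maxIdeal^{\,r-e_\alpha}\subseteq\maxIdeal\Rq^N$; hence $\D\mathcal U=\D\Rq^N$ would force $\mathcal U+\maxIdeal\Rq^N=\Rq^N$, and then $\mathcal U=\Rq^N$ by Nakayama's lemma, a contradiction. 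So $\D\mathcal U\subsetneq\D\Rq^N$, and because the cardinality of a finite $\Rq$-module is determined by its rank profile, the profile of $\Mspace_1\cdot\Mspace_2$ (that of $\D\mathcal U$) is strictly $\prec$, in particular $\ne$, that of $\D\Rq^N$, i.e.\ $\phi^{\Mspace_1}\phi^{\Mspace_2}$. The ``in particular'' is the special case $\deg\phi^{\Mspace_2}=0$.

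I expect the delicate point to be exactly the place where the degree hypothesis enters twice and indispensably: first, to identify ``$\frk_{\Rq}(\FreeModule{\Mspace_1}\FreeModule{\Mspace_2})<N$'' with ``the rows $v_\alpha$ of $\V$ are $\Rq$-dependent'' --- without it a dependence could be supported only on indices $\alpha$ with $e_\alpha\ge r$, which the row space of $\D\V$ cannot detect, and this is precisely why the converse fails in general; and second, to guarantee $\ker\D\subseteq\maxIdeal\Rq^N$, which is what makes Nakayama applicable. A further subtlety to handle with care is that a proper containment of the two product modules need not lower the \emph{rank} --- it may merely redistribute the rank profile (e.g.\ from $N$ to $N-1$ plus one shifted coefficient) --- so the argument must compare the full profiles, not just the ranks; the cardinality-versus-profile dictionary together with the containment ordering $\preceq$ (the Remark after Definition~\ref{def:ordering_rankprofiles}) is what converts $\D\mathcal U\subsetneq\D\Rq^N$ into a genuine strict inequality of profiles.
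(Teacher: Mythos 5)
Your proof is correct, and while the forward direction tracks the paper closely (both reduce to a factorization of the generator matrix of $\Mspace_1\cdot\Mspace_2$ as a diagonal matrix $\D$ of powers $g_\maxIdeal^{e_\alpha}$ acting on the coordinate matrix $\V$ of the unit products $v_\alpha$, and then read the rank profile off this decomposition), the converse argument is genuinely different. The paper's converse stays in the row-space picture: it argues that if $\phi^{\Mspace_1\cdot\Mspace_2}=\phi^{\Mspace_1}\phi^{\Mspace_2}$ then $\Gamma_1\cdot\Gamma_2$ is itself a \mingenset, and under the degree hypothesis the reduced set $F(\Gamma_1)\cdot F(\Gamma_2)=F(\Gamma_1\cdot\Gamma_2)$ sits inside the rows of the invertible factor of a Smith normal form, hence is $\Rq$-linearly independent. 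You instead pass to the column-space picture via transpose-invariance of the Smith normal form, reformulate the failure of full free rank as $\mathcal U\subsetneq\Rq^N$, and then use the degree hypothesis solely to locate $\ker\D$ inside $\maxIdeal\Rq^N$, at which point Nakayama's lemma forbids $\D\mathcal U=\D\Rq^N$ and a cardinality comparison finishes. What you gain is that the converse never needs the identity $F(\Gamma_1)\cdot F(\Gamma_2)=F(\Gamma_1\cdot\Gamma_2)$ or any statement about products of Teichm\"uller-reduced representatives being reduced --- the Nakayama step makes the role of the hypothesis $\deg\phi^{\Mspace_1}+\deg\phi^{\Mspace_2}<r$ transparent (it is exactly what forces $\ker\D\subseteq\maxIdeal\Rq^N$), and the final step correctly compares whole rank profiles via cardinality rather than ranks, which, as you note, is the right invariant since a strict containment of modules need not drop the rank. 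The paper's converse in turn has the virtue of producing an explicit \mingenset of $\Mspace_1\cdot\Mspace_2$, which aligns with how the proposition is later applied; but as a proof of the stated equivalence, your Nakayama route is a clean alternative.

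One small remark: the parenthetical about dropping indices with $e_\alpha\ge r$ is needed only in the forward direction (where no degree bound is assumed); there the hypothesis $\frk_{\Rq}(\FreeModule{\Mspace_1}\FreeModule{\Mspace_2})=N$ still forces the surviving rows of $\V$ to span $\Rq^{N'}$ after deletion, so the bookkeeping in observation (i) continues to match $\phi^{\Mspace_1}\phi^{\Mspace_2}$ computed modulo $x^r$. You handle this correctly but it is worth flagging as the one place where the two halves of the argument are not symmetric.
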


\begin{proof}
First, observe that by Lemma \ref{lem:product_free_differentbases} we can take any pair of \mingensets $\Gamma_1$ and $\Gamma_2$ of $\Mspace_1$ and $\Mspace_2$, respectively. Let us fix 
$$\Gamma_1:=\{g_\maxIdeal^ia_{i,j_i} \mid 0\leq i \leq r-1, 1 \leq j_i \leq \phi^{\Mspace_1}_i \}$$
 \mingenset of $\Mspace_1$ and
$$\Gamma_2:=\{g_\maxIdeal^ib_{i,j_i} \mid 0\leq i \leq r-1, 1 \leq j_i \leq \phi^{\Mspace_2}_i \}$$
\mingenset of $\Mspace_2$.
By hypothesis, the set
$\FreeModuleBasis{\Gamma_1}\cdot \FreeModuleBasis{\Gamma_2}$ contains $\rk_{\Rq}(\Mspace_1)\rk_{\Rq}(\Mspace_2)=t$ linearly independent elements over $\Rq$. Let $\A\in \Rq^{t\times m}$ be the matrix whose rows are the vectorial representations in $\Rq^m$ of the elements in $\FreeModuleBasis{\Gamma_1}\cdot \FreeModuleBasis{\Gamma_2}$. Clearly, a Smith Normal Form for $\A$ is $\A=\D\T$ where  $\D= ( \I_t \mid \0)$ and $\T\in \GL(n,\Rq)$ is any invertible matrix whose first $t\times m$ block is equal to $\A$. 
By definition $\Gamma_1\cdot \Gamma_2$ is a generating set for $\Mspace_1\cdot \Mspace_2$ and hence $\Mspace_1\cdot \Mspace_2$ is equal to the rowspace of the matrix $\A'$ whose rows are the vectorial representations of the elements in $\Gamma_1\cdot \Gamma_2$. A row of $\A'$  corresponding to the element $g_\maxIdeal^ia_{i,j_i}g_\maxIdeal^sb_{s,\ell_s}\in \Gamma_1\cdot\Gamma_2$ is equal to the row of $\A$ corresponding to the element $a_{i,j_i}b_{s,\ell_s}$ multiplied by $g_\maxIdeal^{i+s}$. Therefore, $\A'=\D'\A=\D'\D\T=(\D'\mid \0)\T$, where $\D'$ is a $t\times t$ diagonal matrix whose diagonal elements are all of the form $g_{\maxIdeal}^{i+s}$ for suitable $i,s$. This shows that $\A'=(\D'\mid \0)\T$ is a Smith Normal Form for $\A'$ and the rank profile  $\phi^{\Mspace_1\cdot \Mspace_2}(x)$ corresponds to $\phi^{\Mspace_1}(x)\phi^{\Mspace_2}(x)$.

On the other hand, if $\phi^{\Mspace_1\cdot \Mspace_2}(x)=\phi^{\Mspace_1}(x)\phi^{\Mspace_2}(x)$, then the set $\Gamma_1\cdot\Gamma_2$ is a \mingenset for $\Mspace_1\cdot\Mspace_2$. Moreover, since $\deg(\phi^{\Mspace_1}(x))+\deg(\phi^{\Mspace_2}(x))<r$, we have that $F(\Gamma_1)\cdot F(\Gamma_2)=F(\Gamma_1\cdot\Gamma_2)$,
which is a set of $\rk_{\Rq}(\Mspace_1)\rk_{\Rq}(\Mspace_2)$ nonzero elements. Let $\S\D\T$ be a Smith normal form for $\Mspace_1\cdot\Mspace_2$, then the elements of $F(\Gamma_1\cdot\Gamma_2)$ correspond to the first $\rk_{\Rq}(\Mspace_1)\rk_{\Rq}(\Mspace_2)$ rows of matrix $\T$, and hence they are $\Rq$-linearly independent. Thus, $\FreeModule{\Mspace_1}\cdot \FreeModule{\Mspace_2}$ is free with free rank equal to $\rk_{\Rq}(\Mspace_1)\rk_{\Rq}(\Mspace_2)$.
\qed
\end{proof}

\begin{remark}
Observe that the second part of Proposition \ref{prop:maximal_product_space} does not hold anymore if we remove the hypothesis that $\deg(\phi^{\Mspace_1}(x))+\deg(\phi^{\Mspace_2}(x))<r$.

  Let $\Aspace'$, $\Aspace=\Aspace'+\langle a\rangle$ and $\Bspace$ be three free modules of free rank $\alpha-1$, $\alpha$ and $\beta$ respectively, such that $\Aspace'\cdot \Bspace$ is free of rank $(\alpha-1)\beta$, but $\Aspace\cdot \Bspace$ is not free of rank $\alpha\beta$. Take a basis for $\Aspace$ of the form $\{a_1,\ldots, a_{\alpha-1},a\}$ such that $\{a_1,\ldots, a_{\alpha-1}\}$ is a basis of $\Aspace'$, and fix also a basis $\{b_1,\ldots,b_{\beta}\}$ for $\Bspace$. Then, define $\Mspace_1$ to be the $\Rq$-module whose \mingenset is $\{a_1,\ldots,\a_{\alpha-1},g_\maxIdeal^{r-1}a\}$, and define $\Mspace_2=\maxIdeal \Bspace$. Consider the module $\Mspace_1\cdot \Mspace_2$. It is easy to see that  $\Mspace_1\cdot \Mspace_2=\maxIdeal (\Aspace'\cdot\Bspace)=\Aspace'\cdot \Mspace_2$. Observe that $\Bspace\in\mathrm{Free}(\Mspace_2)$ and by Proposition \ref{prop:maximal_product_space} and Lemma \ref{lem:product_free_differentbases}, we have that $\phi^{\Mspace_1 \cdot \Mspace_2}(x) = \phi^{\Mspace_1}(x) \phi^{\Mspace_2}(x)$. However, by construction  we have $\Aspace\in\mathrm{Free}(\Mspace_1)$,  $\Bspace\in\mathrm{Free}(\Mspace_1)$ and $\Aspace \cdot \Bspace$ is not free of rank $\alpha \beta$. Therefore, by Lemma \ref{lem:product_free_differentbases} this also holds for $\FreeModule{\Mspace_1}\cdot \FreeModule{\Mspace_2}$.
\end{remark}

We are now ready to put the various statements of this subsection together and prove an upper bound on the failure probability of the product condition---the main statement of this subsection.

\begin{theorem}\label{thm:general_product}
Let $\Bspace$ be a fixed $\Rq$-submodule of $\Rqm$ with rank profile $\phi^{\Bspace}(x)$ and let $\lambda:=\phi^{\Bspace}(1)=\rk_{\Rq}(\Bspace)$.
Let $t$ be a positive integer with $t \lambda <m$ and $\phi(x) \in \ZZ[x]/(x^r)$ with nonnegative coefficients such that $\phi(1)=t$.
Let $\Aspace$ be an $\Rq$-submodule of $\Rqm$ selected uniformly at random among all the modules with  $\phi^\Aspace = \phi$.
Then,
\begin{align*}
\Pr\left( \phi^{\Aspace \cdot \Bspace} \neq \phi^{\Aspace} \phi^{\Bspace} \right) %
\leq \left(1-p^{-s\beta}\right) \sum_{i=1}^{\alpha} \sum_{j = 0}^{r-1} p^{s(r-j)(i \beta-m)} \leq 2 \alpha p^{s(\alpha \beta-m)}
\end{align*}
\end{theorem}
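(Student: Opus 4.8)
The plan is to reduce the statement to the free‑module estimate of Lemma~\ref{lem:product_space_full_dimension_probability} by transporting the uniform distribution on modules with rank profile $\phi$ to the uniform distribution on free modules of rank $t$, using the counting Corollaries~\ref{cor:num_free} and~\ref{cor:num_mod} together with the base‑change invariance of Lemma~\ref{lem:product_free_differentbases}. In what follows I write $t$ and $\lambda$ for the quantities called $\alpha$ and $\beta$ in the stated bound.

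\emph{Step 1 (reduction to a free‑module event).} Apply Proposition~\ref{prop:maximal_product_space} with $\Mspace_1 := \Aspace$ and $\Mspace_2 := \Bspace$. Since $\rk_{\Rq}(\Aspace) = \phi(1) = t$, $\rk_{\Rq}(\Bspace) = \lambda$, and $\rk_{\Rq}(\FreeModule{\Aspace}) = \rk_{\Rq}(\Aspace)$, $\rk_{\Rq}(\FreeModule{\Bspace}) = \rk_{\Rq}(\Bspace)$, the proposition gives $\phi^{\Aspace\cdot\Bspace} = \phi^{\Aspace}\phi^{\Bspace}$ whenever $\FreeModule{\Aspace}\cdot\FreeModule{\Bspace}$ is free of free rank $t\lambda$, so that $\Pr(\phi^{\Aspace\cdot\Bspace}\neq\phi^{\Aspace}\phi^{\Bspace}) \leq \Pr(\frk_{\Rq}(\FreeModule{\Aspace}\cdot\FreeModule{\Bspace}) < t\lambda)$. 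Fix the free module $\FreeModule{\Bspace}$ (of free rank $\lambda$) once and for all and call a free $\Rq$‑submodule $\Nspace\subseteq\Rqm$ of free rank $t$ \emph{bad} if $\Nspace\cdot\FreeModule{\Bspace}$ is not free of free rank $t\lambda$. By Lemma~\ref{lem:product_free_differentbases}, applied with the fixed free factor $\FreeModule{\Bspace}$ and the two free modules ranging over a common class $\mathrm{Free}(\Mspace)$, badness is constant on each set $\mathrm{Free}(\Mspace)$; accordingly call an $\Rq$‑module $\Mspace$ with $\phi^{\Mspace}=\phi$ \emph{bad} if every (equivalently, some) element of $\mathrm{Free}(\Mspace)$ is bad. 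Since $\FreeModule{\Aspace}\in\mathrm{Free}(\Aspace)$, the right‑hand side above equals $\Pr(\Aspace \text{ bad})$.

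\emph{Step 2 (transport of the uniform distribution).} Consider the bipartite incidence graph on vertex classes $P := \{\Mspace\subseteq\Rqm : \phi^{\Mspace}=\phi\}$ and $Q := \{\Nspace\subseteq\Rqm : \Nspace \text{ free}, \ \frk_{\Rq}(\Nspace)=t\}$, with $\Mspace\sim\Nspace$ iff $\Nspace\in\mathrm{Free}(\Mspace)$ (equivalently $\Mspace\in\mathrm{Mod}(\phi,\Nspace)$). By Corollary~\ref{cor:num_free} each $\Mspace\in P$ has exactly $c_1 := |\mathrm{Free}(\Mspace)|$ neighbours, a number independent of $\Mspace$, and by Corollary~\ref{cor:num_mod} each $\Nspace\in Q$ has exactly $c_2 := |\mathrm{Mod}(\phi,\Nspace)|$ neighbours, independent of $\Nspace$; counting edges two ways gives $c_1|P| = c_2|Q|$. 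By the definition of bad and Lemma~\ref{lem:product_free_differentbases}, whenever $\Mspace\sim\Nspace$ one has $\Mspace$ bad $\iff$ $\Nspace$ bad, so the edges incident to bad vertices of $P$ coincide with the edges incident to bad vertices of $Q$; counting these two ways gives $c_1\,|\{\Mspace\in P : \Mspace\text{ bad}\}| = c_2\,|\{\Nspace\in Q : \Nspace\text{ bad}\}|$. Dividing the two identities yields $\Pr(\Aspace\text{ bad}) = |\{\Mspace\in P : \Mspace\text{ bad}\}|/|P| = |\{\Nspace\in Q : \Nspace\text{ bad}\}|/|Q| = \Pr(\Nspace\text{ bad})$, where on the right $\Nspace$ is uniform among the free submodules of $\Rqm$ of free rank $t$. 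Then $\Pr(\Nspace\text{ bad}) = \Pr(\frk_{\Rq}(\Nspace\cdot\FreeModule{\Bspace}) < t\lambda)$, and Lemma~\ref{lem:product_space_full_dimension_probability}, with $\alpha = t$ and the fixed free module $\FreeModule{\Bspace}$ of free rank $\beta = \lambda$ (the hypothesis $\alpha\beta = t\lambda < m$ being exactly our assumption), bounds this by $\left(1-p^{-s\lambda}\right)\sum_{i=1}^{t}\sum_{j=0}^{r-1}p^{s(r-j)(i\lambda-m)} \leq 2t\,p^{s(t\lambda-m)}$, which is the claimed bound.

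The delicate point is Step 2: one must verify that badness is genuinely a class function on $\mathrm{Free}(\Mspace)$ and that it propagates along every edge of the incidence graph in both directions, so that the bad edges of $P$ and of $Q$ form the same set; this rests entirely on Lemma~\ref{lem:product_free_differentbases} being applicable with the fixed free factor $\FreeModule{\Bspace}$, and on the biregularity supplied by Corollaries~\ref{cor:num_free} and~\ref{cor:num_mod}. The degenerate case $\deg\phi = 0$, where $\Aspace$ is already free and Steps 1--2 collapse to a direct invocation of Lemma~\ref{lem:product_space_full_dimension_probability}, should be noted but is immediate. Everything else is a routine assembly of results already established in this subsection.
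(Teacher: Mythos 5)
Your proposal is correct and follows essentially the same route as the paper's proof: reduce to the freeness of $\FreeModule{\Aspace}\cdot\FreeModule{\Bspace}$ via Proposition~\ref{prop:maximal_product_space} (with Lemma~\ref{lem:product_free_differentbases} making the event a class function on $\mathrm{Free}(\Mspace)$), transport uniformity from modules of rank profile $\phi$ to free modules of free rank $t$ using Corollaries~\ref{cor:num_free} and~\ref{cor:num_mod}, and conclude with Lemma~\ref{lem:product_space_full_dimension_probability}. The only difference is presentational: you phrase the uniformity transport as a biregular bipartite double-counting argument, whereas the paper realizes the same fact by the two-stage experiment of drawing $\Aspace$ uniformly and then $\X$ uniformly from $\mathrm{Free}(\Aspace)$.
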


\begin{proof}
Let us denote by $\mathrm{Mod}(\phi)$ the set of all $\Rq$-submodules of $\Rqm$ whose rank profile equals $\phi$.
Choose uniformly at random a module $\Aspace$ in $\mathrm{Mod}(\phi)$, and then select $\X$ uniformly at random from $\mathrm{Free}(\Aspace)$. Then, this results in a uniform distribution on the set of all free modules with free rank equal to $\phi(1)=t$, that is the set $\mathrm{Mod}(t)$, where $t$ denotes the constant polynomial in $\ZZ[x]/(x^r)$ equal to $t$. Indeed, for an arbitrary free module $\Nspace$ with $\frk_{\Rq}(\Nspace)=t$,
\begin{align*}
\Pr(\X=\Nspace)&=\Pr(\X=\Nspace \mid \Aspace \in \mathrm{Mod}(\Nspace,\phi))\Pr(\Aspace \in \mathrm{Mod}(\Nspace,\phi))\\
&=\frac{1}{|\mathrm{Free}(\Aspace)|}\frac{|\mathrm{Mod}(\Nspace,\phi)|}{|\mathrm{Mod}(\phi)|},\end{align*}
which by Corollaries \ref{cor:num_free} and \ref{cor:num_mod} is a constant number that does not depend on $\Nspace$.

Now, suppose that $\phi^{\Aspace \cdot \Bspace} \neq \phi^{\Aspace} \phi^{\Bspace}$.
By Proposition \ref{prop:maximal_product_space}, this implies $\Nspace\cdot \Nspace'$ is not a free module of rank $t\lambda$, where $\Nspace$ is any free module in $\mathrm{Free}(\Aspace)$
and $\Nspace'$ is any  free module in $\mathrm{Free}(\Bspace)$. Hence,
\begin{align*}\Pr\left( \phi^{\Aspace \cdot \Bspace} \neq \phi^{\Aspace} \phi^{\Bspace} \right) 
\leq 1- \Pr\big(\text{$\Nspace\cdot\Nspace'$ is a free module of free rank $t\lambda$}\big),
\end{align*}
and we conclude using Lemma \ref{lem:product_space_full_dimension_probability}.
\qed
\end{proof}

As a consequence, we can finally derive the desired upper bound on the product condition failure probability.

\begin{theorem}\label{thm:product_condition_main_statement}
Let $\Fspace$ be defined as in Definition~\ref{def:LRPCcodes}.
Let $t$ be a positive integer with $t \lambda <m$ and $\phi(x) \in \ZZ[x]/(x^r)$ with nonnegative coefficients and such that $\phi(1)=t$ (recall that this means that an error of rank profile $\phi$ has rank $t$).
Let $\e$ be an error word, chosen uniformly at random among all error words with support $\Espace$ of rank profile $\phi^\Espace = \phi$.
Then, the probability that the {product condition} is not fulfilled is
\begin{align*}
&\Pr\left( \phi^{\Espace \cdot \Fspace} \neq \phi^{\Espace} \phi^{\Fspace} \right)
\leq \left(1-p^{-s\lambda}\right) \sum_{i=1}^{t} \sum_{j = 0}^{r-1} p^{s(r-j)(i \lambda-m)} \leq 2 t p^{s(t \lambda-m)}
\end{align*}
\end{theorem}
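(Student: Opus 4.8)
The plan is to obtain this as an essentially immediate consequence of Theorem~\ref{thm:general_product}, once the randomness over error words has been transferred to randomness over their supports. The key observation is that the event ``\,$\phi^{\Espace\cdot\Fspace}\neq\phi^{\Espace}\phi^{\Fspace}$\,'' depends on $\e$ only through its support $\Espace=\supp(\e)$, so it suffices to understand the distribution of $\Espace$.

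First I would invoke Lemma~\ref{lem:uniform_support}: it says that the number of vectors in $\Rqm^n$ with a prescribed support depends only on the rank profile of that support. Since every module counted in our error model has the same rank profile $\phi$, each such module is the support of exactly the same number of error words; hence, if $\e$ is drawn uniformly among all error words with $\phi^{\Espace}=\phi$, the induced law of $\Espace$ is the uniform distribution on the set of $\Rq$-submodules of $\Rqm$ with rank profile $\phi$. Conditioning on $\Espace$ then yields
\begin{align*}
\Pr\!\left( \phi^{\Espace \cdot \Fspace} \neq \phi^{\Espace} \phi^{\Fspace} \right) = \Pr_{\Aspace}\!\left( \phi^{\Aspace \cdot \Fspace} \neq \phi^{\Aspace} \phi^{\Fspace} \right),
\end{align*}
where $\Aspace$ is uniform among modules of rank profile $\phi$.

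Second, I would apply Theorem~\ref{thm:general_product} with $\Bspace:=\Fspace$. By Definition~\ref{def:LRPCcodes} the module $\Fspace$ is \emph{free} of rank $\lambda$, so its rank profile is the constant polynomial $\phi^{\Fspace}(x)=\lambda$ and $\phi^{\Fspace}(1)=\lambda=\rk_{\Rq}(\Fspace)$; this is precisely the role of the free rank of $\Bspace$ in Theorem~\ref{thm:general_product} (the parameter $\beta$ in the displayed bound there), while $\phi^{\Aspace}(1)=\phi(1)=t$ plays the role of the parameter $\alpha$, and the hypothesis $t\lambda<m$ is exactly the one required there. Substituting these parameters into the bound of Theorem~\ref{thm:general_product} gives
\begin{align*}
\Pr\!\left( \phi^{\Espace \cdot \Fspace} \neq \phi^{\Espace} \phi^{\Fspace} \right) \leq \left(1-p^{-s\lambda}\right) \sum_{i=1}^{t} \sum_{j = 0}^{r-1} p^{s(r-j)(i \lambda-m)} \leq 2 t p^{s(t \lambda-m)},
\end{align*}
which is the claim.

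I do not anticipate a genuine obstacle, since all of the substantive work---the counting arguments of Corollaries~\ref{cor:num_free} and~\ref{cor:num_mod}, the reduction to free modules via $\FreeModule{\cdot}$ in Proposition~\ref{prop:maximal_product_space}, and the product estimate of Lemma~\ref{lem:product_space_full_dimension_probability}---is already encapsulated in Theorem~\ref{thm:general_product}. The only point demanding a little care is the first step: making precise that pushing the uniform distribution on $\{\e:\phi^{\supp(\e)}=\phi\}$ forward under $\e\mapsto\supp(\e)$ yields the uniform distribution on modules of rank profile $\phi$, which is exactly what Lemma~\ref{lem:uniform_support} provides.
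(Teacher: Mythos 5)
Your proposal is correct and follows essentially the same route as the paper: it invokes Lemma~\ref{lem:uniform_support} to show that the support of a uniformly chosen error word with rank profile $\phi$ is itself uniformly distributed on the modules with rank profile $\phi$, and then applies Theorem~\ref{thm:general_product} with $\Bspace=\Fspace$ (free of rank $\lambda$, so $\alpha=t$, $\beta=\lambda$). Your spelled-out justification of the pushforward step and of the parameter substitution is just a more explicit rendering of what the paper's two-line proof asserts.
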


\begin{proof}
Let us denote by $\mathrm{Mod}(\phi)$ the set of all $\Rq$-submodules of $\Rqm$ whose rank profile equals $\phi$. By Lemma \ref{lem:uniform_support}, choosing uniformly at random $\e$ among all the words whose support $\Espace$ has rank profile $\phi$ results in a uniform distribution on $\mathrm{Mod}(\phi)$.  At this point, the claim follows from Theorem \ref{thm:general_product}.
\qed
\end{proof}

\subsection{Failure of Syndrome Condition}\label{ssec:failure_syndrome}

Here we derive a bound on the probability that the syndrome condition is not fulfilled, given that the product condition is satisfied.
As in the case of finite fields, the bound is based on the relative number of matrices of a given dimension that have full (free) rank.
\new{For completeness, we give a closed-form expression for this number in the following lemma. However, it can also be derived from the number of submodules of a given rank profile, which was given in \cite[Theorem~2.4]{honold2000linear}. Note that the latter result holds also for finite chain rings.}

\begin{lemma}\label{lem:syndrome_number_of_full-rank_matrices}
Let $a,b$ be positive integers with $a < b$.
Then, the number of $a \times b$ matrices over $\Rq=\GR(p^r,s)$ of (full) free rank $a$ is
$\NM(a,b;\Rq) = p^{a b r s} \prod_{a'=0}^{a-1} \left(1-p^{a'-b} \right)$.
\end{lemma}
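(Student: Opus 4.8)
The plan is to count the $a \times b$ matrices over $\Rq$ of full free rank $a$ by building them up row by row, exactly as one does over a finite field, but keeping track of the fact that ``linearly independent over $\Rq$'' translates via the Smith-normal-form/valuation machinery developed in Section~\ref{sec:preliminaries}. First I would recall that a set of $a$ rows in $\Rq^b$ is $\Rq$-linearly independent (equivalently, spans a free submodule of rank $a$, equivalently gives a matrix of free rank $a$) if and only if, after any Smith normal form computation, the matrix has $a$ unit invariant factors; more usefully for counting, a vector $\vec{x} \in \Rq^b$ extends an already-$\Rq$-linearly-independent family $\vec{x}_1,\dots,\vec{x}_{i}$ spanning a free module $\Aspace_i$ to a still-free family of rank $i+1$ precisely when the image of $\vec{x}$ in the quotient module $\Rq^b / \Aspace_i$ is ``free of rank $1$'', i.e.\ generates a free rank-$1$ submodule of the quotient. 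Because $\Aspace_i$ is free of rank $i$ and is a direct summand of $\Rq^b$ (a free submodule whose quotient is also free — this uses that $\vec{x}_1,\dots,\vec{x}_i$ can be completed to a basis of $\Rq^b$, which follows from the Smith normal form), the quotient $\Rq^b/\Aspace_i$ is free of rank $b-i$, hence isomorphic to $\Rq^{b-i}$.

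The core counting step is then: how many $\vec{x} \in \Rq^b$ map to an element of $\Rq^{b-i}$ that generates a free rank-$1$ submodule? By the equivalences recalled in Section~\ref{sec:preliminaries} (an element of a free module over $\Rq$ generates a free rank-$1$ submodule iff at least one of its coordinates is a unit, iff it has valuation $0$), the number of ``good'' images in $\Rq^{b-i}$ is $|\Rq|^{b-i} - |\maxIdeal|^{b-i} = p^{s r (b-i)} - p^{s(r-1)(b-i)} = p^{sr(b-i)}\big(1 - p^{-s(b-i)}\big)$. Each such image has exactly $|\Aspace_i| = |\Rq|^{i} = p^{sri}$ preimages in $\Rq^b$ (the fibers of the projection $\Rq^b \to \Rq^b/\Aspace_i$ all have size $|\Aspace_i|$). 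Multiplying, the number of valid choices for the $(i+1)$-st row, given the first $i$, is $p^{sri} \cdot p^{sr(b-i)}\big(1-p^{-s(b-i)}\big) = p^{srb}\big(1 - p^{-s(b-i)}\big)$, which is conveniently independent of the particular free module $\Aspace_i$ chosen so far — only its rank matters. Hence
\begin{align*}
\NM(a,b;\Rq) = \prod_{i=0}^{a-1} p^{srb}\big(1 - p^{-s(b-i)}\big) = p^{abrs}\prod_{i=0}^{a-1}\big(1-p^{-s(b-i)}\big) = p^{abrs}\prod_{a'=0}^{a-1}\big(1-p^{a'-b}\big),
\end{align*}
where in the last step I reindex $a' = \text{(something)}$; more precisely substituting $i \mapsto a'$ and noting $\{b-i : 0\le i \le a-1\} = \{b-a+1,\dots,b\}$, matching $\prod_{a'=0}^{a-1}(1-p^{-s(b-a')})$ — here I should double-check the exponent convention, since the statement writes $p^{a'-b}$ rather than $p^{-s(b-a')}$; presumably $p$ in the statement denotes $p^s = |\Rq/\maxIdeal|$ in this normalized form, or there is an implicit identification, and I would reconcile the two by making the base of the product explicitly $|\Rq/\maxIdeal| = p^s$ throughout.

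The main obstacle, and the only place real care is needed, is justifying the two structural facts used above over a chain ring rather than a field: (i) that an $\Rq$-linearly independent family of $i$ vectors in $\Rq^b$ spans a \emph{free direct summand}, so that the quotient is again free of rank $b-i$, and (ii) that extending the family to one of free rank $i+1$ is equivalent to the new vector having valuation-$0$ image in the quotient. Both follow from the Smith normal form recalled in Section~\ref{sec:preliminaries}: given the matrix with rows $\vec{x}_1,\dots,\vec{x}_i$ of free rank $i$, write it as $\S\,(\I_i \mid \vec{0})\,\T$ with $\S \in \GL(i,\Rq)$, $\T \in \GL(b,\Rq)$; then the row space is the span of the first $i$ rows of $\T$, which is visibly a free direct summand of $\Rq^b$, and the change of basis $\T$ reduces everything to the standard coordinate situation where (i) and (ii) are immediate. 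I would state these as a short preliminary observation (or cite \cite[Lemma~2.4]{kamche2019rank} and the Smith normal form discussion) and then present the row-by-row count as above. The alternative route — deriving the formula from the count of submodules of a given rank profile in \cite[Theorem~2.4]{honold2000linear} — is mentioned for completeness but the direct induction is cleaner and self-contained.
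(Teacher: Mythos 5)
Your proof is essentially the same as the paper's: both build the matrix row by row, use the Smith normal form to reduce the counting of admissible next rows to the case where the previous rows span the standard coordinate submodule $\langle e_1,\dots,e_{a'}\rangle$, and then count by observing that a vector fails to extend the free rank exactly when its last $b-a'$ coordinates all lie in $\maxIdeal$. The only superficial difference is that you count the ``good'' extensions directly via the quotient $\Rq^b/\Aspace_i \cong \Rq^{b-i}$, while the paper counts the ``bad'' vectors $\mathcal{V}(\A)$ and subtracts; these are the same computation. Your side remark about the exponent is a correct catch: the proof yields $\prod_{a'=0}^{a-1}(1-p^{(a'-b)s})$, and the missing $s$ in the stated $\prod_{a'=0}^{a-1}(1-p^{a'-b})$ is a typo (the factor $s$ reappears correctly in Theorem~\ref{thm:syndrome_condition_main_statement}).
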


\begin{proof}
First note that $\NM(1,b;\Rq) = p^{b r s}-p^{b (r-1) s} = p^{brs}\big(1-p^{bs}\big)$ since a $1 \times b$ matrices over $\Rq$ is of free rank $1$ if and only if at least one entry is a unit. Hence we subtract from the number of all matrices ($|\Rq|^b = p^{b r s}$) the number of vectors that consist only of non-units $(|\Rq|-|\Rq^*|)^b = p^{b(r-1)s}$ (cf.~\eqref{eq:Rq_number_units}).

Let now for any $a' \leq a$ be $\A \in \Rq^{a' \times b}$ a matrix of free rank~$a'$. We define 
$\mathcal{V}(\A) := \big\{ \v \in \Rq^{1 \times b} \! : \! \frk \big(\begin{bmatrix}
\A^\top 
\v^\top
\end{bmatrix}^\top \big) = a' \big\}$.
We study the cardinality of $\mathcal{V}(\A)$.
We have $\frk\big(\begin{bmatrix}
\A^\top
\v^\top
\end{bmatrix}^\top\big) = a'$ if and only if the rows of the matrix $\hat{\A} := \begin{bmatrix}
\A^\top
\v^\top
\end{bmatrix}^\top$ are linearly dependent.
Due to $\frk(\A) = a'$ and the existence of a Smith normal form of $\A$, there are invertibe matrices $\S$ and $\T$ such that
$\S \A \T = \D$,
where $\D$ is a diagonal matrix with ones on its diagonal.

Since $\S$ and $\T$ are invertible, we can count the number of vectors $\v'$ such that the rows of the matrix
$\big[
\D^\top
{\v'}^\top
\big]^\top$
are linearly independent instead of the matrix $\hat{\A}$ (note that $\v = \v' \T^{-1}$ gives a corresponding linearly dependent row in~$\hat{\A}$).

Since $\D$ is in diagonal form with only ones on its diagonal, the linearly dependent vectors are exactly of the form
\begin{equation*}
\v' = [v'_1,\dots,v'_a,v'_{a'+1},\dots, v'_b],
\end{equation*}
where $v'_i \in \Rq$ for $i=1,\dots,a'$ and $v'_i \in \maxIdeal$ for $i=a'+1,\dots,b$. Hence, we have 
\begin{equation*}
|\mathcal{V}(\A)| = p^{a'rs} p^{(b-a')(r-1)s)} = p^{brs} p^{(a'-b)s}.
\end{equation*}
Note that this value is independent of $\A$.

By the discussion on $|\mathcal{V}(\A)|$, we get the following recursive formula:
\begin{align*}
\NM(a'\!+ \!1,b;\Rq) \! = \! \begin{cases}
\NM(a',b;\Rq) p^{brs}\! \left( 1- p^{(a'-b)s} \right), \! \! &a'\geq 1, \\
p^{brs}\!\left(1-p^{bs}\right), &a'=0,
\end{cases}
\end{align*}
which resolves into
$\NM(a,b;\Rq) = p^{a b r s} \prod_{a'=0}^{a-1} \left(1-p^{(a'-b)s}\right)$.
\qed
\end{proof}

At this point we can prove the bound on the failure probability of the syndrome condition similar to the one in \cite{gaborit2013low}, using Lemma~\ref{lem:syndrome_number_of_full-rank_matrices}. The additional difficulty over rings is to deal with non-unique decompositions of module elements in \mingensets and the derivation of a simplified bound on the relative number of non-full-rank matrices.
Furthermore, the start of the proof corrects a minor technical impreciseness of Gaborit et al.'s proof.

\begin{theorem}\label{thm:syndrome_condition_main_statement}
Let $\Fspace$ be defined as in Definition~\ref{def:LRPCcodes}, $t$ be a positive integer with $t \lambda < \min\{m,n-k+1\}$, and $\Espace$ be an error space of rank $t$.
Suppose that the product condition is fulfilled for $\Espace$ and $\Fspace$.
Suppose further that $\H$ has the {maximal-row-span} and {unity properties} (cf.~Definition~\ref{def:H_properties}).

Let $\e$ be an error word, chosen uniformly at random among all error words with support $\Espace$.
Then, the probability that the {syndrome condition} is not fulfilled for $\e$ is
\begin{align*}
\Pr\left( \Sspace \neq \Espace \cdot \Fspace  \mid \phi^{\Espace \cdot \Fspace} = \phi^{\Espace} \phi^{\Fspace} \right) \leq 1
- \prod_{i=0}^{\lambda t -1} \left(1-p^{[i-(n-k)]s}\right) < 4 p^{-s(n-k+1-\lambda t)}.
\end{align*}
\end{theorem}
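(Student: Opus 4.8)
The plan is to push the statement down to linear algebra over the residue field $\F_{p^s}=\Rq/\maxIdeal$ and then imitate, with the extra bookkeeping forced by zero divisors and non-free modules, the counting argument of Gaborit et al.

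First I would reduce modulo $\maxIdeal$. Since each $s_i=\sum_j H_{i,j}e_j$ is a sum of products of an element of $\Fspace$ and an element of $\Espace$, one always has $\Sspace\subseteq\Espace\cdot\Fspace$, so the syndrome condition fails precisely when this inclusion is strict. As the product condition holds, $\{f_\ell\varepsilon_\kappa\}_{1\le\ell\le\lambda,\,1\le\kappa\le t}$ is an \mingenset of $\Espace\cdot\Fspace$, where $f_1,\dots,f_\lambda$ is a basis of $\Fspace$ and $\varepsilon_1,\dots,\varepsilon_t$ an \mingenset of $\Espace$; writing $s_i=\sum_{\ell,\kappa}s_{i,\ell,\kappa}f_\ell\varepsilon_\kappa$ with $s_{i,\ell,\kappa}=\sum_j h_{i,j,\ell}e_{j,\kappa}$ exactly as in the proof of Lemma~\ref{lem:erasure_decoding}, Nakayama's lemma shows that $\Sspace=\Espace\cdot\Fspace$ iff the images of $s_1,\dots,s_{n-k}$ generate $(\Espace\cdot\Fspace)/\maxIdeal(\Espace\cdot\Fspace)\cong\F_{p^s}^{\lambda t}$, i.e.\ iff the matrix $\bar{\S}\in\F_{p^s}^{(n-k)\times\lambda t}$ with entries $s_{i,\ell,\kappa}\bmod\maxIdeal=\sum_j\bar{h}_{i,j,\ell}\,\bar{e}_{j,\kappa}$ has full column rank $\lambda t$ (note $\lambda t\le n-k$). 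Checking that this is well defined although the $e_{j,\kappa}$, hence the $s_{i,\ell,\kappa}$, are only determined modulo proper ideals $\Ann(\varepsilon_\kappa)\subseteq\maxIdeal$ is exactly the point where Gaborit et al.'s proof needs correcting. Finally, by Lemma~\ref{lem:uniform_support} and the fact that the map $\e\mapsto(e_{j,\kappa}\bmod\Ann(\varepsilon_\kappa))_{j,\kappa}$ has fibres of constant cardinality, choosing $\e$ uniformly among error vectors of support $\Espace$ amounts to choosing $\bar{\E}=(\bar{e}_{j,\kappa})\in\F_{p^s}^{t\times n}$ uniformly among full-row-rank matrices, and $\bar{\S}$ depends on $\e$ only through $\bar{\E}$.

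It then remains to bound, for $\bar{\E}$ uniform of full row rank $t$, the probability that $\bar{\S}$ is not of full column rank. Setting $\ve{H}^{(\ell)}:=(\bar{h}_{i,j,\ell})_{i,j}\in\F_{p^s}^{(n-k)\times n}$ and letting $\bar{\ve{e}}^{(\kappa)}\in\F_{p^s}^n$ be the $\kappa$-th row of $\bar{\E}$, the column of $\bar{\S}$ indexed by $(\ell,\kappa)$ is $\ve{H}^{(\ell)}\bar{\ve{e}}^{(\kappa)}$. Since $\bar{\E}$ is uniform of full row rank we may reveal $\bar{\ve{e}}^{(1)},\dots,\bar{\ve{e}}^{(t)}$ one at a time, each uniform outside the span of its predecessors, and build up the rank of $\bar{\S}$ in blocks of $\lambda$ columns, conditioning on the first $\kappa-1$ blocks already spanning a subspace $W_{\kappa-1}$ of dimension $\lambda(\kappa-1)$. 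A union bound over the $\tfrac{p^{s\lambda}-1}{p^s-1}$ projective classes of $\ve{\alpha}\in\F_{p^s}^\lambda\setminus\{0\}$ then bounds the probability that block $\kappa$ fails to raise the rank by $\lambda$ in terms of $\Pr\bigl((\textstyle\sum_\ell\alpha_\ell\ve{H}^{(\ell)})\bar{\ve{e}}^{(\kappa)}\in W_{\kappa-1}\bigr)$; here the {maximal-row-span} and {unity} properties, together with $\rk_{\Rqm}\H=\frk_{\Rqm}\H=n-k$, are what guarantee that the relevant $\F_{p^s}$-matrices (in particular the combinations $\sum_\ell\alpha_\ell\ve{H}^{(\ell)}$) have full rank, so that this probability is of order $p^{-s(n-k-\lambda(\kappa-1))}$ per class. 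Summing the resulting geometric progression over $\ve{\alpha}$ and multiplying the per-block estimates over $\kappa=1,\dots,t$ (using $1-\sum a_i\ge\prod(1-a_i)$ for $a_i\ge0$) yields $\Pr(\Sspace=\Espace\cdot\Fspace)\ge\prod_{i=0}^{\lambda t-1}(1-p^{(i-(n-k))s})$, which is the first inequality; the second is then elementary, via $1-\prod(1-a_i)\le\sum a_i$, a reindexing, summation of $\sum_{j\ge1}p^{-sj}$, and $1/(1-p^{-s})\le2$.

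The hard part is the first step together with the rank-control claim in the second: over $\Rq$ the module $\Espace\cdot\Fspace$ need not be free, the coefficients $e_{j,\kappa}$ and $s_{i,\ell,\kappa}$ live only in quotients by proper ideals, and one must justify that the whole question descends faithfully to $\F_{p^s}$ and that the induced distribution of $\bar{\E}$ is the uniform one on full-rank matrices; moreover, the full-rank property of matrices such as $\sum_\ell\alpha_\ell\ve{H}^{(\ell)}$ — automatic over a field and left implicit in \cite{gaborit2013low} — has to be deduced here from the {unity} property (which is vacuous over finite fields) and the remaining hypotheses on $\H$.
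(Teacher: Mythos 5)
Your reduction to the residue field is sound: Nakayama's lemma does show that, given the product condition, $\Sspace=\Espace\cdot\Fspace$ holds iff the reduced coefficient matrix $\bar{\S}$ has full column rank $\lambda t$ over $\F_{p^s}$, and the constant-fibre argument correctly turns ``$\e$ uniform with support exactly $\Espace$'' into ``$\bar{\E}$ uniform among full-row-rank matrices''; this is a legitimate (and arguably cleaner) alternative to the paper's relaxation to an error $\e'$ with i.i.d.\ uniform entries in $\Espace$. The decisive gap is in the rank-building step: the claim that the hypotheses force every nonzero combination $\ve{H}_\alpha:=\sum_\ell\alpha_\ell\ve{H}^{(\ell)}$ to have full row rank $n-k$ over $\F_{p^s}$ is false, and it cannot be ``deduced from the unity property'' or from the maximal-row-span property. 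Already over a finite field ($r=1$, where the unity property is vacuous), take $\lambda=2$ and
\begin{equation*}
\H=\begin{bmatrix} f_1 & f_2 & 0 & \cdots & 0\\ f_2 & f_1 & 0 & \cdots & 0\end{bmatrix},
\end{equation*}
which has $\frk_{\Rqm}(\H)=n-k=2$ (since $f_1\pm f_2\neq 0$), every row spans $\Fspace$, and all coefficients $h_{i,j,\ell}$ are units or zero; yet for $\alpha=(1,1)$ the matrix $\ve{H}^{(1)}+\ve{H}^{(2)}$ has rank $1$. With $t=1$, $\kappa=1$, $W_0=\{\0\}$, the per-class probability $\Pr\bigl(\ve{H}_\alpha\bar{\ve{e}}^{(1)}\in W_0\bigr)$ is then of order $p^{-s}$ instead of $p^{-s(n-k)}$, so your per-block estimate, and with it the product bound, collapses. (This example even shows that the exact first inequality is delicate: for odd $p^s$ the failure probability here is about $2p^{-s}$, exceeding $1-(1-p^{-s})(1-p^{-2s})$; so no argument can rescue your deterministic full-rank claim from these hypotheses.) The paper's proof needs no such statement: the maximal-row-span and unity properties are used only to guarantee that each syndrome coefficient $s_{\mu,\eta,i}$ contains at least one summand ``unit times uniform element'' and is therefore marginally uniform on $\Rq$; the bound then comes from the count of full-free-rank matrices in Lemma~\ref{lem:syndrome_number_of_full-rank_matrices}, after first passing to the relaxed error $\e'$ and showing $\Pr(\Sspace_{\e'}=\Espace\cdot\Fspace)\le\Pr(\Sspace_{\e}=\Espace\cdot\Fspace)$.

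A secondary, quantitative slip: you invoke ``$1-\sum a_i\ge\prod(1-a_i)$'', but the Weierstrass inequality goes the other way, $\prod(1-a_i)\ge 1-\sum a_i$. Hence, even if every $\ve{H}_\alpha$ were full rank, multiplying your per-block lower bounds $1-\sum_{[\alpha]}p^{-s(n-k-\lambda(\kappa-1))}$ over $\kappa$ yields a success probability bound that is \emph{weaker} than the stated product $\prod_{i=0}^{\lambda t-1}\bigl(1-p^{(i-(n-k))s}\bigr)$, so the first inequality of the theorem would not follow as claimed (the conditioning of $\bar{\ve{e}}^{(\kappa)}$ on lying outside the span of its predecessors introduces further correction factors in the same direction). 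Your weaker bound would still give the simplified estimate $<4p^{-s(n-k+1-\lambda t)}$, but to obtain the theorem as stated you need the paper's route (uniformity of the syndrome coefficient matrix plus Lemma~\ref{lem:syndrome_number_of_full-rank_matrices}) rather than a union bound over combinations $\ve{H}_\alpha$.
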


\begin{proof}
Let $\e' \in \Rqm^n$ be chosen such that every entry $e_i'$ is chosen uniformly at random from the error support $\Espace$.\footnote{This means that $\e'$ might have a support that is contained in, but not equal to $\Espace$. The difference to the actual error $\e$ is that $\e$ is chosen uniformly from all errors of support exactly $\Espace$.}
Denote by $\Sspace_{\e}$ and $\Sspace_{\e'}$ the syndrome spaces obtained by computing the syndromes of $\e$ and $\e'$, respectively.
Then, we have
\begin{align*}
\Pr\big(\Sspace_{\e'} = \Espace \cdot \Fspace\big) \leq \Pr\big( \Sspace_{\e'} = \Espace \cdot \Fspace \mid \supp(\e') = \Espace \big) = \Pr\big( \Sspace_{\e} = \Espace \cdot \Fspace \big),
\end{align*}
where the latter equality follows from the fact that the random experiments of choosing $\e'$ and conditioning on the property that $\e'$ has support $\Espace$ is the same as directly drawing $\e$ uniformly at random from the set of errors with support $\Espace$.
Hence, we obtain a lower bound on $\Pr\big( \Sspace_{\e} = \Espace \cdot \Fspace \big)$ by studying $\Pr\big(\Sspace_{\e'} = \Espace \cdot  \Fspace\big)$, which we do in the following.

Let $f_1,\dots,f_\lambda$ and $\varepsilon_1,\dots,\varepsilon_t$ be \mingensets of $\Fspace$ and $\Espace$, respectively, such that $f_j \varepsilon_i$ for $i=1,\dots,t$, $j=1,\dots,\lambda$ form an \mingenset of $\Espace \cdot \Fspace$. Note that the existence of such bases is guaranteed by the assumed product condition $\phi^{\Espace \cdot \Fspace} = \phi^{\Espace} \phi^{\Fspace}$.

Since $e'_i$ is an element drawn uniformly at random from $\Espace$, we can write it as
$e'_i = \sum_{\mu=1}^{t} e_{i,\mu}' \varepsilon_\mu$,
where $e_{i,j}'$ are uniformly distributed on $\Rq$.
We can assume uniformity of $e_{i,\mu}'$ since for a given $e'_i$, the decomposition of $e_{i,\mu}'$ is unique modulo $\maxIdeal^{r-v(\varepsilon_i)}$. In particular, there are equally many decompositions $[e_{i,1}',\dots,e_{i,t}']$ for each $e'_i$ and the sets of these decompositions are disjoint for different $i$.

Due to the unity property of the parity-check matrix $\H$, we can write any entry $H_{i,j}$ of $\H$ as
$H_{i,j} = \sum_{\eta=1}^{\lambda} h_{i,j,\eta} f_\eta$,
where the $h_{i,j,\eta}$ are units in $\Rq$ or zero.
Furthermore, since each row of $\H$ spans the entire module $\Fspace$ ({full-row-span property}), for each $i$ and each $\eta$, there is at least one $j^*$ with $h_{i,j^*,\eta} \neq 0$. By the previous assumption, this means that $h_{i,j^*,\eta} \in \Rq^*$.

Then, each syndrome coefficient can be written as
\begin{equation*}
s_i = \sum_{j=1}^{n} e'_j H_{i,j}
= \sum_{\mu=1}^{t} \sum_{\eta=1}^{\lambda} \underbrace{\left(\sum_{j=1}^{n}  e_{j,\mu}'  h_{i,j,\eta}\right)}_{=: s_{\mu,\eta,i}} \varepsilon_\mu f_\eta.
\end{equation*}

By the above discussion, for each $i$ and $\eta$, there is a $j^*$ with $h_{i,j^*,\eta} \in \Rq^*$. Hence, $s_{\mu,\eta,i}$ is a sum (with at least one summand) of the products of uniformly distributed elements of $\Rq$ and units of $\Rq$. A uniformly distributed ring element times a unit is also uniformly distributed on $\Rq$. Hence $s_{\mu,\eta,i}$ is a sum (with at least one summand) of uniformly distributed elements of $\Rq$. Hence, $s_{\mu,\eta,i}$ itself is uniformly distributed on $\Rq$.

All together, we can write
\begin{align*}
\begin{bmatrix}
s_1 \\
s_2 \\
\vdots \\
s_{n-k}
\end{bmatrix}
= 
\underbrace{
\begin{bmatrix}
s_{1,1,1} & s_{1,2,1} & \dots & s_{t,\lambda,1} \\
s_{1,1,2} & s_{1,2,2} & \dots & s_{t,\lambda,2} \\
\vdots & \vdots & \ddots & \vdots \\
s_{1,1,n-k} & s_{1,2,n-k} & \dots & s_{t,\lambda,n-k} \\
\end{bmatrix}}_{=: \, \S}
\cdot 
\begin{bmatrix}
\varepsilon_1 f_1 \\
\varepsilon_1 f_2 \\
\vdots \\
\varepsilon_t f_\lambda \\
\end{bmatrix},
\end{align*}
where, by assumption, the $\varepsilon_i f_j$ are a generating set of $\Espace \cdot \Fspace$ and the matrix $\S$ is chosen uniformly at random from $\Rq^{(n-k)\times t \lambda}$.
If $\S$ has full free rank $t \lambda$, then we have $\Sspace_{\e'} = \Espace \cdot \Fspace$.
By Lemma~\ref{lem:syndrome_number_of_full-rank_matrices}, the probability of drawing such a full-rank matrix is
\begin{align*}
\frac{\NM(a,b;\Rq)}{|\Rq|^{ab}} = \prod_{a'=0}^{a-1} \left(1-p^{(a'-b)s} \right).
\end{align*}
This proves the bound
\begin{align*}
\Pr\left( \Sspace \neq \Espace \cdot \Fspace  \mid \phi^{\Espace \cdot \Fspace} = \phi^{\Espace} \phi^{\Fspace} \right) \leq 1
- \prod_{i=0}^{\lambda t -1} \left(1-p^{[i-(n-k)]s}\right).
\end{align*}
We simplify the bound further using the observation that the product is a $q$-Pochhammer symbol. Hence, we have
\begin{align*}
1 - \prod_{i=0}^{\lambda t -1} \left(1-p^{[i-(n-k)]s}\right) = \sum_{j=1}^{\lambda t} \underbrace{(-1)^{j+1} p^{-j(n-k)s} \qbin{\lambda t}{j}{p^s} p^{s\binom{j}{2}}}_{=: \, a_j},
\end{align*}
where $\qbin{a}{b}{q} := \prod_{j=1}^{b} \tfrac{q^{a+1-j}-1}{q^{j}-1}$ is the Gaussian binomial coefficient. Using $q^{b(a-b)} \leq \qbin{a}{b}{q} < 4 q^{b(a-b)}$, we obtain
\begin{align*}
\left| \frac{a_{j+1}}{a_j}\right| &= p^{-(n-k-j)s} \frac{\qbin{\lambda t}{j+1}{p^s}}{\qbin{\lambda t}{j}{p^s}}
< p^{-(n-k-j)s} \frac{4 q^{s(j+1)(\lambda t-j-1)}}{q^{sj(\lambda t-j)}} \\
&= 4 p^{s[\lambda t - j - (n-k+1)]}
< 1
\end{align*}
for $\lambda t < n-k+1$, i.e., $|a_j|$ is strictly monotonically decreasing. Since the summands $a_j$ have alternating sign, we can thus bound $\sum_{j=1}^{\lambda t}a_j \leq a_1$, which gives
\begin{align*}
1 - \prod_{i=0}^{\lambda t -1} \left(1-p^{[i-(n-k)]s}\right) \leq a_1 < 4 p^{-s(n-k+1-\lambda t)}
\end{align*}
\qed
\end{proof}

\begin{remark}\label{rem:necessity_of_maximal_row_span_condition_or_similar}
In contrast to Theorem~\ref{thm:syndrome_condition_main_statement} the {full-row-span property} was not assumed in \cite[Proposition~4.3]{aragon2019low}, which is the analogous statement for finite fields.
However, also the statement in \cite[Proposition~4.3]{aragon2019low} is only correct if we assume additional structure on the parity-check matrix
(\emph{e.g.}, that each row spans the entire space $\Fspace$ or a weaker condition), due to the following counterexample:
Consider a parity-check matrix $\H$ that contains only non-zero entries on its diagonal and in the last row, where the diagonal entries are all $f_1$ and the last row contains the remaining $f_2,\dots,f_\lambda$, i.e.,
\begin{align*}
\setcounter{MaxMatrixCols}{20}
\H := 
\begin{bmatrix}
f_1 & 0 & \dots & 0 & 0 & 0 & \dots & 0 & 0 & \dots & 0 \\
0 & f_1 & \dots & 0 & 0 & 0 & \dots & 0 & 0 & \dots & 0 \\
\vdots & \vdots & \ddots & \vdots & \vdots & \vdots & \ddots & \vdots & \vdots & \ddots & \vdots \\
0 & 0 & \dots & f_1 & f_2 & f_3 & \dots & f_\lambda & 0 & \dots & 0
\end{bmatrix}.
\end{align*}
This is a valid parity-check matrix according to \cite[Definition~4.1]{aragon2019low} since the entries of $\H$ span the entire space $\Fspace$.
However, due to the structure of the matrix, the first $n-k-1$ syndromes are all in $f_1 \Espace$, hence $\rk_{\Rq}(\Sspace) \leq t+1 < t \lambda$ for \emph{any} error of support~$\Espace$.
\end{remark}

\subsection{Failure of Intersection Condition}\label{ssec:failure_intersection}

We use a similar proof strategy as in \cite{aragon2019low} to derive an upper bound on the failure probability of the intersection condition.
The following lemma is the Galois-ring analog of \cite[Lemma~3.4]{aragon2019low}, where the difference is that we need to take care of the fact that the representation of module elements in an \mingenset is not necessarily unique in a Galois ring.

\begin{lemma}\label{lem:intersection_failure_lemma}
Let $\Aspace \subseteq \Rqm$ be an $\Rq$-module of rank $\alpha$ and $\Bspace \subseteq \Rqm$ be a free $\Rq$-module of free rank $\beta$.
Assume that
$\phi^{\Aspace \cdot \Bspace^2} = \phi^{\Aspace} \phi^{\Bspace^2}$ and that there is an element $e \in \Aspace \cdot \Bspace \setminus \Aspace$ with $e \Bspace \subseteq \Aspace \cdot \Bspace$.
Then, there is an $y \in \Bspace \setminus \Rq$ such that $y \Bspace \subseteq \Bspace$.
\end{lemma}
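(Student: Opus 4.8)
The plan is to adapt the finite-field argument of \cite[Lemma~3.4]{aragon2019low}, the systematic change being that ``basis of a subspace / linear independence'' is everywhere replaced by ``\mingenset of a submodule / uniqueness of coefficients modulo the annihilator ideals $\maxIdeal^{r-i}$''. I would start from the observation that, since $e\Bspace\subseteq\Aspace\cdot\Bspace$ (and hence $e\Bspace^2\subseteq\Aspace\cdot\Bspace^2$), the strictly larger module $\Aspace':=\Aspace+\langle e\rangle_{\Rq}$ satisfies $\Aspace'\cdot\Bspace=\Aspace\cdot\Bspace$ and $\Aspace'\cdot\Bspace^2=\Aspace\cdot\Bspace^2$: adjoining $e$ enlarges the rank but not the product with $\Bspace$ or with $\Bspace^2$. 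Combined with the hypothesis $\phi^{\Aspace\cdot\Bspace^2}=\phi^{\Aspace}\phi^{\Bspace^2}$ this means the product of $\Aspace'$ with $\Bspace^2$ is strictly deficient while that of $\Aspace$ is not, and a deficient product of precisely this shape is what should force multiplicative structure on $\Bspace$.

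Concretely, I would fix a free basis $b_1,\dots,b_\beta$ of $\Bspace$; each $b_j$ is a unit of $\Rqm$, since $\{b_j\}$ is $\Rq$-linearly independent. Using $\Aspace\cdot\Bspace=\sum_j\Aspace b_j$ I would write $e=\sum_j a_jb_j$ with $a_j\in\Aspace$, and using $e\Bspace\subseteq\Aspace\cdot\Bspace$ I would write $eb_k=\sum_j a_{j,k}b_j$ with $a_{j,k}\in\Aspace$ for each $k$. Multiplying the latter by $b_\ell$ and using the symmetry $eb_kb_\ell=eb_\ell b_k$ gives, for all $k,\ell$, identities $\sum_j a_{j,k}(b_jb_\ell)=\sum_j a_{j,\ell}(b_jb_k)$, all of whose terms lie genuinely in $\Aspace\cdot\Bspace^2$. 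Here the product hypothesis enters, through Proposition~\ref{prop:maximal_product_space} (and Lemma~\ref{lem:product_free_differentbases}, which lets one pick the \mingensets freely): the product of a \mingenset of $\Aspace$ with a \mingenset of $\Bspace^2$ is a \mingenset of $\Aspace\cdot\Bspace^2$, so by the uniqueness-of-coefficients discussion of Section~\ref{sec:preliminaries} any relation $\sum_{i,m}c_{i,m}\varepsilon_iw_m=0$ in $\Aspace\cdot\Bspace^2$ forces $c_{i,m}\in\maxIdeal^{\,r-v(\varepsilon_iw_m)}$. Expanding the products $b_jb_\ell$ in a \mingenset of $\Bspace^2$ and feeding the identities above into this uniqueness statement, I would squeeze the coefficients $a_{j,k}$ hard enough to exhibit an $a\in\Aspace$ and a $y\in\Bspace$ with $e=ay$ and $y\Bspace\subseteq\Bspace$, at least modulo the appropriate power of $\maxIdeal$.

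To organise the extraction of $y$ cleanly, one may first replace $\Aspace$ by the $\Rq$-submodule generated by the finitely many coefficients $a_j,a_{j,k}$ that occur, which still has the maximal-product property with $\Bspace^2$ (passing this property to a submodule uses the free-module bookkeeping behind Corollaries~\ref{cor:num_free}--\ref{cor:num_mod} and Proposition~\ref{prop:maximal_product_space}), reducing to the case $\Aspace=\langle a\rangle_{\Rq}$. Then $e=ab$ for some $b\in\Bspace$, the hypothesis $e\Bspace\subseteq\Aspace\cdot\Bspace$ reads $ab\Bspace\subseteq a\Bspace$, and $e\notin\Aspace$ reads $ab\notin a\Rq$; the candidate is $y:=b$.

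The main obstacle is this final cancellation. Since $\Aspace$ is an error support it need not be free, so $a$ may be a zero divisor; writing $a=g_\maxIdeal^{v(a)}a^{\ast}$ with $a^{\ast}$ a unit, $ab\Bspace\subseteq a\Bspace$ yields only $b\Bspace\subseteq\Bspace+\MaxIdeal^{\,r-v(a)}$, and $ab\notin a\Rq$ yields only $b\notin\Rq+\MaxIdeal^{\,r-v(a)}$. Upgrading these to the exact statements $y\Bspace\subseteq\Bspace$ and $y\notin\Rq$ requires an ascending-chain argument in the spirit of the proof of Lemma~\ref{lem:product_free_differentbases}: multiply a putative defect by successive powers of $g_\maxIdeal$, use \mingenset uniqueness at each step to push the defect into an ever deeper power of $\maxIdeal$, and conclude after $r$ steps that it vanishes. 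A secondary technical nuisance is that $\Bspace^2$ is generally not free, so throughout one must work with $\FreeModule{\Bspace^2}$ (well-defined by Definition~\ref{def:FreeModule}) and invoke Proposition~\ref{prop:maximal_product_space} rather than manipulating a basis of $\Bspace^2$ directly.
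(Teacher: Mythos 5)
Your first two steps (decomposing $e$ and $e b_k$ and invoking uniqueness of coefficients modulo the annihilator ideals in the product \mingenset) are in the right spirit, but the proof collapses at exactly the point you flag as the ``main obstacle'': that obstacle is not a technicality to be finessed, it is where your choice of witness is wrong. After your reduction you take $y:=b$ (with $e=ab$) and hope to upgrade $b\Bspace\subseteq\Bspace+\MaxIdeal^{r-v(a)}$ to $b\Bspace\subseteq\Bspace$ by an ascending-chain/defect-killing argument; but the defect genuinely need not vanish. Concretely, let $r=2$, $2\mid m$, $m>2$, let $u$ generate the Galois subring $\GR(p^2,2s)$ over $\Rq$, pick $\delta\in\Rqm$ whose residue lies outside $\F_{p^{2s}}$, and set $f:=u+g_\maxIdeal\delta$, $\Bspace:=\langle 1,f\rangle_{\Rq}$, $\Aspace:=\langle g_\maxIdeal\rangle_{\Rq}$, $e:=g_\maxIdeal f$. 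Then $f^2\in\Bspace+\MaxIdeal$ but $f^2\notin\Bspace$, and one checks that all hypotheses of the lemma hold: $\phi^{\Aspace\cdot\Bspace^2}=2x=x(2+x)=\phi^{\Aspace}\phi^{\Bspace^2}$ in $\ZZ[x]/(x^2)$, $e\in\Aspace\cdot\Bspace\setminus\Aspace$, and $e\Bspace\subseteq\Aspace\cdot\Bspace$. Yet $b=f$ fails the conclusion, since $f\cdot f=f^2\notin\Bspace$; no iteration in powers of $g_\maxIdeal$ can change this. The witness that works is $g_\maxIdeal f$, not $f$.

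That scaling is precisely the missing idea, and it is how the paper argues: decompose $e$ in the \emph{opposite} direction, $e=\sum_{i}b_i'a_i$ over a \mingenset $a_1,\dots,a_\alpha$ of $\Aspace$ with coefficients $b_i'\in\Bspace$; since $e\notin\Aspace$ there is an index $\eta$ with $b_\eta'a_\eta\notin\Aspace$, and one takes $y:=g_\maxIdeal^{v(a_\eta)}b_\eta'$. For any $b\in\Bspace$, writing $eb\in\Aspace\cdot\Bspace$ over the $a_i$ and comparing with $eb=\sum_i(b_i'b)a_i$, the product condition forces $b_\eta'b-\sum_jc_{\eta,j}b_j\in\MaxIdeal^{r-v(a_\eta)}=\Ann\big(g_\maxIdeal^{v(a_\eta)}\big)$, so multiplying by $g_\maxIdeal^{v(a_\eta)}$ annihilates the ambiguity \emph{exactly} and yields $y\Bspace\subseteq\Bspace$; and $y\in\Rq$ would give $y=cg_\maxIdeal^{v(a_\eta)}$ with $c\in\Rq$, hence $b_\eta'a_\eta=ca_\eta\in\Aspace$, a contradiction, so $y\notin\Rq$. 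A secondary but real gap in your write-up is the ``reduction to $\Aspace=\langle a\rangle_{\Rq}$'': the submodule generated by the coefficients $a_j,a_{j,k}$ is not cyclic in general, and you never justify that the maximal-product hypothesis passes to submodules; with the correct choice of $y$ no such reduction is needed.
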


\begin{proof}
Let $a_1,\dots,a_\alpha$ be an \mingenset of $\Aspace$ and $b_1,\dots,b_\beta$ be a basis of $\Bspace$.
Due to $e \in \Aspace \cdot  \Bspace$, there are coefficients $e_{i,j} \in \Rq$ such that
\begin{align}
\textstyle e = \sum_{i=1}^{\alpha} \underbrace{\left(\textstyle\sum_{j=1}^{\beta} e_{i,j} b_j \right)}_{=: \, b'_i} a_i. \label{eq:e_unique_representation}
\end{align}
Due to the fact that $e \notin \Aspace$, there is an $\eta \in \{1,\dots,\alpha\}$ with $b_\eta' a_\eta \notin \Aspace$. In particular, $y := g_\maxIdeal^{v(a_\eta)} b_\eta' \in \Bspace \setminus \Rq$. We show that $y$ fulfills $y \Bspace \subseteq \Bspace$.

Let now $b \in \Bspace$. Since by assumption $eb \in \Aspace \cdot \Bspace$, there are $c_{i,j} \in \Rq$ with
$e b = \sum_{i=1}^{\alpha} \left(\sum_{j=1}^{\beta} c_{i,j} b_j \right) a_i$.
By \eqref{eq:e_unique_representation}, we can also write 
$e b = \sum_{i=1}^{\alpha} \left(\sum_{j=1}^{\beta} e_{i,j} b_j b \right) a_i = \sum_{i=1}^{\alpha} b_i' b a_i$.
Due to the maximality of the rank profile of $\Aspace \cdot \Bspace^2$, i.e., $\phi^{\Aspace \cdot \Bspace^2} = \phi^{\Aspace} \phi^{\Bspace^2}$, we have that the coefficients $c_{i} \in \Bspace^2$ of any representation $c = \sum_i c_i a_i$ of an element $c \in \Aspace \cdot \Bspace^2$ are unique modulo $\MaxIdeal^{r-v(a_i)}$.
Hence, for every $i=1,\dots,\alpha$, there exists $\chi_i \in \Bspace^2$ such that
\begin{align*}
b_i' b = \sum_{j=1}^{\beta} c_{i,j} b_j + g_\maxIdeal^{r-v(a_i)} \chi_i.
\end{align*}
Thus, with $\sum_{j=1}^{\beta} c_{\eta,j} b_j \in \Bspace$, $g_\maxIdeal^{v(a_i)} \in \Rq$, and $g_\maxIdeal^{r}=0$, we get
\begin{align*}
y b = g_\maxIdeal^{v(a_\eta)} b_\eta' b =
g_\maxIdeal^{v(a_\eta)}\sum_{j=1}^{\beta} c_{\eta,j} b_j + g_\maxIdeal^{r} \chi_\eta \in \Bspace.
\end{align*}
Since this hold for any $b$, we have $y \Bspace \subseteq \Bspace$, which proves the claim.
\qed
\end{proof}

We get the following bound using Lemma~\ref{lem:intersection_failure_lemma}, Theorem~\ref{thm:general_product}, and a similar argument as in \cite{gaborit2013low}.

\begin{theorem}\label{thm:intersection_failure_main_statement}
Let $\Fspace$ be defined as in Definition~\ref{def:LRPCcodes} such that it has the base-ring property (i.e., $1 \in \Fspace$).
Suppose that no intermediate ring $R'$ between $\Rq \subsetneq R' \subseteq \Rqm$ is contained in $\Fspace$ (this holds, e.g., for $\lambda$ greater than the smallest divisor of $m$ or for special $\Fspace$).

Let $t$ be a positive integer with $t \tfrac{\lambda(\lambda+1)}{2} < m$ and $t \lambda < n-k+1$, and let $\phi(x) \in \ZZ[x]/(x^r)$ with nonnegative coefficients such that $\phi(1)=t$.
Choose $\e \in \Rqm^n$ uniformly at random from the set of vectors with whose support has rank profile $\phi$.

Then, the probability that the {intersection condition} is not fulfilled, given that syndrome and product conditions are satisfied, is
\begin{align*}
&\Pr\left(\textstyle \bigcap_{i=1}^{\lambda} \Sspace_i = \Espace \mid \Sspace = \Espace \cdot  \Fspace  \land \phi^{\Espace \cdot \Fspace} = \phi^{\Espace} \phi^{\Fspace}  \right) \\
&\leq \left(1-p^{-s\frac{\lambda(\lambda+1)}{2}}\right) \sum_{i=1}^{t} \sum_{j = 0}^{r-1} p^{s(r-j)\left(i \frac{\lambda(\lambda+1)}{2}-m\right)} \leq 2 t p^{s\left(t \frac{\lambda(\lambda+1)}{2}-m\right)}
\end{align*}
\end{theorem}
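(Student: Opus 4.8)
The strategy is to show that, \emph{on the event that the syndrome condition holds}, a failure of the intersection condition forces the failure of a second ``product-type'' condition, $\phi^{\Espace\cdot\Fspace^2}\neq\phi^{\Espace}\phi^{\Fspace^2}$ (where $\Fspace^2:=\Fspace\cdot\Fspace$), and then to bound the probability of the latter via Theorem~\ref{thm:general_product} applied with $\Fspace^2$ in place of $\Fspace$. The first step is to record that, given $\Sspace=\Espace\cdot\Fspace$, one always has $\Espace\subseteq\bigcap_{i=1}^{\lambda}\Sspace_i$: for $e\in\Espace$ and any $i$ we have $f_ie\in\Espace\cdot\Fspace=\Sspace$, so $e=f_i^{-1}(f_ie)\in f_i^{-1}\Sspace=\Sspace_i$. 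Hence an intersection failure means this inclusion is strict, so I can pick $e'\in(\bigcap_i\Sspace_i)\setminus\Espace$. Then $f_ie'\in\Sspace=\Espace\cdot\Fspace$ for every $i=1,\dots,\lambda$; since $f_1,\dots,f_\lambda$ is an $\Rq$-basis of $\Fspace$ and $\Espace\cdot\Fspace$ is an $\Rq$-module, this yields $e'\Fspace\subseteq\Espace\cdot\Fspace$, and the base-ring property $1\in\Fspace$ gives $e'=e'\cdot1\in\Espace\cdot\Fspace$. So $e'\in(\Espace\cdot\Fspace)\setminus\Espace$ with $e'\Fspace\subseteq\Espace\cdot\Fspace$, which is exactly hypothesis~(ii) of Lemma~\ref{lem:intersection_failure_lemma} for $\Aspace:=\Espace$ (of rank $t$) and the free module $\Bspace:=\Fspace$ (of rank $\lambda$).

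Next I would run the dichotomy. If in addition $\phi^{\Espace\cdot\Fspace^2}=\phi^{\Espace}\phi^{\Fspace^2}$, then Lemma~\ref{lem:intersection_failure_lemma} produces $y\in\Fspace\setminus\Rq$ with $y\Fspace\subseteq\Fspace$. Iterating this inclusion and using $1\in\Fspace$ and the $\Rq$-module structure of $\Fspace$ shows $\Rq[y]\subseteq\Fspace$; by the classification of subrings of Galois rings recalled in Section~\ref{sec:preliminaries}, $\Rq[y]$ is a subring of $\Rqm$, and it strictly contains $\Rq$ because $y\notin\Rq$, hence it is an intermediate ring $\Rq\subsetneq\Rq[y]\subseteq\Rqm$ contained in $\Fspace$, contradicting the standing hypothesis on $\Fspace$. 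Consequently, on the event that both the syndrome and product conditions hold, a failure of the intersection condition implies $\phi^{\Espace\cdot\Fspace^2}\neq\phi^{\Espace}\phi^{\Fspace^2}$; note that this last statement depends only on the support $\Espace$.

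It then remains to bound $\Pr(\phi^{\Espace\cdot\Fspace^2}\neq\phi^{\Espace}\phi^{\Fspace^2})$. By Lemma~\ref{lem:uniform_support}, drawing $\e$ uniformly among vectors whose support has rank profile $\phi$ makes $\Espace$ uniform among the $\Rq$-submodules of $\Rqm$ of that rank profile, so Theorem~\ref{thm:general_product} applies with the fixed module $\Fspace^2$, of rank $\beta':=\rk_{\Rq}(\Fspace^2)$. Since $\Fspace^2$ is generated by the $\tfrac{\lambda(\lambda+1)}{2}$ products $f_if_j$ with $i\le j$, we have $\beta'\le\tfrac{\lambda(\lambda+1)}{2}$ and hence $t\beta'<m$, so the hypotheses are met and the probability is at most $(1-p^{-s\beta'})\sum_{i=1}^{t}\sum_{j=0}^{r-1}p^{s(r-j)(i\beta'-m)}$. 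As $i\beta'-m<0$ for every $i\le t$, both $1-p^{-s\beta'}$ and each summand are increasing in $\beta'$, so this is at most the same expression with $\beta'$ replaced by $\tfrac{\lambda(\lambda+1)}{2}$, which in turn is $\le 2tp^{s(t\lambda(\lambda+1)/2-m)}$ by the last inequality of Theorem~\ref{thm:general_product}. Combined with the reduction above, this gives the claimed bound.

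The step I expect to be the main obstacle is the bookkeeping of the conditioning: one literally needs $\Pr(\text{intersection fails}\mid\text{syndrome}\wedge\text{product})\le\Pr(\phi^{\Espace\cdot\Fspace^2}\neq\phi^{\Espace}\phi^{\Fspace^2})$, and since conditioning on the (error-dependent) syndrome event can in principle reshape the distribution of $\Espace$, this has to be argued as in the finite-field case — for instance by first fixing $\Espace$, on which the intersection event becomes deterministic once the syndrome condition is imposed, and then using that the second product event depends on $\Espace$ alone, whose marginal stays uniform. The remaining, routine points are checking that $\Rq[y]$ is genuinely one of the subrings covered by the Galois-ring classification and the elementary estimate $\rk_{\Rq}(\Fspace^2)\le\tfrac{\lambda(\lambda+1)}{2}$.
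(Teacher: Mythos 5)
Your proof follows essentially the same route as the paper's: from an intersection failure (under the syndrome and product conditions) you extract an element $e'\in(\Espace\cdot\Fspace)\setminus\Espace$ with $e'\Fspace\subseteq\Espace\cdot\Fspace$, feed it into Lemma~\ref{lem:intersection_failure_lemma} with $\Aspace=\Espace$, $\Bspace=\Fspace$, rule out the ``$y$ exists'' branch via the base-ring property and the intermediate-ring hypothesis, conclude $\phi^{\Espace\cdot\Fspace^2}\neq\phi^{\Espace}\phi^{\Fspace^2}$, and then bound that event via Lemma~\ref{lem:uniform_support} and Theorem~\ref{thm:general_product} applied to $\Bspace=\Fspace^2$, using $\rk_{\Rq}(\Fspace^2)\le\tfrac{\lambda(\lambda+1)}{2}$ and monotonicity of the bound in the rank of the fixed factor. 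You spell out a few checks the paper compresses (that $\Espace\subseteq\bigcap_i\Sspace_i$ under the syndrome condition, and that $f_ie'\in\Sspace$ for all $i$ implies $e'\Fspace\subseteq\Espace\cdot\Fspace$; the paper phrases the latter as $\Espace'\cdot\Fspace=\Espace\cdot\Fspace$), but these are the same argument.

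The conditioning subtlety you flag is real and is in fact glossed over in the paper: the paper's proof quotes the \emph{unconditional} bound on $\Pr\bigl(\phi^{\Espace\cdot\Fspace^2}\neq\phi^{\Espace}\phi^{\Fspace^2}\bigr)$ without arguing that this dominates the conditional probability in the theorem statement. The clean fix, which you sketch, is to observe that the deduction gives the event inclusion $\{\text{product}\}\cap\{\text{syndrome}\}\cap\{\text{intersection fails}\}\subseteq\{\phi^{\Espace\cdot\Fspace^2}\neq\phi^{\Espace}\phi^{\Fspace^2}\}$, so the \emph{joint} probability of the left-hand event is at most the right-hand probability; this joint bound is precisely what the chain-rule assembly in Theorem~\ref{thm:main_statement} needs, so dividing by $\Pr(\text{product}\wedge\text{syndrome})$ is never actually required.
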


\begin{proof}
Suppose that the product ($\phi^{\Espace \cdot \Fspace} = \phi^{\Espace} \phi^{\Fspace}$) and syndrome ($\Sspace = \Espace \cdot \Fspace$) conditions are fulfilled, and assume that the intersection condition is not fulfilled.
Then we have
$\bigcap_{i=1}^{\lambda} \Sspace_i =: \Espace' \supsetneq \Espace$.
Choose any $e \in \Espace' \setminus \Espace$.
Since $\Fspace$ contains $1$ by assumption, we have $e \in \Aspace \cdot \Bspace$.
Due to $\Aspace \subseteq \Espace$, we have $e \notin \Aspace$.
Furthermore, we have $\Espace' \cdot \Bspace = \Espace \cdot \Bspace$, so all conditions on $e$ of Lemma~\ref{lem:intersection_failure_lemma} are fulfilled.

Since $\Espace$ is chosen uniformly at random from all free submodules of $\Rqm$ of rank $t$, we can apply Theorem~\ref{thm:general_product} and obtain that
$\phi^{\Espace \cdot \Fspace^2} = \phi^{\Espace} \phi^{\Fspace^2}$
with probability at least
\begin{align*}
&\Pr\!\left(\phi^{\Aspace \cdot \Bspace^2} \neq \phi^{\Aspace} \phi^{\Bspace^2}\right) \\
&\leq \left(1-p^{-s\lambda'}\right) \sum_{i=1}^{t} \sum_{j = 0}^{r-1} p^{s(r-j)(i \lambda'-m)}\\
&\leq \left(1-p^{-s\frac{\lambda(\lambda+1)}{2}}\right) \sum_{i=1}^{t} \sum_{j = 0}^{r-1} p^{s(r-j)\left(i \frac{\lambda(\lambda+1)}{2}-m\right)} \\
&\leq 2 t p^{s\left(t \frac{\lambda(\lambda+1)}{2}-m\right)}
\end{align*}
where $\lambda' := \rk_{\Rq}(\Fspace^2) \leq \tfrac{1}{2}\lambda(\lambda+1)$ (this is clear since $\Fspace^2$ is generated by the products of all unordered element pairs of an \mingenset of $\Fspace$).

Hence, with probability at least one minus this value, both conditions of Lemma~\ref{lem:intersection_failure_lemma} are fulfilled.
In that case, there is an element $y \in \Fspace \setminus \Rq$ such that $y \Fspace \subseteq \Fspace$.
Thus, also $y^i \Fspace \subseteq \Fspace$ for all positive integers $i$, and we have that the ring $\Rq(y)$ extended by the element $y \notin \Rq$ fulfills $\Rq(y) \subseteq \Fspace$ (this holds since $\Fspace$ contains at least one unit).
This is a contradiction to the assumption on intermediate rings.
\qed
\end{proof}

\subsection{Overall Failure Probability}

The following theorem states the overall bound on the failure probability, exploiting the  bounds derived in Theorems~\ref{thm:product_condition_main_statement}, \ref{thm:syndrome_condition_main_statement}, and~\ref{thm:intersection_failure_main_statement}.

\begin{theorem}\label{thm:main_statement}
Let $\Fspace$ be defined as in Defintion~\ref{def:LRPCcodes} such that it has the base-ring property (i.e., $1 \in \Fspace$).
Suppose that no intermediate ring $R'$ between $\Rq \subsetneq R' \subseteq \Rqm$ is contained in $\Fspace$ (this holds, e.g., for $\lambda$ greater than the smallest divisor of $m$ or for special $\Fspace$).
Suppose further that $\H$ has the {maximal-row-span} and {unity properties} (cf.~Definition~\ref{def:H_properties}).

Let $t$ be a positive integer with $t \tfrac{\lambda(\lambda+1)}{2} < m$ and $t \lambda < n-k+1$, and let $\phi(x) \in \ZZ[x]/(x^r)$ with nonnegative coefficients such that $\phi(1)=t$.
Choose $\e \in \Rqm^n$ uniformly at random from the set of vectors with whose support has rank profile $\phi$.

Then, Algorithm~\ref{alg:decoder} with input $\c+\e$ returns $\c$ with a failure probability of at most
\begin{align}
\Pr(\text{failure}) &\leq \left(1-p^{-s\lambda}\right) \sum_{i=1}^{t} \sum_{j = 0}^{r-1} p^{s(r-j)\left(i \lambda-m\right)} \notag \\
&\quad + \left[1- \prod_{i=0}^{\lambda t -1} \left(1-p^{[i-(n-k)]s}\right)\right] \notag \\
&\quad + \left(1-p^{-s\frac{\lambda(\lambda+1)}{2}}\right) \sum_{i=1}^{t} \sum_{j = 0}^{r-1} p^{s(r-j)\left(i \frac{\lambda(\lambda+1)}{2}-m\right)} \label{eq:bound_complicated_but_tighter} \\
&\leq 4 p^{s[\lambda t-(n-k+1)]} + 4 t p^{s\left(t \frac{\lambda(\lambda+1)}{2}-m\right)} \label{eq:bound_simple_but_loose}
\end{align}
\end{theorem}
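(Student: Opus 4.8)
The plan is to reduce the statement to a union bound over the three failure events and then invoke Theorems~\ref{thm:product_condition_main_statement}, \ref{thm:syndrome_condition_main_statement}, and~\ref{thm:intersection_failure_main_statement}. Recall from the discussion following Algorithm~\ref{alg:decoder} that the decoder returns $\c$ whenever the product, syndrome, and intersection conditions all hold; hence a decoding failure implies that at least one of the three fails. Write $A$, $B$, $C$ for the events that the product, syndrome, and intersection conditions hold, respectively. I would start from the elementary decomposition
\begin{align*}
\Pr(\neg A \lor \neg B \lor \neg C) = \Pr(\neg A) + \Pr(A \land \neg B) + \Pr(A \land B \land \neg C),
\end{align*}
bound the last two summands by $\Pr(\neg B \mid A)$ and $\Pr(\neg C \mid A \land B)$ (using $\Pr(A) \le 1$ and $\Pr(A \land B) \le 1$), and conclude $\Pr(\text{failure}) \le \Pr(\neg A) + \Pr(\neg B \mid A) + \Pr(\neg C \mid A \land B)$.

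Next I would bound the three terms using the hypotheses assumed in the statement. By Lemma~\ref{lem:uniform_support}, choosing $\e$ uniformly among all vectors whose support has rank profile $\phi$ induces the uniform distribution on the set of $\Rq$-submodules of $\Rqm$ with rank profile $\phi$; hence Theorem~\ref{thm:product_condition_main_statement} applies verbatim and bounds $\Pr(\neg A)$ by the first ``complicated'' term. For the second term, Theorem~\ref{thm:syndrome_condition_main_statement}---which needs the maximal-row-span and unity properties of $\H$, both assumed here---bounds $\Pr(\neg B \mid \Espace)$ for every fixed error space $\Espace$ of rank $t$ satisfying the product condition; since this bound does not depend on $\Espace$, averaging over all such $\Espace$ (with the distribution induced by the error model) yields the same bound on $\Pr(\neg B \mid A)$. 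For the third term, Theorem~\ref{thm:intersection_failure_main_statement}---which needs the base-ring property and the no-intermediate-ring hypothesis on $\Fspace$, again assumed here---bounds $\Pr(\neg C \mid A \land B)$ by the third ``complicated'' term. Summing the three bounds gives exactly \eqref{eq:bound_complicated_but_tighter}.

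To obtain the simpler bound \eqref{eq:bound_simple_but_loose} I would coarsen: since $\lambda \ge 1$ and $p^s \ge 2$, we have $t\lambda \le t\tfrac{\lambda(\lambda+1)}{2}$ and therefore $2tp^{s(t\lambda-m)} \le 2tp^{s(t\lambda(\lambda+1)/2 - m)}$, so the product-condition contribution is absorbed into the intersection-condition contribution, producing the single term $4tp^{s(t\lambda(\lambda+1)/2 - m)}$; the syndrome-condition term was already bounded by $4p^{-s(n-k+1-\lambda t)} = 4p^{s[\lambda t-(n-k+1)]}$ inside Theorem~\ref{thm:syndrome_condition_main_statement}.

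Since the substantive work is carried by the three cited theorems, the only real obstacle here is the probabilistic bookkeeping: one must make sure the conditioning in each cited result matches the event bounded in Theorem~\ref{thm:main_statement}---in particular the transfer from the per-$\Espace$ syndrome bound to $\Pr(\neg B \mid A)$, and that the hypotheses imposed on $\H$ and $\Fspace$ are exactly those required by Theorems~\ref{thm:syndrome_condition_main_statement} and~\ref{thm:intersection_failure_main_statement}. The one conceptual point to re-verify is that the three conditions are jointly sufficient for the decoder to output $\c$, which is the content of the discussion in Section~\ref{sec:decoding} together with the unique erasure decoding guarantee (Lemma~\ref{lem:erasure_decoding}).
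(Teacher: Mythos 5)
Your proposal follows essentially the same route as the paper, which simply invokes the union bound over the three failure events and cites Theorems~\ref{thm:product_condition_main_statement}, \ref{thm:syndrome_condition_main_statement}, and~\ref{thm:intersection_failure_main_statement}. The only difference is that you spell out the bookkeeping—the disjoint decomposition $\Pr(\neg A) + \Pr(A \land \neg B) + \Pr(A \land B \land \neg C)$ needed to match the conditional form of the cited theorems, and the absorption $2tp^{s(t\lambda-m)} \le 2tp^{s(t\lambda(\lambda+1)/2 - m)}$ that yields \eqref{eq:bound_simple_but_loose}—which the paper leaves implicit.
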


\begin{proof}
The statement follows by applying the union bound to the failure probabilities of the three success conditions, derived in Theorems~\ref{thm:product_condition_main_statement}, \ref{thm:syndrome_condition_main_statement}, and~\ref{thm:intersection_failure_main_statement}.
\qed
\end{proof}

The simplified bound \eqref{eq:bound_simple_but_loose} in Theorem~\ref{thm:main_statement} coincides up to a constant with the bound by Gaborit et at.\ \cite{gaborit2013low} in the case of a finite field (Galois ring with $r=1$).
If we compare an LRPC code over a finite field of size $p^{rs}$ and with an LRPC code over a Galois ring with parameters $p,r,s$ (i.e., the same cardinality), then we can observe that the bounds have the same exponent, but the base of the exponent is different: It is $p^{rs}$ for the field and $p^s$ for the ring case. Hence, the maximal decoding radii $t_\mathrm{max}$ (i.e., the maximal rank $t$ for which the bound is $<1$) are roughly the same, but the exponential decay in $t_\mathrm{max}-t$ for smaller error rank $t$ is slower in case of rings due to a smaller base of the exponential expression.
This ``loss'' is expected due to the weaker structure of modules over Galois rings compared to vector spaces over fields.

\section{Decoding Complexity}\label{sec:complexity}

We discuss the decoding complexity of the decoding algorithm described in Section~\ref{sec:decoding}.
Over a field, all operations within the decoding algorithm are well-studied and it is clear that the algorithm runs in roughly $\softO(\lambda^2 n^2 m)$ operations over the small field $\Fq$.
Although we believe that an analog treatment over the rings studied in this paper must be known in the community, we have not found a comprehensive complexity overview of the corresponding operations in the literature.
Hence, we start the complexity analysis with an overview of complexities of ring operations and linear algebra over these rings.

\subsection{Cost Model and Basic Ring Operations}

We express complexities in operations in $\Rq$.
For some complexity expressions, we use the soft-O notation, i.e., $f(n) \in \softO(g(n))$ if there is a $r \in \ZZ_{\geq 0}$ such that $f(n) \in \softO(g(n) \log(g(n))^r)$.
We use the following result, which follows straightforwardly from standard computer-algebra methods in the literature.

\begin{lemma}[Collection of results in \cite{von2013modern}]
Addition in $\Rqm$ costs $m$ additions in $\Rq$.
Multiplication in $\Rqm$ can be done in $O(m \log(m) \log(\log(m)))$ operations in $\Rq$.
\end{lemma}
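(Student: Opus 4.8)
The plan is to unwind the definition $\Rqm = \Rq[z]/(h(z))$ with $\deg h = m$ and leading coefficient a unit, and to represent every element of $\Rqm$ by its unique residue, i.e.\ a polynomial in $\Rq[z]$ of degree at most $m-1$, stored as its coefficient vector in $\Rq^m$. With this representation both claims reduce to standard cost bounds for polynomial arithmetic over a commutative ring.

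For addition the residue polynomials are added coefficientwise and the result already has degree less than $m$, so no reduction is needed; this costs exactly $m$ additions in $\Rq$. For multiplication I would proceed in two stages. First, multiply the two residue polynomials $a(z),b(z)\in\Rq[z]_{<m}$ as ordinary polynomials over the commutative ring $\Rq$, obtaining a polynomial of degree at most $2m-2$; invoking the Cantor--Kaltofen / Sch\"onhage--Strassen multiplication algorithm (the ``collection of results in \cite{von2013modern}''), which is valid over an arbitrary commutative ring, this costs $O(m\log m\log\log m)$ operations in $\Rq$. Second, reduce the product modulo $h(z)$: since the leading coefficient of $h$ is a unit, one may normalize $h$ to be monic once and for all (one inversion plus $m$ multiplications in $\Rq$, a one-time precomputation) and then apply fast Euclidean division with remainder, in which the quotient is computed by a Newton iteration that inverts the reversed polynomial of $h$ modulo $z^m$ using only multiplications of polynomials of degree $O(m)$, again costing $O(m\log m\log\log m)$ operations in $\Rq$; the remainder is then $ab-qh$, one further multiplication and subtraction of the same size. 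Summing the two stages gives the asserted bound.

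The main point that deviates from the finite-field textbook treatment — and hence the place to be careful — is that $\Rq$ is neither a field nor an integral domain, so one must check that the fast algorithms remain available at the stated cost. For the multiplication step this is guaranteed because the Cantor--Kaltofen bound $O(m\log m\log\log m)$ holds over \emph{any} commutative ring, with no invertibility hypothesis. For the division step the only requirement is that the divisor $h$ be monic (equivalently, have unit leading coefficient), which holds by the very construction of $\Rqm$ in Section~\ref{sec:preliminaries}, so the Newton-iteration-based inversion of the reversed $h$ modulo $z^m$ goes through verbatim. No cancellation or zero-divisor issues arise because all these routines are purely formal identities in $\Rq[z]$, so I do not expect any genuine obstacle beyond correctly citing the ring-valid versions of the fast-arithmetic results.
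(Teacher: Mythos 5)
Your proposal is correct and follows essentially the same route as the paper's own proof: represent elements by their unique degree-$<m$ residues, add coefficientwise for the $m$-additions bound, and for multiplication combine ring-valid fast polynomial multiplication (Sch\"onhage--Strassen/Cantor--Kaltofen, \cite[Section~8.3]{von2013modern}) with fast division with remainder via Newton iteration (\cite[Section~9.1]{von2013modern}), both of which work over any commutative ring with $1$. Your added remark about normalizing $h$ to be monic (its leading coefficient is a unit by construction) is a harmless refinement of the same argument.
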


\begin{proof}
We represent elements of $\Rqm$ as residue classes of polynomials in $\Rq[z]/(h(z))$ (e.g., each residue class is represented by its unique representative of degree $<m$), where $h \in \Rq[z]$ is a monic polynomial of degree $m$ as explained in the preliminaries.

Addition is done independently on the $m$ coefficients of the polynomial representation, so it only requires $m$ additions in $\Rq$.
Multiplication consists of multiplying two residue classes in $\Rq[z]/(h(z))$, which can be done by multiplying the two representatives of degree $<m$ and then taking them modulo $(h(z))$ (i.e., take the remainder of the division by the monic polynomial $h$). Both multiplication and division can be implemented in $O(m \log(m) \log(\log(m)))$ time using Sch\"onhage and Strassen's polynomial multiplication algorithm (cf.~\cite[Section~8.3]{von2013modern}) and a reduction of division to multiplication using a Newton iteration (cf.~\cite[Section~9.1]{von2013modern}).
Note that both methods work over any commutative ring with $1$.
\qed
\end{proof}

\subsection{Linear Algebra over Galois Rings}

We recall how fast we can compute the Smith normal form of a matrix over $\Rq$ and show that computing the right kernel of a matrix and solving a linear system can be done in a similar speed.
Let $2 \leq \omega\leq 3$ be the matrix multiplication exponent (e.g., $\omega = 2.37$ using the Coppersmith--Winograd algorithm).

\begin{lemma}[\!\!{\cite[Proposition~7.16]{storjohann2000algorithms}}]\label{lem:complexity_smith_normal_form}
Let $\A \in \Rq^{a \times b}$. Then, the Smith normal form $\D$ of $\A$, as well as the corresponding transformation matrices $\S$ and $\T$, can be computed in
\begin{equation*}
O(a b \min\{a,b\}^{\omega-2} \log(a+b))
\end{equation*}
operations in $\Rq$.
\end{lemma}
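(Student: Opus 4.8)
The plan is short, because the statement is---up to the notation of this paper---precisely \cite[Proposition~7.16]{storjohann2000algorithms}. First I would record the only thing that really needs checking, namely that $\Rq=\GR(p^r,s)$ falls under the hypotheses of that reference: it is a finite chain ring, hence a homomorphic image of a principal ideal domain, and Storjohann's algorithm computing the Smith normal form together with the two unimodular transformation matrices applies to such rings verbatim. With this observation in place the bound follows by citation. For self-containedness I would then add the sketch described below.

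The idea behind the sketch is to represent each element of $\Rq$ by its $p$-adic (equivalently Teichm\"uller) expansion, i.e.\ by $r$ digits over the residue field $\Rq/\maxIdeal\cong\mathbb{F}_{p^s}$, and to run a block-recursive elimination. At each pass one reduces the current matrix modulo $\maxIdeal$, calls a fast $\mathbb{F}_{p^s}$-linear-algebra routine to pick a maximal set of \emph{unit} pivots in $O(ab\min\{a,b\}^{\omega-2})$ field operations, uses these pivots to clear the corresponding rows and columns over $\Rq$ while accumulating the elementary operations into $\S$ and $\T$, divides the surviving Schur-complement block by the uniformizer $g_\maxIdeal$, and recurses on it. Each division by $g_\maxIdeal$ lowers by one the largest valuation that can occur in the remaining block, so after at most $r$ passes one is left with a diagonal matrix $\D$ whose diagonal valuations are nondecreasing---the Smith form---a final permutation being free. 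Summing the per-pass cost over the recursion, whose depth is where the $\log(a+b)$ factor enters, yields $O(ab\min\{a,b\}^{\omega-2}\log(a+b))$.

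The only genuinely delicate part, and the reason one cites rather than reproves, is the bookkeeping: one must argue that maintaining and composing $\S$ and $\T$ through the recursion does not inflate the stated cost, and pin down the exact origin of the logarithmic factor. Both points are settled in the cited reference, so I expect no additional obstacle specific to Galois rings beyond carefully phrasing the reduction to Storjohann's setting.
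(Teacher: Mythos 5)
The paper gives no proof of this lemma; it is stated purely as a citation of Storjohann's Proposition~7.16, and your proposal correctly recognizes this. Your additional check that $\Rq$ is a finite chain ring (hence a quotient of a PID, so within the scope of that reference) is the only substantive point, and it is correct; the informal sketch you append is harmless background even if its $r$-pass recursion would, taken literally, yield a factor $r$ rather than $\log(a+b)$, a discrepancy you already flag as resolved only in the cited source.
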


\begin{lemma}\label{lem:complexity_right_kernel}
Let $\A \in \Rq^{a \times b}$. An \mingenset of the right kernel of $\A$ can be computed in $O(a b \min\{a,b\}^{\omega-2} \log(a+b))$ operations in $\Rq$.
\end{lemma}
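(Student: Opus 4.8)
The plan is to reduce the computation of an $\maxIdeal$-shaped basis of the right kernel of $\A \in \Rq^{a \times b}$ to the computation of a Smith normal form, which by Lemma~\ref{lem:complexity_smith_normal_form} costs $O(a b \min\{a,b\}^{\omega-2} \log(a+b))$ operations in $\Rq$. First I would compute $\D = \S \A \T$ with $\S \in \GL(a,\Rq)$ and $\T \in \GL(b,\Rq)$ the transformation matrices from Lemma~\ref{lem:complexity_smith_normal_form}, where $\D$ is diagonal with diagonal entries $d_1,\dots,d_{\min\{a,b\}}$ satisfying $d_j \in \maxIdeal^{i_j}\setminus\maxIdeal^{i_j+1}$ and $0 \le i_1 \le \dots \le r$ (entries beyond index $\min\{a,b\}$, or with $i_j = r$, being zero). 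Since right-multiplication by the invertible matrix $\T$ is a bijection on $\Rq^b$, the right kernel of $\A$ equals $\T$ times the right kernel of $\D$ (more precisely, $\x \in \Rq^b$ satisfies $\A\x = \0$ iff $\D(\T^{-1}\x) = \0$, so $\ker_{\mathrm r}(\A) = \T\cdot\ker_{\mathrm r}(\D)$), and $\S$ plays no role since it is invertible.

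Next I would describe the kernel of the diagonal matrix $\D$ explicitly. For each column index $j \le \min\{a,b\}$ the $j$-th coordinate is constrained by $d_j x_j = 0$, i.e. $x_j \in \Ann(d_j) = \maxIdeal^{r - i_j} = (g_\maxIdeal^{r-i_j})$; for each column index $j > \min\{a,b\}$ (if $b > a$) the coordinate $x_j$ is free. Hence an $\maxIdeal$-shaped generating set for $\ker_{\mathrm r}(\D)$ consists of the vectors $g_\maxIdeal^{r - i_j}\,\ve e_j$ for $j = 1,\dots,\min\{a,b\}$ (discarding those with $r - i_j \ge r$, i.e. $d_j = 0$ already giving the free unit vector $\ve e_j$ instead) together with the unit vectors $\ve e_j$ for the free coordinates; by construction the valuation of the $j$-th such generator is exactly $\max\{0, r - i_j\}$ for the constrained ones and $0$ for the free ones, so this is indeed an $\maxIdeal$-shaped basis in the sense defined in Section~\ref{sec:preliminaries} (one checks the $\{a_{i,\ell_i}\}$ obtained by stripping the $g_\maxIdeal$ powers — here the relevant standard basis vectors — are $\Rq$-linearly independent, which is immediate). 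Applying $\T$ to each of these vectors (a single $b \times b$ matrix times at most $b$ vectors, i.e. $O(b^3) \subseteq O(ab\min\{a,b\}^{\omega-2}\log(a+b))$ operations when $a \le b$; when $a > b$ one argues symmetrically or just notes this is dominated) yields an $\maxIdeal$-shaped basis of $\ker_{\mathrm r}(\A)$, since left-multiplication by the invertible $\T$ preserves $\Rq$-linear independence and commutes with the valuation structure in the required way. The total cost is dominated by the Smith normal form computation, giving the claimed bound.

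The main obstacle is the bookkeeping needed to verify that the output is genuinely an $\maxIdeal$\emph{-shaped} basis (not merely a generating set or an arbitrary basis): one must confirm that after multiplying by $\T$ the stripped generators $\{a_{i,\ell_i}\}$ remain $\Rq$-linearly independent, and that the $g_\maxIdeal$-valuations come out exactly as the rank profile of the kernel module prescribes. This follows because $\T \in \GL(b,\Rq)$ maps a basis of the free module spanned by the relevant standard basis vectors to another basis of a free module of the same rank, and because $v(g_\maxIdeal^{r-i_j}\,\T\ve e_j) = (r - i_j) + v(\T\ve e_j) = r - i_j$ as $\T\ve e_j$ is a unit vector (a column of an invertible matrix, hence not contained in $\maxIdeal\Rq^b$). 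A minor secondary point is handling the degenerate sub-cases $a = b$ (where the hypothesis $a < b$ is stated, so this need not be treated) and the placement of the $O(b^3)$ matrix-vector cost inside the stated asymptotic bound, which is routine once one observes $b^3 \le ab\,b^{\omega-2}\cdot b^{3-\omega} $ is not quite enough and instead uses that we may freely transpose to assume $a \le b$, or simply that the kernel has at most $b$ generators each of length $b$ and $b \le \max\{a,b\}$ so $b^3 \le ab\min\{a,b\}^{\omega-2}\cdot(\text{const})$ fails only when $\omega$ is large — in which case one reduces the matrix-vector multiplications to $O(\min\{a,b\}^{\omega-1})$ blocks using fast matrix multiplication as well. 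I would keep this last estimate brief.
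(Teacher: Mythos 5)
Your plan is essentially the same as the paper's: compute a Smith normal form $\D = \S\A\T$, describe $\ker_{\mathrm r}(\D)$ coordinate-wise via the annihilators $\Ann(d_j) = \maxIdeal^{r-i_j}$, take as generators the standard-basis-vector multiples $g_\maxIdeal^{r-i_j}\ve{e}_j$ (together with $\ve{e}_j$ for the free coordinates), and map back by $\T$. Up to that point the argument matches the paper.

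There is, however, a genuine gap in your final cost accounting, and the rest of your last paragraph is off the mark. You price the step ``apply $\T$ to the (at most $b$) kernel generators'' at $O(b^3)$, then claim this fits inside $O(ab\min\{a,b\}^{\omega-2}\log(a+b))$ when $a \le b$. It does not: with $a \le b$ the bound reads $O(a^{\omega-1}b\log(a+b))$, and $b^3 \le a^{\omega-1}b\log(a+b)$ fails badly (take $a$ small and $b$ large). You then half-notice this and propose to ``transpose to assume $a \le b$'' or invoke fast matrix multiplication, but transposing is not available (you would be computing a left kernel) and the fast-matrix-multiplication remark does not resolve the issue. Also, the aside that ``the hypothesis $a < b$ is stated'' is wrong — this lemma has no such hypothesis; you are likely confusing it with the counting lemma on full-free-rank matrices.

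The missing idea is a sparsity observation: the generator matrix $\K$ of $\ker_{\mathrm r}(\D)$ that you constructed has at most one nonzero entry per row and per column (each row is a scalar multiple of a standard basis vector). Consequently, forming $\K\T^\top$ is not a dense $b\times b$ times $b\times b$ product but a selection-and-scaling of at most $b$ rows of $\T^\top$, costing $O(b^2)$ operations in $\Rq$ rather than $O(b^3)$. With that observation the Smith-normal-form computation dominates and the stated bound follows cleanly. Separately, your concern about whether the $\maxIdeal$-shaped structure survives multiplication by $\T$ is handled correctly in your proposal (linear independence and valuations are preserved because $\T$ is invertible), so the only substantive fix needed is the $O(b^2)$-vs-$O(b^3)$ point.
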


\begin{proof}
We compute the Smith normal form $\D = \S \A \T$ and the transformation matrices $\S$ and $\T$ of $\A$.
To compute the right kernel, we need to solve the homogeneous linear system $\A \x = \0$ for $\x$.
Using the Smith normal form, we can rewrite it into
\begin{align*}
\D \T^{-1} \x = \0.
\end{align*}
Denote $\y := \T^{-1} \x$ and first solve $\D \y = \0$.
W.l.o.g., let the diagonal entries of $\D$ be of the form
\begin{equation*}
\begin{bmatrix} \I_{n_0} & & & & \\
  & g_\maxIdeal \I_{n_1} & & & \\
  & & \ddots & & \\
  & & & g_{\maxIdeal}^{r-1}\I_{n_{r-1}} & \\
  & & &  & \0
 \end{bmatrix}
\end{equation*}
where the $n_i$ are the coefficients of the rank profile $\phi(x)=\sum_{i=0}^{r-1}n_ix^i\in \mathbb{N}[x]/(x^r)$ of $\A$'s row space.
Then, the rows of the following matrix are an \mingenset of the right kernel of $\D$ (we denote by $\eta := n_0$ the free rank of $\A$'s row space and by $\mu := \sum_{i=0}^{r-1}n_i)$ the rank of $\A$'s row space):
\begin{align*}
\K := \begin{bmatrix}
\0_{(\mu-\eta) \times \eta} & \B & \0_{(\mu-\eta) \times (b-\mu)} \\
\0_{(b-\mu) \times \eta} & \0_{(b-\mu) \times (\mu-\eta)} & \I_{(b-\mu) \times (b-\mu)} \\
\end{bmatrix} \in \Rq^{(b-\eta) \times b},
\end{align*}
where
\begin{align*}
\B :=
\begin{bmatrix}
g_\maxIdeal^{r-1} \I_{n_1} & & & \\
& g_\maxIdeal^{r-2} \I_{n_1} & & \\
&&  \ddots & \\
& & & g_{\maxIdeal}^{1}\I_{n_{r-1}} \\
\end{bmatrix}.
\end{align*}

Hence, the rows of $\K \T^\top$ form an \mingenset of the right kernel of $\A$. Note that this matrix multiplication can be implemented with complexity $O(b^2)$ since $\K$ has only at most one entry per row and column.
\qed
\end{proof}

\begin{lemma}\label{lem:complexity_find_one_solution_of_one_system}
Let $\A \in \Rq^{a \times b}$ and $\b \in \Rq^{a}$. A solution of the linear system $\A \x = \b$ (or, in case no solution exists, the information that it does not exist) can be obtained in $O(a b \min\{a,b\}^{\omega-2} \log(a+b))$ operations in $\Rq$.
\end{lemma}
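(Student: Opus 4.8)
The plan is to reduce the problem of solving $\A\x=\b$ to the Smith normal form computation from Lemma~\ref{lem:complexity_smith_normal_form}, in exact analogy with the kernel computation in Lemma~\ref{lem:complexity_right_kernel}. First I would compute $\D=\S\A\T$ together with the unimodular transformation matrices $\S\in\GL(a,\Rq)$ and $\T\in\GL(b,\Rq)$; by Lemma~\ref{lem:complexity_smith_normal_form} this costs $O(ab\min\{a,b\}^{\omega-2}\log(a+b))$ operations in $\Rq$. Multiplying $\b$ by $\S$ gives $\b':=\S\b$ in $O(ab)$ operations (or $O(a\min\{a,b\}^{\omega-2}\cdots)$, which is dominated by the Smith step). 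The system $\A\x=\b$ is then equivalent to $\D\T^{-1}\x=\b'$; writing $\y:=\T^{-1}\x$, it suffices to solve the diagonal system $\D\y=\b'$ and then recover $\x=\T\y$, the latter being a single matrix-vector product costing $O(b^2)$ operations, again dominated by the Smith cost.

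The core of the argument is that the diagonal system $\D\y=\b'$ can be solved (or declared infeasible) componentwise in $O(a+b)$ ring operations. Writing the diagonal entries of $\D$ as $g_\maxIdeal^{i_1},\dots,g_\maxIdeal^{i_{\mu}}$ (with $i_j\in\{0,\dots,r-1\}$) followed by zeros, the $j$-th equation reads $g_\maxIdeal^{i_j} y_j = b'_j$ for $j\le\mu$, and $0=b'_j$ for $j>\mu$ (together with $0=b'_j$ for $j>a$ if $a<b$, i.e. the ``missing'' rows). For each $j>\mu$ we check whether $b'_j=0$: if some such entry is nonzero, no solution exists and we return that information. For $j\le\mu$, the equation $g_\maxIdeal^{i_j}y_j=b'_j$ has a solution if and only if $v(b'_j)\ge i_j$, i.e. $b'_j\in\maxIdeal^{i_j}$, in which case one valid $y_j$ is obtained by dividing out the appropriate power of $g_\maxIdeal$ from the Teichm\"uller (or $p$-adic) expansion of $b'_j$; this is a single ring-level operation, and again feasibility is a membership test in an ideal of $\Rq$. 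The remaining coordinates $y_{\mu+1},\dots,y_b$ (corresponding to the zero columns of $\D$) can be set to $0$. Summing over all $b$ coordinates gives $O(b)$ ring operations for this step, which is dominated by the Smith normal form cost.

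Finally I would remark on correctness: since $\S$ and $\T$ are invertible over $\Rq$, the map $\x\mapsto\y=\T^{-1}\x$ is a bijection, and $\A\x=\b \iff \D\y=\S\b$, so the procedure returns a genuine solution when one exists and correctly detects infeasibility otherwise. I do not expect a genuine obstacle here; the only point requiring a little care is the feasibility/solvability criterion for each scalar equation $g_\maxIdeal^{i_j}y_j=b'_j$ over the chain ring $\Rq$ (namely $b'_j\in\maxIdeal^{i_j}$, equivalently $v(b'_j)\ge i_j$, in which case $y_j$ is determined modulo $\Ann(g_\maxIdeal^{i_j})=\maxIdeal^{r-i_j}$), and the bookkeeping to see that every auxiliary step is absorbed into the $O(ab\min\{a,b\}^{\omega-2}\log(a+b))$ bound of the Smith normal form computation.
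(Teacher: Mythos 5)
Your proposal is correct and follows essentially the same route as the paper's proof: compute the Smith normal form $\D=\S\A\T$, transform the system to the diagonal one $\D\y=\S\b$ with $\y=\T^{-1}\x$, check solvability via the membership conditions $b'_j\in\maxIdeal^{i_j}$ (and $b'_j=0$ beyond the rank), and recover $\x=\T\y$, with the Smith normal form dominating the cost. Your write-up is in fact slightly more explicit than the paper's (which leaves the componentwise solution of the diagonal system as ``easy''), but there is no substantive difference in approach.
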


\begin{proof}
We follow the same strategy and the notation as in Lemma~\ref{lem:complexity_right_kernel}.
Solve
\begin{align*}
\D \underbrace{\T^{-1} \x}_{=: \, \y} = \S \b =: \b'.
\end{align*}
for one $\y$. The system has a solution if and only if $b_j' \in \MaxIdeal^{i_j}$ for $j=1,\dots,r'$, and $b_j' = 0$ for all $j>r'$. In case it has a solution, it is easy to obtain a solution $\y$. Then we only need to compute $\x = \T \y$, which is a solution of $\A \x = \b$.
The heaviest step is to compute the Smith normal form, which proves the complexity statement.
\qed
\end{proof}

\subsection{Complexity of the LRPC Decoder over Galois Rings}

\begin{theorem}\label{thm:complexity}
Suppose that the inverse elements $f_1^{-1},\dots,f_\lambda^{-1}$ are precomputed.
Then, Algorithm~\ref{alg:decoder} has complexity $\softO(\lambda^2 n^2 m)$ operations in $\Rq$.
\end{theorem}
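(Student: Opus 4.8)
The plan is to walk through Algorithm~\ref{alg:decoder} line by line, bounding the cost of each step in operations over $\Rq$ using the linear-algebra primitives established in Lemmas~\ref{lem:complexity_smith_normal_form}--\ref{lem:complexity_find_one_solution_of_one_system}. First I would handle the syndrome computation in Line~\ref{line:syndrome_computation}: the product $\r\H^\top$ with $\r\in\Rqm^n$ and $\H\in\Rqm^{(n-k)\times n}$ costs $O(n(n-k))$ multiplications and additions in $\Rqm$, each of which costs $\softO(m)$ operations in $\Rq$, for a total of $\softO(n^2 m)$. Computing a \mingenset for $\Sspace = \langle s_1,\dots,s_{n-k}\rangle_{\Rq}$ in Line~\ref{line:Sspace} amounts to computing a Smith normal form of an $(n-k)\times m$ matrix over $\Rq$ (the matrix representation of the $s_i$ w.r.t.\ a fixed basis of $\Rqm/\Rq$), which by Lemma~\ref{lem:complexity_smith_normal_form} costs $O((n-k)m\min\{n-k,m\}^{\omega-2}\log(n+m))$; since $\rk_{\Rq}(\Sspace)\leq \lambda t \leq m$ and $\lambda t \leq n-k+1$, this is within $\softO(\lambda^2 n^2 m)$ (in fact considerably less).

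Next I would treat the loop over $i=1,\dots,\lambda$. Each iteration computes $\Sspace_i = f_i^{-1}\Sspace$; multiplying each of the (at most $\lambda t$) \mingenset elements of $\Sspace$ by the precomputed $f_i^{-1}\in\Rqm$ costs $\softO(m)$ operations in $\Rq$ per element, so $\softO(\lambda t m)$ per iteration and $\softO(\lambda^2 t m)\subseteq\softO(\lambda^2 n m)$ overall (one should re-normalize each $\Sspace_i$ to a \mingenset via another Smith normal form, which is again dominated). The intersection $\Espace' = \bigcap_{i=1}^\lambda \Sspace_i$ in Line~\ref{line:S_i_intersection} is the main cost driver: I would realize it as a sequence of $\lambda-1$ pairwise module intersections, and each pairwise intersection of two $\Rq$-submodules of $\Rqm$ (each of rank $O(\lambda t)$) reduces to a right-kernel computation on a matrix with $O(\lambda t)$ rows and $m$ columns over $\Rq$ — intersecting rowspaces by stacking and kernel-ing — which by Lemma~\ref{lem:complexity_right_kernel} costs $O(\lambda t\, m\min\{\lambda t,m\}^{\omega-2}\log(\lambda t+m))$. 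Summing over the $\lambda$ intersections and using $\lambda t\leq m$ and $\lambda t\leq n$ gives $\softO(\lambda^2 n^2 m)$.

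Finally, erasure decoding in Line~\ref{line:erasure_decoding}, following the proof of Lemma~\ref{lem:erasure_decoding}, requires: forming $\H_{\mathrm{ext}}\in\Rq^{(n-k)\lambda\times n}$ (free once the decompositions $H_{i,j}=\sum_\ell h_{i,j,\ell}f_\ell$ are known, which can be read off with $\softO(nm(n-k))$ work, or precomputed), computing for each $\kappa=1,\dots,t$ the right-hand side $\s^{(\kappa)}$ by expressing $\s$ in the \mingenset $f_\ell\varepsilon_\kappa$ of $\Espace'\cdot\Fspace$ (a linear-algebra step of cost $\softO(\lambda t\,(n-k)\,m)$ in total), and then solving the $t$ systems $\s^{(\kappa)}=\H_{\mathrm{ext}}\e^{(\kappa)}$. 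One Smith normal form of $\H_{\mathrm{ext}}$ suffices for all $t$ right-hand sides; by Lemma~\ref{lem:complexity_find_one_solution_of_one_system} the factorization costs $O((n-k)\lambda\, n\,\min\{(n-k)\lambda,n\}^{\omega-2}\log((n-k)\lambda+n))$, and back-substituting $t$ times costs $\softO(t n^2)$; with $(n-k)\lambda\leq \lambda n$ and $t\leq n$ this is $\softO(\lambda^2 n^2 m)$. Checking uniqueness of the solution (Lemma~\ref{lem:erasure_decoding} guarantees it under the assumed conditions, but the algorithm verifies it) is dominated. Adding the contributions from all lines and taking the maximum, every step lies in $\softO(\lambda^2 n^2 m)$ operations in $\Rq$, which proves the claim. \textbf{The main obstacle} is bookkeeping the intersection step: one must argue that pairwise module intersection over $\Rq$ genuinely reduces to a single kernel computation of the stated size (rather than something growing with $r$) and that carrying \mingenset representations through the loop does not inflate the rank beyond $\lambda t$; the valuation identity \eqref{eq:valuation} and Lemma~\ref{lem:shiftedrank} are the tools that keep these sizes under control. \qed
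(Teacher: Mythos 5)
Your proof is correct and follows essentially the same line-by-line cost analysis as the paper, invoking the same auxiliary lemmas (Lemmas~\ref{lem:complexity_smith_normal_form}--\ref{lem:complexity_find_one_solution_of_one_system}) for syndrome computation, the $\lambda$ multiplications by $f_i^{-1}$, the $\lambda-1$ pairwise intersections, and the erasure-decoding linear systems over the fixed matrix $\H_{\mathrm{ext}}$. The one genuine departure is in the intersection step. The paper realizes each pairwise intersection via the double-complement identity $\Aspace\cap\Bspace=\ker(\ker(\Aspace)\cup\ker(\Bspace))$, which forces kernel computations on matrices with up to $m$ rows; you instead bound it as a single kernel computation on a matrix with $O(\lambda t)$ rows and $m$ columns, which is a Zassenhaus-style reduction (compute the left kernel of the stacked matrix $[\A^\top \mid -\B^\top]^\top$ for generator matrices $\A,\B$ of the two modules, then project onto the $\A$-block). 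Your variant is cleaner and slightly more economical when $\lambda t\ll m$, and the concern you flag at the end is real but resolves easily over a chain ring: the projection $(\x,\y)\mapsto\x\A$ is $\Rq$-linear, so the image of the left kernel is exactly $\Aspace\cap\Bspace$ as an $\Rq$-submodule, independent of $r$. Your bound on the SNF of $\H_{\mathrm{ext}}$, namely $(n-k)\lambda\, n\min\{(n-k)\lambda,n\}^{\omega-2}$, is also a tighter reading of Lemma~\ref{lem:complexity_smith_normal_form} than the paper's $n[(n-k)\lambda]^{\omega-1}$ (the two coincide exactly when $(n-k)\lambda=n$), but both sit comfortably within $\softO(\lambda^2 n^2 m)$.
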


\begin{proof}
The heaviest steps of Algorithm~\ref{alg:decoder} (see Section~\ref{sec:decoding}) are as follows:

Line~\ref{line:syndrome_computation} computes the syndrome $\s$ from the received word. This is a vector-matrix multiplication in $\Rqm$, which costs $O(n(n-k)) \subseteq O(n^2)$ operations in $\Rqm$, i.e., $\softO(n^2m)$ operations in $\Rq$.

Line~\ref{line:S_i_computation} is called $\lambda$ times and computes for each $f_i$ the set $S_i = f_i^{-1} \Sspace$ (recall that the inverses $f_i^{-1}$ are precomputed).
We obtain a generating set of $\Sspace_i$ by multiplying $f_i^{-1}$ to all syndrome coefficients $s_1,\dots,s_{n-k}$. This costs $O(\lambda(n-k))$ operations in $\Rqm$ in total, i.e., $\softO(\lambda n m)$ operations in $\Rq$. If we want a minimal generating set, we can compute the Smith normal form for each $\Sspace_i$, which costs $\softO(\lambda n^{\omega-1}m)$ operations in $\Rq$ according to Lemma~\ref{lem:complexity_smith_normal_form}.

Line~\ref{line:S_i_intersection} computes the intersection $\Espace' \gets \bigcap_{i=1}^{\lambda} \Sspace_i$ of the modules $\Sspace_i$.
This can be computed via the kernel computation algorithm as follows: Let $\Aspace$ and $\Bspace$ be two modules. Then, we have $\Aspace \cap \Bspace = \ker\left( \ker(\Aspace) \cup \ker(\Bspace) \right)$. Hence, we can compute the intersection $\Aspace \cap \Bspace$ by writing generating sets of the modules as the rows of two matrices $\A$ and $\B$, respectively. Then, we compute matrices $\A'$ and $\B'$, whose rows are generating sets of the right kernel of $\A$ and $\B$, respectively. Then, rows of the matrix $\C := \begin{bmatrix}
\A' \\
\B'
\end{bmatrix}$
are a generating set of $\ker(\Aspace) \cup \ker(\Bspace)$, and be obtain $\Aspace \cap \Bspace$ by computing again the right kernel of $\C$. By applying this algorithm iteratively to the $\Sspace_i$ (using the kernel computation algorithm described in Lemma~\ref{lem:complexity_right_kernel}), we obtain the intersection $\Espace'$ in $\softO(\lambda n^{\omega-1}m)$ operations.

Line~\ref{line:erasure_decoding} recovers an error vector $\e$ from the support $\Espace'$ and syndrome $\s$. As shown in the proof of Lemma~\ref{lem:erasure_decoding}, this can be done by solving $t$ linear systems over $\Rq$ with each $n$ unknowns and $(n-k)\lambda$ equations w.r.t.\ the same matrix $\H_{\mathrm{ext}}$. Hence, we only once need to compute the Smith normal form of $\H_{\mathrm{ext}}$, which requires $\softO(n [(n-k)\lambda]^{\omega-1})$ operations. The remaining steps for solving the systems (see~Lemma~\ref{lem:complexity_find_one_solution_of_one_system} to compute one solution, if it exists, and Lemma~\ref{lem:complexity_right_kernel} to compute an affine basis) consist mainly of matrix-vector operations, which require in total $\softO(t \lambda^2(n-k)^2)$ operations in $\Rq$, where $t \leq m$ is the rank of $\Espace'$.
Note that during the algorithm, it is easy to detect whether the systems have no solution, a unique solution, or more than one solution.
\qed
\end{proof}

\begin{remark}
The assumption that $f_1^{-1},\dots,f_\lambda^{-1}$ are precomputed makes sense since in many application, the code is chosen once and then several received words are decoded for the same $f_1,\dots,f_\lambda$.
Precomputation of all $f_1^{-1},\dots,f_\lambda^{-1}$ costs at most $\softO(\lambda m^\omega)$ since for $a \in \Rqm$, the relation $a^{-1} a \equiv 1 \mod h$ (for $a$ and $a^{-1}$ being the unique representative in $\Rq[z]/(h)$ with degree $<m$) gives a linear system of equations of size $m \times m$ over $\Rq$ with a unique solution $a^{-1}$. This complexity can only exceed the cost bound in Theorem~\ref{thm:complexity} if $m \gg n$.

In fact, we conjecture, but cannot rigorously prove, that the inverse of a unit in $\Rqm$ can be computed in $\softO(m)$ operations in $\Rq$ using a fast implementation of the extended Euclidean algorithm (see, e.g., \cite{von2013modern}).
If this is true, the precomputation cost is smaller than the cost bound in Theorem~\ref{thm:complexity}.
\end{remark}

The currently fastest decoder for Gabidulin codes over finite rings, the Welch--Berlekamp-like decoder in \cite{kamche2019rank}, has complexity $O(n^\omega)$ operations over $\Rqm$ since its main step is to solve a linear system of equations. Over $\Rq$, this complexity bound is $\softO(n^\omega m)$, i.e., it is larger than the complexity bound for our LRPC decoder for constant $\lambda$ and the same parameters $n$ and $m$.

\section{Simulation Results}
\label{sec:simulations}

We performed simulations of LRPC codes with $\lambda=2$, $k=8$ and $n=20$ (note that we need $k \leq \tfrac{\lambda-1}{\lambda}n$ by the unique-decoding property) over the ring $\Rqm$ with $p=r=2$, $s=1$ and $m=21$. In each simulation, we generated one parity-check matrix (fulfilling the maximal-row-span and the unity properties) and conducted a Monte Carlo simulation in which we collected at least $1000$ decoding errors and at least $50$ failures of every success condition. 
All simulations gave very similar results and confirmed our analysis. We present one of the simulation results in Figure~\ref{fig:sim} for errors of rank weight $t=1,\hdots,7$ and three different rank profiles.

We indicate by markers the estimated probabilities of violating the product condition (S: Prod), the syndrome condition (S: Synd), the intersection condition (S: Inter) as well as the decoding failure rate (S: Dec). Black markers denote the result of the simulations with errors of rank profile $\phi_1(x) = t$, blue markers show the result with errors of rank profile $\phi_2(x) = tx$ and orange markers indicate the result with rank profile $\phi_3(x) \in \{1,1+x,2+x,2+2x,3+2x,3+3x,4+3x \}$. 
Further, we show the derived bounds\footnote{In Figure~\ref{fig:sim}, we show for each condition the tightest bound that we derived.} on the probabilities of not fulfilling the product condition (B: Prod) given in Theorem~\ref{thm:product_condition_main_statement}, the syndrome condition (B: Synd) derived in Theorem~\ref{thm:syndrome_condition_main_statement}, the intersection condition (B: Inter) provided in Theorem~\ref{thm:intersection_failure_main_statement} and the union bound (B: Dec) stated in Theorem~\ref{thm:main_statement}. Since the derived bounds depend only on the rank weight $t$ but not on the rank profile, we show each bound only once. 

One can observe that the bound on the probability of not fulfilling the syndrome condition is very close to the true probability while the bounds on the probabilities of violating the product and syndrome condition are loose. Gaborit \emph{et al.} have made the same observation in the case of finite fields. In addition, it seems that only the rank weight but not the rank profile has an impact on the probabilities of violating the success conditions.

\begin{figure}[h]
\includegraphics{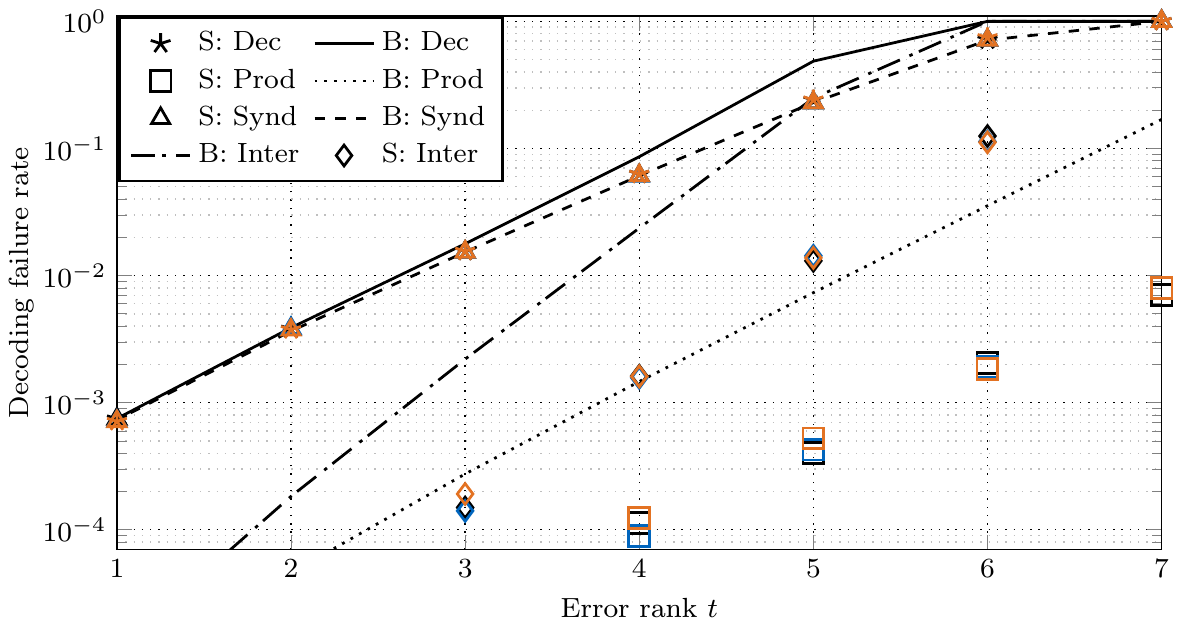}
  \caption{Simulation results for $\lambda=2$, $k=8$ and $n=20$ over $\Rqm$ with $p=r=2$, $s=1$ and $m=21$. The markers indicate the estimated probabilities of not fulfilling the product condition (S: Prod), the syndrome condition (S: Synd), the intersection condition (S: Inter) and the decoding failure rate (S: Dec), where the black, blue and orange markers refer to errors of rank profile $\phi_1(x) =t$, $\phi_2(x) =tx$ and $\phi_3(x)\in\{ 1,1+x,2+x,2+2x,3+2x,3+3x,4+3x \}$, respectively. The derived bounds on these probabilities are shown as lines.}
\label{fig:sim}
\end{figure}

We also found that the base-ring property of $\Fspace$ is---in all tested cases---not necessary for the failure probability bound on the intersection condition (Theorem~\ref{thm:intersection_failure_main_statement}) to hold.
It is an interesting question whether we can prove the bound without this assumption, both for finite fields and rings.

\section{Conclusion}

We have adapted low-rank parity-check codes from finite fields to Galois rings and showed that Gaborit et al.'s decoding algorithm works as well for these codes.
We also presented a failure probability bound for the decoder, whose derivation is significantly more involved than the finite-field analog due to the weaker structure of modules over finite rings.
The bound shows that the codes have the same maximal decoding radius as their finite-field counterparts, but the exponential decay of the failure bound has $p^s$ as a basis instead of the cardinality of the base ring $|\Rq|=p^{rs}$ (note $\Rq$ is a finite field if and only if $r=1$).
This means that there is a ``loss'' in failure probability when going from finite fields to finite rings, which  can be expected due to the zero divisors in the ring.

The results show that LRPC codes work over finite rings, and thus can be considered, as an alternative to Gabidulin codes over finite rings, for potential applications of rank-metric codes, such as network coding and space-time codes---recall from the introduction that network and space-time coding over rings may have advantages compared to the case of fields.
It also opens up the possibility to consider the codes for cryptographic applications, the main motivation for LRPC codes over fields.

Open problems are a generalization of the codes to more general rings (such as principal ideal rings); an analysis of the codes in potential applications; as well as an adaption of the improved decoder for LRPC codes over finite fields in \cite{aragon2019low} to finite rings.
To be useful for network coding (both in case of fields and rings), the decoder must be extended to handle row and column erasures in the rank metric (cf.~\cite{silva2008rank,kamche2019rank}).

\section*{Acknowledgment}

The work of J.~Renner was supported by the European Research Council (ERC) under the European Union’s Horizon 2020 research and innovation programme (grant agreement no.~801434).

A. Neri was supported by the Swiss National Science Foundation through grant no.~187711.

S.~Puchinger received funding from the European Union's Horizon 2020 research and innovation program under the Marie Sklodowska-Curie grant agreement no.~713683.

\bibliographystyle{splncs04}
\bibliography{main}

\appendix
\begin{appendices}
\section{Proofs of Corollaries~\ref{cor:num_free} and \ref{cor:num_mod}}\label{app:proofs_freemodule_corollaries}

In this section we provide the proofs of Corollaries~\ref{cor:num_free} and \ref{cor:num_mod} in Section~\ref{ssec:failure_product}.

Inspired by Proposition \ref{prop:freemodule_matrix}, we study the following notions.
For a given potential rank profile $\phi(x)=\sum_{i=0}^{r-1}n_ix^i\in \mathbb{N}[x]/(x^r)$, with $\phi(1)=N\leq m$, we consider the sets

  \begin{align*} G_\phi&:=\left\{\begin{bmatrix} \Y_{0,0} & g_\maxIdeal \Y_{0,1} & g_\maxIdeal^2 \Y_{0,2} & \cdots & g_\maxIdeal^{r-1} \Y_{0,r-1} \\ 
 \Y_{1,0} & \Y_{1,1} & g_\maxIdeal \Y_{1,2} & \cdots & g_\maxIdeal^{r-2} \Y_{1,r-1} \\
 \Y_{2,0} & \Y_{2,1} &  \Y_{2,2} & \cdots & g_\maxIdeal^{r-3} \Y_{2,r-1} \\
 \vdots & \vdots & \vdots & &\vdots \\
 \Y_{r-1,0} & \Y_{r-1,1} & \Y_{r-1,2} & \cdots & \Y_{r-1,r-1} \\
 \end{bmatrix} : \Y_{i,j}\in\Rq^{n_i\times n_j}  \right\},\\
      G_\phi^*&:=\left\{\begin{bmatrix} \Y_{0,0} & g_\maxIdeal \Y_{0,1} & g_\maxIdeal^2 \Y_{0,2} & \cdots & g_\maxIdeal^{r-1} \Y_{0,r-1} \\ 
 \Y_{1,0} & \Y_{1,1} & g_\maxIdeal \Y_{1,2} & \cdots & g_\maxIdeal^{r-2} \Y_{1,r-1} \\
 \Y_{2,0} & \Y_{2,1} &  \Y_{2,2} & \cdots & g_\maxIdeal^{r-3} \Y_{2,r-1} \\
 \vdots & \vdots & \vdots & &\vdots \\
 \Y_{r-1,0} & \Y_{r-1,1} & \Y_{r-1,2} & \cdots & \Y_{r-1,r-1} \\
 \end{bmatrix} : \Y_{i,j}\in\Rq^{n_i\times n_j}, \Y_{i,i}\in \GL(n_i,\Rq) \right\} \\
  H_\phi&:=\left\{ \begin{bmatrix} 0 \\
 g_\maxIdeal^{r-1}\Z_1 \\ g_\maxIdeal^{r-2}\Z_2 \\ \vdots \\ g_\maxIdeal \Z_{r-1}\end{bmatrix}: \Z_i\in \Rq^{n_i\times m}\right\}.
\end{align*}

Notice that 
\begin{enumerate}
    \item[\mylabel{p1}{(P1)}] $(G_\phi,+,\cdot)$ is a subring of $\Rq^{N\times N}$;
    \item[\mylabel{p2}{(P2)}] $G_\phi^*=G_\phi\cap \GL(N,\Rq)$;
    \item[\mylabel{p3}{(P3)}] $(G_\phi^*,\cdot)$ is a subgroup of $\GL(N,\Rq)$;
    \item[\mylabel{p4}{(P4)}] $(H_\phi,+)$ is a subgroup of $\Rq^{N\times m}$;
    \item[\mylabel{p5}{(P5)}] For every $\Y\in G_\phi$, $\Z\in H_\phi$, we have $\Y\Z\in H_\phi$;
    \item[\mylabel{p6}{(P6)}] If $\Y\in G_\phi^*$, then $\Z\longmapsto \Y\Z$ is a bijection of $H_\phi$.
\end{enumerate} 

With these tools and from Proposition \ref{prop:freemodule_matrix} we can deduce the two corollaries.

\begin{proof}[Proof of Corollary~\ref{cor:num_free}]
  First, denote by $n_i:=\phi_i^{\Mspace}$ and let $N:=n_0+\ldots+n_{r-1}$, and fix an $\Rq$-basis of $\Rqm$ so that we identify $\Rqm$ with $\Rq^m$.
Fix a  free module $\Nspace \in \mathrm{Free}(\Mspace)$ and let $\T_{\Nspace}$ be such that $\rowspace(\T_{\Nspace})=\Nspace$  By Proposition \ref{prop:freemodule_matrix}, we have
  \begin{align*}
      \mathrm{Free}(\Mspace) &=\{\rowspace(\Y\T_{\Nspace}+\Z) \mid \Y\in G_\phi^*,\Z \in H_\phi\}\\
      &=\{\rowspace(\T_{\Nspace}+\Y^{-1}\Z) \mid \Y\in G_{\phi}^*,\Z \in H_\phi\} \\
      &=\{\rowspace(\T_{\Nspace}+\Z) \mid \Z \in H_\phi\},
  \end{align*}
  where the last equality follows from \ref{p6}.
 It is immediate to see that $\rowspace(\T_{\Nspace}+\Z)=\Nspace=\rowspace(\T_{\Nspace})$ if and only if all the rows of $\Z$ belong to $\Nspace$. For the $i$th block of $n_i$ rows of $\Z$, we  can freely choose among all the elements in $g_{\maxIdeal}^{r-i}\Nspace$, that are $s^{iN}$. Hence we get
 \begin{align*}
     |\{ \Z \in H_{\phi} \mid \rowspace(\T_{\Nspace}+\Z)=\Nspace\}|&=|\{\Z\in H_{\phi} \mid \rowspace(\Z) \subseteq \Nspace \}| 
     =\prod_{i=1}^{r-1}s^{in_iN}.
 \end{align*}
 This means that every module is counted $\prod_{i=1}^{r-1}s^{in_iN}$ many times and we finally obtain
 $$ |\mathrm{Free}(\Mspace)|=\frac{|H_\phi|}{\prod_{i=1}^{r-1}s^{in_iN}}=\prod_{i=1}^{r-1}\frac{s^{in_im}}{s^{in_iN}}=s^{(m-N)\sum_{i=1}^{r-1}in_i }. $$
\qed
\end{proof}

\begin{proof}[Proof of Corollary~\ref{cor:num_mod}]
Let $\Mspace$ be an $\Rq$-submodule of $\Rqm$  with rank profile $\phi^{\Mspace}$ and observe that $\Mspace\in \mathrm{Mod}(\phi,\Nspace)$ if and only if $\Nspace \in \mathrm{Free}(\Mspace)$. 
 Identify $\Rqm$ with $\Rq^m$, and define 
\begin{equation*}
 \D:=\begin{bmatrix} \I_{n_0} & & & \\
  & g_\maxIdeal \I_{n_1} & & \\
  & & \ddots & \\
  & & & g_{\maxIdeal}^{r-1}\I_{n_{r-1}}
\end{bmatrix}.
\end{equation*}
With this notation, we have
\begin{equation*}
  \mathrm{Mod}(\phi,\Nspace)=\{\rowspace(\D\T) \mid \T \in \Rq^{N\times m}, \rowspace(\T)=\Nspace\}.
\end{equation*}
  Moreover, there are exactly $|\GL(N,\Rq)|$ many matrices $\T\in\Rq^{N\times m}$ such that $\rowspace(\T)=\Nspace$, and they are  obtained by fixing any matrix $\bar{\T}$ and considering $\{\A\bar{\T} \mid \A\in \GL(N,\Rq)\}$. Let us fix $\bar{\Mspace}:=\rowspace(\D\bar{\T})\in\mathrm{Mod}(\phi,\Nspace)$. We count for how many $A\in \GL(N,\Rq)$ we have $\rowspace(\D\A\bar{\T})=\bar{\Mspace}$. By Proposition \ref{prop:freemodule_matrix}, this happens if and only if there exist $\Y \in G_\phi^*, \Z \in H_\phi$ such that $\A\bar{\T}=\Y\bar{\T}+\Z$, which in turn is equivalent to the condition that there exists $\Y \in \G_\phi^*$ such that $(\A-\Y)\bar{\T}\in H_\phi$. Let us call $\S:=\A-\Y$ and divide $\S$ in $r\times r$ blocks $\S_{i,j}\in \Rq^{n_i\times n_j}$, for $i,j \in \{0,\ldots, r-1\}$. Divide also $\T$ in $r$ blocks $\T_i\in \Rq^{n_i\times m}$ for $i \in \{0,\ldots,r-1\}$. Hence, we have, for every $i \in \{0,\ldots,r-1\}$
 $$ \sum_{j=0}^{r-1} \S_{i,j}\T_j \in \maxIdeal^{r-i} \Rq^{n_i\times m}.$$
 Since the rows of $\T$ are linearly independent over $\Rq$, this implies that $\S_{i,j}\in \maxIdeal^{r-i}$, that is $\S$ is of the form
 \begin{equation*}
  \S=\A-\Y=\begin{bmatrix} 0 \\
   g_\maxIdeal^{r-1}\Z_1 \\ g_\maxIdeal^{r-2}\Z_2 \\ \vdots \\ g_\maxIdeal \Z_{r-1}\end{bmatrix}.
 \end{equation*}
 Therefore, we have $\rowspace(\D\A\bar{\T})=\bar{\Mspace}$ if and only if $\A=\Y+\S$. It is easy to see that this holds if and only if $\A\in G_{\phi}^*$. Hence, the $\Rq$-submodule $\bar{\Mspace}$ is counted $|G_{\phi}^*|$ many times. Since the choice of $\bar{\Mspace}$ was arbitrary, we conclude
 \begin{equation*}
   |\mathrm{Mod}(\phi,\Nspace)|=\frac{|\GL(N,\Rq)|}{|G_{\phi}^*|}. 
   \end{equation*}
\qed
\end{proof}
\end{appendices}

\end{document}